\documentclass[12pt,fleqn]{article}
\usepackage{mathtools}
\usepackage[margin=0.7in]{geometry}
\usepackage{amsmath}
\usepackage{amsfonts}
\usepackage{amsthm}
\usepackage{comment}
\usepackage{multirow}
\usepackage{graphicx}
\usepackage{dsfont}
\usepackage{tikz}
\usepackage{lscape}
\usetikzlibrary{decorations}
\usetikzlibrary{decorations.markings,arrows}

\linespread{1.3}
\newtheorem{Thm}{Theorem}[section]

\newtheorem{Prop}[Thm]{Proposition}

\theoremstyle{definition}

    \def\Re{{\rm Re \,}}
    \def\Im{{\rm Im \,}}

    \def\P2n{{\rm P}_{{\rm II}}^{(n)}}

    \newtheorem{theorem}{Theorem}[section]
    \newtheorem{lemma}[theorem]{Lemma}
    \newtheorem{corollary}[theorem]{Corollary}
    \newtheorem{proposition}[theorem]{Proposition}
    
    \newtheorem{Definition}[theorem]{Definition}
    
    \newtheorem{Remark}[theorem]{Remark}
    
    \newtheorem{Example}[theorem]{Example}
    
    \newtheorem{Assumptions}[theorem]{Assumptions}

\begin{document}
\title{Uniform asymptotics of Toeplitz determinants with Fisher--Hartwig singularities} \author{B. Fahs \footnote{Department of Mathematics, Imperial College London, London, UK. Email: bf308@ic.ac.uk. } }
\maketitle
\begin{abstract}
We obtain an asymptotic formula for  $n\times n$ Toeplitz determinants as $n\to \infty$, for real valued symbols with  any fixed number of  Fisher--Hartwig singularities, which is uniform with respect to the location of the singularities. 

As an application, we prove a conjecture by Fyodorov and Keating \cite{FK} regarding moments of averages of the characteristic polynomial  of the Circular Unitary Ensemble.

In addition,  we  obtain an asymptotic formula regarding the momentum of impenetrable bosons in one dimension with periodic boundary conditions.

\end{abstract}
\section{Introduction}

\label{SectionToep}
In this paper, we consider the asymptotics as $n\to \infty$ of Toeplitz determinants 
\begin{equation}\nonumber
D_n(f)=\det\left(f_{j-k}\right)_{j,k=0}^{n-1},\qquad  f_j=\int_0^{2\pi}f\left(e^{i\theta}\right)e^{-ij\theta }\frac{d\theta}{2\pi},
\end{equation}
where the symbol $f$ is of the form 
\begin{equation}\label{symbol}
f(z)=e^{V(z)}\omega(z), \qquad \omega(z)=\prod_{j=1}^m  \omega_{\alpha_j,\beta_j}(z/z_j), \qquad
\omega_{\alpha,\beta}(z)=\left(\frac{z}{e^{\pi i}}\right)^{\beta}\left|z-1\right|^{2\alpha},
\end{equation}
satisfying the following conditions:
\begin{itemize}
 \item[(a)] $V(z)$ is real--valued for $|z|=1$ and is analytic on an open set containing $|z|=1$,
\item[(b)] $z_j=e^{it_j}$, where $0\leq t_1<t_2<\dots<t_m<2\pi$,
\item[(c)] $\alpha_j\geq0 $ and $\Re \beta_j = 0$ for $j=1,2,\dots,m$.\end{itemize}
Under these conditions $f$ is a real valued symbol, and
we obtain large $n$ asymptotics of $D_n(f)$ (up to a bounded multiplicative term) which are uniform in the parameters $z_j$.

When $(\alpha_j,\beta_j)\neq (0,0)$ for all $j$,  one says the Toeplitz determinant possesses a Fisher--Hartwig (FH) singularity at each point $z_j=e^{it_j}$, and that the singularity at $z_j$ is of root--type if $\beta_j=0$ and of jump--type if $\alpha_j=0$.  

The large $n$ asymptotics of Toeplitz determinants were first studied by Szeg\H{o} in 1915 \cite{Szego15}. They have been intensively studied over the last 70 years, and owe their relevance  to applications to physical models. The most prominent such application is the question of spontaneous magnetization of the Ising model on the lattice $\mathbb Z^2$ (see e.g. \cite{McCoyWu} and  \cite{DIK2}), but we also mention questions surrounding the momentum of impenetrable bosons in 1 dimension, which we return to in Section \ref{SecBos}. 

In addition to physical models, a considerable effort has been invested in understanding statistical similarities between the asymptotics of the Riemann zeta function along the critical line $\Re z=1/2$ and the statistics of the characteristic polynomial of the Circular Unitary Ensemble (CUE) over arcs of the unit circle. Toeplitz determinants appear in this context, and we return to this topic in Section \ref{SectionCUE}.

We now turn to known results for the asymptotics of $D_n(f)$. The simplest case is the special one where $\omega(z)\equiv 1$ (i.e. $\alpha_j,\beta_j=0$ for all $j$), in which case 
\begin{equation}\nonumber
D_n(e^V)=e^{nV_0} e^{\sum_{k=1}^\infty kV_kV_{-k}}(1+o(1)), \end{equation}
as $n\to \infty$, where $V_k=\int_0^{2\pi}V\left(e^{i\theta}\right)e^{-ik\theta }\frac{d\theta}{2\pi}
$.
This is known as the  strong Szeg\H{o} limit theorem (see \cite{Golinskii,Ibragimov,Johansson,Szego}), and holds for $V$ satisfying condition (a), but also more generally for any $V$ such that $\sum_{k=-\infty}^\infty |V_k|^2$ converges. 

It was conjectured by Lenard \cite{Lenard2} Fisher and Hartwig \cite{FH}, and proven in subsequent steps by Widom \cite{Widom} (relying also on work by Lenard \cite{Lenard2}) and Basor \cite{Basor,Basor2},  that if $f$ is a symbol of the form \eqref{symbol} satisfying (a)--(c), then
\begin{equation}\label{formulaWidom}
D_n(f)=E n^{\sum_{j=1}^m\alpha_j^2-\beta_j^2}e^{nV_0}(1+o(1)),
\end{equation}
as $n\to \infty$, where $E$ is independent of $n$ and given by
\begin{multline}\label{FormulaE}
E
= e^{\sum_{k=1}^\infty kV_kV_{-k}}
\prod_{j=1}^m\left( \frac{G(1+\alpha_j+\beta_j)G(1+\alpha_j-\beta_j)}{G(1+2\alpha_j)}
e^{2\Re\left[(\beta_j-\alpha_j)V_+(z_j)\right]}
\right)
\\ \prod_{1\leq j<k\leq m}\left(|e^{it_j}-e^{it_k}|^{2(\beta_j\beta_k-\alpha_j\alpha_k)}e^{i(t_k-t_j-\pi)(\alpha_j\beta_k-\alpha_k\beta_j)}
\right),
\end{multline}
  $V_+(z)=\sum_{j=1}^\infty V_{ j}z^{ j}$, and $G(z)$ is the Barnes' $G$-function (see e.g. \cite{NIST}). 
  We  mention here that although our focus is on real--valued symbols, the analogue of \eqref{formulaWidom}, \eqref{FormulaE} in the case of complex symbols $f$ is interesting and exhibits behaviour with additional subtleties, see \cite{BottSil,DIK,DIK3}, or \cite{DIK2} for a review.

By the proof in  \cite{DIK3}, it is clear that the asymptotics \eqref{formulaWidom} hold uniformly for $e^{it_j}$ and $e^{it_k}$ bounded away from each other.
It is also clear that the asymptotics \eqref{formulaWidom} are discontinuous if any two points $e^{it_j}, e^{it_k}$ merge and that the asymptotic formula cannot be correct in this situation. In \cite{JMMS} and \cite{VadTr}, for  $\alpha=1/2$ and $\beta=0$,  a part of the transition was considered, corresponding to the box $|e^{it_j}-1|<C/n$ for all $j=1,\dots,m$ and some fixed, large constant $C$. More recently, in \cite{CK}, the authors considered the situation where $m=2$ and obtained the full asymptotics of $D_n(f)$, uniformly for $0\leq t_1<t_2<2\pi$. It is easily seen that the results of \cite{CK} may be presented in the following manner:
\begin{equation}\label{formulaCK}
\log D_n(f)=nV_0+\sum_{j=1}^2\left(\alpha_j^2-\beta_j^2\right)\log n+2(\alpha_1\alpha_2-\beta_1\beta_2)\log \left(\frac{1}{\sin\left|\frac{t_1-t_2}{2}\right|+n^{-1}}\right)+\widehat F_n(t_1,t_2)+o(1),
\end{equation}
uniformly as $n\to \infty$, 
where $\widehat F_n$ is an explicit function in $t_1,t_2$ which is uniformly bounded as $n\to \infty$. We mention that $\widehat F_n $ has an interesting and intricate representation involving a solution to the Painlev\'e V equation when $|t_1-t_2|=\mathcal O(1/n)$ -- for the details we refer the reader to \cite{CK} (and additionally to \cite{CF} for certain simplifications that occur in the specific case where $\alpha$ is integer--valued and $\beta=0$).

We also refer the reader to work on a different but related problem, namely the transition between smooth symbols and those with one singularity, see \cite{CIK} and \cite{WMTB}.

In this paper, we obtain asymptotics for $D_n(f)$ as $n\to \infty$, uniformly in the parameters $t_1,\dots,t_m$. Our main result is the following.

\begin{theorem}\label{Theorem1}
Assume that $f$ is of the form \eqref{symbol}, satisfying (a) -- (c). Then as $n\to \infty$,
\begin{equation}\nonumber
\log D_n(f)=nV_0+\sum_{j=1}^m\left(\alpha_j^2-\beta_j^2\right)\log n+\sum_{1\leq j< k\leq m}2(\alpha_j\alpha_k-\beta_j\beta_k)\log \left(\frac{1}{\sin\left|\frac{t_j-t_k}{2}\right|+n^{-1}}\right)+\mathcal O(1),
\end{equation}
where the error term is uniform for $0\leq t_1<t_2<\dots<t_m<2\pi$.
\end{theorem}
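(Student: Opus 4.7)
The plan is to extract the asymptotics of $D_n(f)$ via Riemann--Hilbert analysis of the orthogonal polynomials on the unit circle with weight $f$, based on the Fokas--Its--Kitaev characterisation and the Deift--Zhou steepest descent scheme followed in \cite{DIK3}. The new ingredient that the theorem demands is a set of local parametrices that behave uniformly as the singularities merge. A natural organising principle is to group $\{z_1,\dots,z_m\}$ into \emph{clusters}: maximal subsets in which any two consecutive points lie at angular distance $\le C/n$ for a fixed large constant $C$. Distinct clusters are then separated by at least $C/n$, so their local analyses decouple up to small-norm corrections coming from the outer region, which is built from the Szeg\H{o} function of $V$ exactly as in \cite{DIK3}.

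For a singleton cluster, the confluent hypergeometric parametrix of \cite{DIK3} applies and contributes the standard DIK factors, giving the $(\alpha_j^2-\beta_j^2)\log n$ terms. For a two-point cluster, the Painlev\'e V parametrix of \cite{CK} applies and, crucially, delivers the interpolating term $2(\alpha_j\alpha_k-\beta_j\beta_k)\log(1/(\sin|t_j-t_k|/2+n^{-1}))$ uniformly across the whole regime from $|t_j-t_k|=\mathcal O(1)$ down to $|t_j-t_k|\ll 1/n$. Combining these with the outer parametrix recovers \eqref{formulaCK} when $m=2$, and more generally handles any configuration in which every cluster has size at most two.

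The main obstacle is the construction of a local parametrix for clusters of size $k\ge 3$, which the theorem must cover since three or more singularities can collide simultaneously. One option is a bespoke $k$-point model RHP whose solution would be a higher-rank analogue of the Painlev\'e V parametrix; such an object would have to be shown to exist and remain bounded uniformly in the internal geometry of the cluster. Because only $\mathcal O(1)$ precision is required, however, a more flexible route is to employ a differential identity for $\log D_n(f)$ (for instance with respect to $t_j$, or with respect to a one-parameter deformation that slides the cluster apart into a reference configuration already covered by the DIK/CK analysis), expressed in terms of traces of the RHP solution against its jumps. Integrating this identity over the interval of length $\mathcal O(1/n)$ on which merging occurs, and bounding the integrand by combining CK-type local analysis around each close pair with the outer parametrix, should yield an $\mathcal O(1)$ error; the pairwise structure of the formula then emerges naturally from the pairwise nature of the CK contribution.

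Once the parametrices are in place, matching to the outer solution produces a residual RHP that is small-norm with jumps of size $o(1)$ uniformly in $t_1,\dots,t_m$; the standard small-norm argument then gives $D_n(f)$ up to a $1+\mathcal O(1/n)$ multiplicative factor, and the logarithmic contributions collected from the individual parametrices combine to produce the explicit formula of the statement. The hardest step, as indicated above, is the uniform control of clusters of size $\ge 3$, a case not treated in \cite{DIK3} or \cite{CK}, and it is there that the essential new input of the proof must lie.
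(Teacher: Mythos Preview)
Your outline has the right scaffolding (Fokas--Its--Kitaev RHP, clustering, local parametrices) but the route you sketch diverges from the paper's in two essential respects, and your handling of clusters of size $\ge 3$ remains a genuine gap.

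\textbf{What the paper actually does.} Rather than computing $D_n(f)$ directly by matching parametrices and collecting logarithmic contributions from each, the paper telescopes via the orthogonal polynomial leading coefficients: $\log D_n(f)=-2\sum_{N=n_0}^{n-1}\log\chi_N+\mathcal O(1)$. The RH analysis is then done at \emph{each fixed degree $N$}, and one only needs the asymptotics of $\chi_N$ to accuracy $o(1/N)$ after summing. The local parametrix around a cluster of arbitrary size $\mu$ is built from a model problem $\Phi$ with $\mu$ merging singularities; but one does not solve $\Phi$ explicitly (no Painlev\'e hierarchy is invoked). All that is used is (i) existence and uniqueness of $\Phi$ via a standard vanishing-lemma argument, and (ii) a continuity lemma showing that as the cluster collapses, the subleading coefficient at infinity satisfies $\Phi_{1,11}\to \mathcal A^2-\mathcal B^2$ with $\mathcal A=\sum\alpha_j$, $\mathcal B=\sum\beta_j$, with error $\mathcal O(\text{cluster width})$. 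From this one gets
\[
\log\chi_N=-\tfrac{V_0}{2}-\tfrac{1}{2N}\sum_j(\alpha_j^2-\beta_j^2)-\tfrac{1}{N}\sum_{j<k}(\alpha_j\alpha_k-\beta_j\beta_k)\,\mathds 1\{|t_j-t_k|<U_0/N\}+\text{(controlled error)},
\]
and the pairwise logarithms in the theorem arise purely from the \emph{algebraic} identity $\mathcal A^2-\mathcal B^2=\sum_j(\alpha_j^2-\beta_j^2)+2\sum_{j<k}(\alpha_j\alpha_k-\beta_j\beta_k)$ together with $\sum_{N=1}^n N^{-1}\mathds 1\{|t_j-t_k|<U_0/N\}=\log(1/(|t_j-t_k|+n^{-1}))+\mathcal O(1)$. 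There is no Painlev\'e V input at all; the CK parametrix is not used even for two-point clusters.

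\textbf{Where your proposal falls short.} Your plan to recover the pairwise term from ``the pairwise nature of the CK contribution'' is not how it comes about, and your fallback of a differential identity integrated over a deformation path of length $\mathcal O(1/n)$ does not, as stated, give uniform $\mathcal O(1)$ control: you would need uniform bounds on the RHP solution (hence on the integrand) precisely in the regime where three or more singularities are within $\mathcal O(1/n)$ of one another, which is exactly the unresolved case. The paper sidesteps this by (a) proving existence of the $\mu$-point model problem abstractly, (b) extracting only the scalar $\Phi_{1,11}$ in the collapsed limit, and (c) summing over the degree $N$ rather than integrating a differential identity. A further device you are missing is a pigeonhole over a finite ladder of scale conditions $(U_0,U_1,N),(U_1,U_2,N),\dots,(U_{m-1},U_m,N)$: for each $N$ the configuration satisfies at least one of these, so the ``bad'' set of degrees where the refined estimate fails is a union of at most $m-1$ intervals each contributing $\mathcal O(1)$ to the sum. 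Without this, you cannot make the error uniform over all configurations.
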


\section{The characteristic polynomial of the CUE}
 \label{SectionCUE}
Let $Z_1,Z_2,\dots,Z_n$ be random variables, distributed as the eigenvalues of the $n\times n$ Circular Unitary Ensemble of random matrices, with the following joint probability density function on the unit circle in the complex plane:
\begin{equation}\label{ProbDens}
\frac{1}{n!}\prod_{1\leq i <j\leq n}|z_i-z_j|^2\prod_{j=1}^n \frac{dz_j}{2\pi i z_j}.
\end{equation}
Let the characteristic polynomial be denoted by
\begin{equation}\nonumber
P_n(e^{i\theta})=\prod_{j=1}^n \left(Z_j-e^{i\theta}\right).
\end{equation}
It has long been believed that the statistical properties of the Riemann zeta function on the critical line $s=1/2+it,$ $t\in \mathbb R$, and the statistics of large random matrices are related -- it was Dyson who first spotted this possible connection. More recently, possible connections between the behaviour of the characteristic polynomial $P_n(e^{i\theta})$ over the unit circle and the behaviour of the Riemann zeta function along the critical line have been studied intensively (see e.g. \cite{CFKRS1,CFKRS,CFZ,FK,FHK,GHK,HKC,KS,Mont} and references therein).
 In this context the authors of \cite{FHK,FK} were interested in both extreme values and average values  of $P_n(e^{it})$ over the unit circle, namely the random variables
\begin{equation}\nonumber
Y_n=\max_{t\in[0,2\pi)}\left|P_n\left(e^{it}\right)\right|,\qquad
X_n(\alpha)=\int_0^{2\pi} \left|P_n(e^{it})\right|^{2\alpha}\frac{dt}{2\pi}.
\end{equation}
In addition, their work sparked interest in connections between the characteristic polynomial of the CUE and Gaussian Multiplicative Chaos, see \cite{Webb1,Webb2} for results on such connections.

In \cite{FHK} it was conjectured that $\log Y_n-\log n+\frac{3}{4}\log \log n$ converges in distribution to a random variable, and subsequently the asymptotics  of $Y_n$ have been studied in \cite{ABB,PZ,CMN}. In these works, the terms $\log n$ and $\frac{3}{4}\log \log n$ were confirmed. The full conjecture, however, remains open.

 In \cite{FK}, Fyodorov and Keating conjectured that for $m=1,2,\dots$,
\begin{equation}\label{conjecture}
\log \mathbb E\left[X_n(\alpha)^m\right]=
\begin{cases}
m\alpha^2 \log n + C_m(\alpha)+o(1)&{\rm for }\, \, 0<m\alpha^2<1,\\
\left[(m\alpha)^2+1-m\right]\log n+\mathcal O(1)&{\rm for }\, \, m\alpha^2>1,
\end{cases}
\end{equation} 
as $n\to \infty$, where $\mathbb E$ denotes the expectation with respect to \eqref{ProbDens}, and
\begin{equation} \nonumber
C_m(\alpha)=\log \left[\left(\frac{G(1+\alpha)^2}{G(1+2\alpha)\Gamma(1-\alpha^2)}\right)^m\Gamma(1-m\alpha^2)\right]. \end{equation}
 As a corollary to Theorem \ref{Theorem1}, we will prove \eqref{conjecture}.

To make the connection between Toeplitz determinants and the moments of $X_n(\alpha)$, we recall the well--known representation of Toeplitz determinants in terms of multiple integrals
\begin{equation}\label{intrep}
D_n(f)=\frac{1}{n!} \int_{[0,2\pi)^n}\prod_{1\leq j<k\leq n}\left|e^{ij\theta}-e^{ik\theta}\right|^2\prod_{j=1}^nf\left(e^{i\theta_j}\right)\frac{d\theta_j}{2\pi},
\end{equation}  
 from which it follows that 
\begin{equation}\label{XnToep}
\mathbb E\left[X_n(\alpha)^m\right]=\int_{[0,2\pi)^m}D_n\left(f_m^{(\alpha)}\right)\frac{dt_1}{2\pi}\dots \frac{dt_m}{2\pi},
\end{equation}
where we denote
\begin{equation}\label{falpha}
f_{m}^{(\alpha)}(z)=\prod_{j=1}^{m}\left|z-e^{it_j}\right|^{2\alpha}.
\end{equation}

Using \eqref{XnToep} and \eqref{formulaCK}, Claeys and Krasovsky were able to prove \eqref{conjecture} for $m=2$. They were furthermore able to prove that $\mathbb E\left[X_n(\alpha=1/\sqrt{2})^2\right]=\hat c n\log n(1+o(1))$ as $n\to \infty$ for an explicit constant $\hat c$. Additionally, for $2\alpha^2>1$, they were able to determine explicitly the $\mathcal O(1)$ term in \eqref{conjecture} in terms of the Painlev\'{e} V equation.

 For integer $m\geq 2$, the conjecture was proven recently by Bailey and Keating \cite{BK} for integer $\alpha=1,2,3,\dots$, by representing $\mathbb E\left[X_n(\alpha)^m\right]$ in terms of integrals of Schur polynomials. A second proof was given by Assiotis and Keating in \cite{AK}, where a representation for the constant $\mathcal O(1)$ term was given in terms of a certain volume of continuous Gelfand--Tsetlin patterns with constraints.
 
Relying on Theorem \ref{Theorem1} and \eqref{formulaWidom}, we prove the conjecture \eqref{conjecture} for all parameter sets, and more precisely we have following corollary.

\begin{corollary}\label{Corollary}
As $n\to \infty$, the asymptotics of $\log \mathbb E\left[X_n(\alpha)^m\right]$ are given by \eqref{conjecture}, both for $0<m\alpha^2<1$ and for $m\alpha^2>1$. Additionally, if $m\alpha^2=1$, then 
\begin{equation}\nonumber
\log \mathbb E\left[X_n(\alpha)^m\right]=\log n+\log \log n+\mathcal O(1),
\end{equation}
as $n\to\infty$.
\end{corollary}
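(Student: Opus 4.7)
The plan is to start from identity \eqref{XnToep}, which expresses $\mathbb E[X_n(\alpha)^m]$ as the integral of $D_n(f_m^{(\alpha)})$ over $[0,2\pi)^m$. The symbol $f_m^{(\alpha)}$ is of the form \eqref{symbol} with $V\equiv 0$, $\alpha_j=\alpha$ and $\beta_j=0$, so both Theorem~\ref{Theorem1} and the pointwise formula \eqref{formulaWidom} apply. Theorem~\ref{Theorem1} provides the uniform two--sided bound
\begin{equation*}
\log D_n(f_m^{(\alpha)})=m\alpha^2\log n+2\alpha^2\sum_{1\le j<k\le m}\log\frac{1}{\sin\left|\frac{t_j-t_k}{2}\right|+n^{-1}}+\mathcal O(1),
\end{equation*}
with the error uniform in $(t_1,\ldots,t_m)$. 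The three regimes of the corollary are then handled by combining this bound with \eqref{formulaWidom} as appropriate.

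For $0<m\alpha^2<1$ I would split $[0,2\pi)^m$ into the ``well--separated'' set $T_\epsilon=\{\min_{j<k}|e^{it_j}-e^{it_k}|\ge\epsilon\}$ and its complement. On $T_\epsilon$, formula \eqref{formulaWidom} yields the uniform pointwise asymptotic $D_n(f_m^{(\alpha)})\,n^{-m\alpha^2}\to (G(1+\alpha)^2/G(1+2\alpha))^m\prod_{j<k}|e^{it_j}-e^{it_k}|^{-2\alpha^2}$. Combined with the Dyson--Morris identity
\begin{equation*}
\frac{1}{(2\pi)^m}\int_{[0,2\pi)^m}\prod_{1\le j<k\le m}|e^{it_j}-e^{it_k}|^{-2\alpha^2}\,dt_1\cdots dt_m=\frac{\Gamma(1-m\alpha^2)}{\Gamma(1-\alpha^2)^m},
\end{equation*}
(finite precisely when $m\alpha^2<1$), this yields $\mathbb E[X_n(\alpha)^m]/n^{m\alpha^2}\to e^{C_m(\alpha)}$, provided the contribution from the complement of $T_\epsilon$ is $o(n^{m\alpha^2})$ uniformly as $\epsilon\to 0$. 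The latter follows from the upper bound of Theorem~\ref{Theorem1} together with the integrability on the torus of $\prod_{j<k}\sin^{-2\alpha^2}|(t_j-t_k)/2|$ in this regime.

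For $m\alpha^2\ge 1$ the pointwise formula \eqref{formulaWidom} breaks down near the diagonal, and the main contribution to $\mathbb E[X_n(\alpha)^m]$ comes from configurations where all $t_j$'s coalesce on scale $1/n$. Here one applies Theorem~\ref{Theorem1} directly, reducing the problem to sharp two--sided estimates on
\begin{equation*}
I_n=\int_{[0,2\pi)^m}\prod_{1\le j<k\le m}\left(\sin\left|\frac{t_j-t_k}{2}\right|+n^{-1}\right)^{-2\alpha^2}dt_1\cdots dt_m.
\end{equation*}
Using translation invariance to fix $t_1=0$ and the substitution $t_j=2\pi\tau_j/n$, one checks $I_n$ is comparable to $n^{m(m-1)\alpha^2-(m-1)}J_n$, where $J_n$ is the integral of $\prod_{j<k}(|\tau_j-\tau_k|+1)^{-2\alpha^2}$ over $(-n,n)^{m-1}$ with $\tau_1=0$. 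A dyadic decomposition in the common scale $R$ of the $\tau_j$'s shows each dyadic shell contributes $\sim R^{m-1-m(m-1)\alpha^2}$, hence $J_n=\mathcal O(1)$ when $m\alpha^2>1$ and $J_n\asymp \log n$ when $m\alpha^2=1$. Together with the bound on $D_n$ this gives $\log\mathbb E[X_n(\alpha)^m]=((m\alpha)^2+1-m)\log n+\mathcal O(1)$ in the supercritical case, and $\log n+\log\log n+\mathcal O(1)$ in the critical case.

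The main technical obstacle lies in the subcritical regime: the error in \eqref{formulaWidom} degrades as the singularities approach each other, so the passage from pointwise convergence on $T_\epsilon$ to convergence of the full integral requires Theorem~\ref{Theorem1} as a dominating bound and a careful interchange of the limits $n\to\infty$ and $\epsilon\to 0$. In the critical case the matching lower bound $J_n\gtrsim\log n$ is also somewhat delicate, since this is precisely what produces the $\log\log n$ correction in the statement.
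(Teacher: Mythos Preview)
Your treatment of the subcritical range $0<m\alpha^2<1$ matches the paper's exactly: split off a well--separated region, apply \eqref{formulaWidom} there, control the complement via Theorem~\ref{Theorem1} together with the integrability of the Selberg density, and identify the constant through \eqref{Selberg}.

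For $m\alpha^2\ge 1$ your reduction to the regularised integral $I_n$ is also the paper's, but the one--line dyadic argument hides the real difficulty. On a shell $\{\max_j|\tau_j|\in[R,2R)\}$ with $\tau_1=0$, pairs $\tau_j,\tau_k$ can still coalesce on much smaller scales, so the integrand is \emph{not} uniformly of size $R^{-m(m-1)\alpha^2}$ across the shell. After rescaling $\tau_j=Ru_j$ the shell contribution becomes
\[
R^{(m-1)(1-m\alpha^2)}\int_{\substack{|u_j|\le 2\\ \max|u_j|\ge 1}}\prod_{j<k}\bigl(|u_j-u_k|+R^{-1}\bigr)^{-2\alpha^2}\,du,
\]
and bounding the last integral uniformly in $R$ is exactly the statement you are trying to prove, so the argument is circular as written. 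The paper singles out this obstruction explicitly: when $2\alpha^2>1$ a crude factor--by--factor bound suffices, but for $1/m<\alpha^2\le 1/2$ the naive estimate gives the wrong power of $\epsilon$ (a footnote computes the failure for $m=3$, $\alpha^2=2/5$). The paper's remedy is to pass to gap variables $s_j=t_{j+1}-t_j$, decompose over all orderings $\sigma\in S_{m-1}$, and for each ordering bound every factor $(\sum_{i=j}^k s_i+\epsilon)^{-2\alpha^2}$ by $(s_{\sigma(\ell)}+\epsilon)^{-2\alpha^2}$ with $\ell$ chosen so that the resulting exponents are $-2\alpha^2(m-\ell)$; an iterated one--variable integration then yields the sharp upper bound $\epsilon^{(m-1)(1-m\alpha^2)}$. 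The critical case $m\alpha^2=1$ is treated by the same iteration, the final integral producing $\log\epsilon^{-1}$, while the matching lower bound is obtained from a separate explicit annulus decomposition rather than from the same dyadic picture.
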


\subsection{Proof of Corollary \ref{Corollary}}
\subsubsection{$0<m\alpha^2<1$}\label{malp<1}
Denote 
\begin{equation}\label{defIalpha}
I_{\epsilon}(\alpha)=\int_{[0,2\pi)^m}\prod_{1\leq j<k\leq m} \left(\sin\left|\frac{t_j-t_k}{2}\right|+\epsilon\right)^{-2\alpha^2} dt_1\dots dt_m,
\end{equation}
 for $\epsilon>0$. If $0<m\alpha^2<1$, the integral is well defined for $\epsilon=0$ and $I_0(\alpha)$ is a Selberg integral \cite{Selberg,FW}:
\begin{equation}\label{Selberg}
I_{0}(\alpha)=(2\pi)^m\frac{\Gamma(1-m\alpha^2)}{\Gamma(1-\alpha^2)^m}.
\end{equation}

Thus Corollary \ref{Corollary} follows (for $0<m\alpha^2<1$) by combining \eqref{formulaWidom}, \eqref{XnToep} and Theorem \ref{Theorem1} as follows. Fix $\delta>0$. We will show that there is an integer $N$ such that for $n>N$,
\begin{equation}\label{interCor}
\left|\mathbb E\left[X_n(\alpha)^m\right]-n^{m\alpha^2}e^{C_m(\alpha)}\right|<\delta n^{m\alpha^2},\end{equation}
thus proving the corollary. Given a measurable subset $R\subset [0,2\pi)^m$, we denote
\begin{equation}
I_{\epsilon}(\alpha, R)=\int_{R} \,\,\prod_{1\leq j<k\leq m} \left(\sin\left|\frac{t_j-t_k}{2}\right|+\epsilon\right)^{-2\alpha^2} dt_1\dots dt_m.
\end{equation}
We note that $I_{\epsilon}(\alpha,R)<I_0(\alpha,R)$ for any $\epsilon>0$ for any $R\subset[0,2\pi)^m$.
For $\eta>0$,
divide the integration regime $[0,2\pi)^m$ into two regions $R_1(\eta)$ and $R_2(\eta)$, where $R_1(\eta)$ is the region where $\sin\frac{|t_i-t_j|}{2}>\eta$ for all $i\neq j$, and $R_2(\eta)$ is the complement of $R_1(\eta)$. 
 It follows by Theorem \ref{Theorem1} that
 \begin{equation}\nonumber
\int_{R_2(\eta)}D_n\left(f_m^{(\alpha)}\right)\frac{dt_1}{2\pi}\dots \frac{dt_m}{2\pi}=\mathcal O\left(n^{m\alpha^2}I_0(\alpha,R_2(\eta))\right),
\end{equation}
as $n\to \infty$, uniformly for $0<\eta<\eta_0$. In particular, since $I_0(\alpha,R_2(\eta))\to 0$ as $\eta\to 0$, it follows that there exists $\eta_0>0$ and $N_0\in \mathbb N$ such that 
\begin{equation}\label{R2statement}
\int_{R_2(\eta)}D_n\left(f_m^{(\alpha)}\right)\frac{dt_1}{2\pi}\dots \frac{dt_m}{2\pi}< \delta n^{m\alpha^2}/2,\end{equation}
for $n>N_0$ and $\eta<\eta_0$, which gives the desired bound for the integral over $R_2(\eta)$. We now evaluate the integral over $R_1(\eta)$. By \eqref{formulaWidom}, it follows that
\begin{equation}\nonumber
\int_{R_1(\eta)}
D_n\left(f_m^{(\alpha)}\right)\frac{dt_1}{2\pi}\dots \frac{dt_m}{2\pi}=n^{m\alpha^2}\frac{G(1+\alpha)^{2m}}{G(1+2\alpha)^m}\int_{R_1(\eta)}\prod_{1\leq j<k\leq m}|e^{it_j}-e^{it_k}|^{-2\alpha^2}(1+o(1))\frac{dt_1}{2\pi}\dots \frac{dt_m}{2\pi} ,\end{equation}
where the $o(1)$ tends to zero uniformly over $R_1(\eta)$ for any fixed $\eta$ as $n\to \infty$. Thus we may move the error term outside the integral, and since $R_2(\eta)$ is the complement of $R_1(\eta)$, we have
\begin{multline}
\int_{R_1(\eta)}\prod_{1\leq j<k\leq m}|e^{it_j}-e^{it_k}|^{-2\alpha^2}(1+o(1))\frac{dt_1}{2\pi}\dots \frac{dt_m}{2\pi}\\=\int_{[0,2\pi)^m}\prod_{1\leq j<k\leq m}|e^{it_j}-e^{it_k}|^{-2\alpha^2}\frac{dt_1}{2\pi}\dots \frac{dt_m}{2\pi}(1+o(1))\\-\int_{R_2(\eta)}\prod_{1\leq j<k\leq m}|e^{it_j}-e^{it_k}|^{-2\alpha^2}\frac{dt_1}{2\pi}\dots \frac{dt_m}{2\pi}.
\end{multline}
By \eqref{Selberg} we obtain
\begin{equation}\label{intR1}
\int_{R_1(\eta)}D_n\left(f_m^{(\alpha)}\right)\frac{dt_1}{2\pi}\dots \frac{dt_m}{2\pi}=n^{m\alpha^2}e^{C_m(\alpha)}(1+o(1))-n^{m\alpha^2} I_0(\alpha,R_2(\eta))\left(\frac{G(1+\alpha)^2}{2\pi G(1+2\alpha)}\right)^m,
\end{equation}
where the $o(1)$ tends to zero for any fixed $\eta$ as $n\to \infty$. Again we use the fact that $I_0(\alpha,R_2(\eta))\to 0$ as $\eta\to 0$, from which it follows that we may pick $\eta<\eta_0$ such that the second term on the right hand side of \eqref{intR1} is less than $\delta n^{m\alpha^2} /4$.

Thus we may fix $\eta<\eta_0$ such that
\begin{equation}\nonumber
\left| \int_{R_1(\eta)}
 D_n\left(f_m^{(\alpha)}\right)\frac{dt_1}{2\pi}\dots \frac{dt_m}{2\pi}-n^{m\alpha^2}e^{C_m(\alpha)}\right| <\delta n^{m\alpha^2} /4 +o(n^{m\alpha^2}),
\end{equation}
as $n\to \infty$, and combined with \eqref{R2statement}, this proves \eqref{interCor}.

\subsubsection{$m\alpha^2>1$}
We now study the asymptotics of $I_{\epsilon}(\alpha)$ for $m\alpha^2>1$. A lower bound for $I_{\epsilon}(\alpha)$ is easily obtained-- by integrating over the box $|t_i-t_j|<\epsilon$ for all $i\neq j$, it is easily seen that there is a constant $c$ such that
\begin{equation}\label{LBIeps}
c\epsilon^{m-1-m(m-1)\alpha^2}<I_{\epsilon}(\alpha), \end{equation}
for $0<\epsilon<\epsilon_0$.

To prove Corollary \ref{Corollary}, we need to obtain a corresponding upper bound for $I_\epsilon(\alpha)$. We choose to work with the following integral instead:
\begin{equation}\nonumber
\widehat I_\epsilon(\alpha)=  \int_{0<t_1<\dots<t_m<1}  \prod_{1\leq j<k\leq m}  \left(|t_j-t_k|+\epsilon\right)^{-2\alpha^2} dt_1\dots dt_m.
\end{equation}

\begin{lemma} 
There is a constant $0<c$ such that for all sufficiently small $\epsilon>0$,
\begin{equation}\nonumber
 I_{\epsilon}(\alpha)\leq c\widehat I_\epsilon(\alpha).
\end{equation}
 \end{lemma}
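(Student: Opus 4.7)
The plan is to bound $I_\epsilon(\alpha)$ by an $\epsilon$-independent constant times $\widehat I_\epsilon(\alpha)$ through a chain of elementary reductions. The case $m=1$ is trivial, so I assume $m\geq 2$. The first step is to replace the chordal factor by the arc-length distance $d(t_j,t_k):=\min(|t_j-t_k|,2\pi-|t_j-t_k|)$. A short case check gives the identity $\sin|(t_j-t_k)/2|=\sin(d(t_j,t_k)/2)$, and combined with $\sin x\geq 2x/\pi$ on $[0,\pi/2]$ this yields $\sin|(t_j-t_k)/2|+\epsilon\geq(d(t_j,t_k)+\epsilon)/\pi$, so
$$\prod_{j<k}(\sin|(t_j-t_k)/2|+\epsilon)^{-2\alpha^2}\leq\pi^{m(m-1)\alpha^2}\prod_{j<k}(d(t_j,t_k)+\epsilon)^{-2\alpha^2}.$$

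Next, by the symmetry of the integrand, $I_\epsilon(\alpha)$ equals $m!$ times its integral over the ordered region $R=\{0\leq t_1<\dots<t_m<2\pi\}$. I parametrize $R$ by the consecutive gaps $g_j:=t_{j+1}-t_j$ for $j<m$ and the wrap-around gap $g_m:=2\pi-(t_m-t_1)$. The integrand is a function of the cyclic tuple $(g_1,\dots,g_m)$ alone, and is invariant under cyclic shift of the gaps (which can be realized by a rotation of the configuration followed by a relabeling). Partitioning $R$ according to which $g_i$ is the largest, the $m$ sub-regions contribute equally, so up to a factor $m$ one may restrict to the sub-region $R^*\subset R$ on which $g_m\geq g_j$ for all $j$.

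The geometric core of the argument is the inequality $d(t_j,t_k)\geq(t_k-t_j)/(m-1)$ on $R^*$. On $R^*$ one has $g_m\geq 2\pi/m$, hence $t_m-t_1=2\pi-g_m\leq 2\pi(m-1)/m$. For $j<k$, either $d(t_j,t_k)=t_k-t_j$ and the bound is trivial, or $d(t_j,t_k)=2\pi-(t_k-t_j)=(g_1+\dots+g_{j-1})+(g_k+\dots+g_m)\geq g_m\geq 2\pi/m$, and since $t_k-t_j\leq 2\pi(m-1)/m$ the bound holds again. Thus on $R^*$,
$$\prod_{j<k}(d(t_j,t_k)+\epsilon)^{-2\alpha^2}\leq(m-1)^{m(m-1)\alpha^2}\prod_{j<k}(t_k-t_j+\epsilon)^{-2\alpha^2}.$$
Enlarging the integration domain from $R^*$ back to $R$ and rescaling $s_j=t_j/(2\pi)$ (noting $2\pi(s_k-s_j)+\epsilon\geq s_k-s_j+\epsilon$) produces a constant multiple of $\widehat I_\epsilon(\alpha)$, completing the proof. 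The only step that really needs care is the geometric lower bound on $R^*$, which crucially uses that the wrap-around gap is the largest, so that the ``long way around'' any pair of points always traverses it.
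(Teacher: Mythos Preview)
Your overall strategy is sound and ultimately gives the lemma, but one step is not correct as written. You claim that when $R=\{0\le t_1<\dots<t_m<2\pi\}$ is partitioned according to which gap $g_i$ is the largest, the $m$ sub-regions contribute equally. They do not. In the variables $(t_1,g_1,\dots,g_{m-1})$ the Jacobian is $1$, but the constraint $t_m<2\pi$ forces $t_1\in[0,g_m)$, so
\[
\int_R G\,dt \;=\; \int_{\Delta} G(g)\,g_m\,dg,
\]
where $\Delta=\{g_i\ge 0,\ \sum g_i=2\pi\}$ and $G$ is your cyclically invariant integrand. The weight $g_m$ is \emph{not} cyclically invariant, so the pieces $\int_{\{g_i\text{ largest}\}}G\,g_m\,dg$ are in general unequal (take $G\equiv 1$, $m=2$: they equal $3\pi^2/2$ and $\pi^2/2$). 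The inequality you actually need, $\int_R G\le m\int_{R^*}G$, is nevertheless true: on $\{g_i\text{ largest}\}$ one has $g_m\le g_i$, hence
\[
\int_{\{g_i\text{ largest}\}} G\,g_m\,dg \;\le\; \int_{\{g_i\text{ largest}\}} G\,g_i\,dg
\;=\; \int_{\{g_m\text{ largest}\}} G\,g_m\,dg \;=\; \int_{R^*} G\,dt,
\]
the middle equality by the cyclic shift on $\Delta$. Summing over $i$ gives the bound with constant $m$. With this correction your chain of reductions goes through; the geometric inequality $d(t_j,t_k)\ge (t_k-t_j)/(m-1)$ on $R^*$ and the final rescaling are fine.

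For comparison, the paper avoids this measure subtlety altogether by a pigeonhole argument: cover $[0,2\pi)$ by $m+1$ arcs $[2\pi(j-1)/(m+1),\,2\pi j/(m+1))$; one of them contains none of the $m$ points, so the configuration lies in its complement, an arc of length $2\pi m/(m+1)$. Summing over the $m+1$ choices and rotating gives $I_\epsilon(\alpha)\le (m+1)\int_{[0,\,2\pi m/(m+1))^m}(\cdots)$, and on that interval $\sin x\ge \frac{x}{\pi}\sin\frac{\pi}{m+1}$ replaces the sine by a linear term directly. Both routes exploit the existence of a large empty arc to linearise the circular distance; the paper's pigeonhole is shorter and sidesteps the gap-simplex bookkeeping, while your approach is perfectly viable once the equal-contribution claim is replaced by the one-line inequality above.
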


\begin{proof}
Denote $U_j=[0,2\pi(j-1)/(m+1))\cup[2\pi j/(m+1),2\pi)$ for $j=1,2,\dots,m+1$. Since there are $m$ points $t_1,\dots,t_m$ and $m+1$ sets $U_j$, it follows that there is always a $j$ such that $\{t_1,\dots,t_m\}\subset U_j$, thus
 \begin{equation}\nonumber
  I_\epsilon(\alpha)\leq \sum_{j=1}^{m+1}\int_{U_j^m}
  \prod_{1\leq j<k\leq m} \left(\sin\left|\frac{t_j-t_k}{2}\right|+\epsilon\right)^{-2\alpha^2} dt_1\dots dt_m.\end{equation}
  It follows that
\begin{equation}\nonumber
 I_\epsilon(\alpha)
\leq (m+1)\int_{[0,2\pi-2\pi/(m+1))^m}\, \prod_{1\leq j<k\leq m} \left(\sin\left|\frac{t_j-t_k}{2}\right|+\epsilon\right)^{-2\alpha^2} dt_1\dots dt_m.\end{equation}
Furthermore, for $0\leq x \leq \pi-\pi/(m+1)$, one has
\begin{equation}\nonumber
\frac{x}{\pi}\sin \frac{\pi}{m+1} \leq \sin x \leq x, \end{equation}
and it follows that
\begin{equation}\label{chintreg2}
I_\epsilon(\alpha)
\leq \frac{(m+1)\pi^{2\alpha}}{\left(\sin\frac{\pi}{m+1}\right)^{2\alpha}}\int_{[0,2\pi-2\pi/(m+1))^m}\, \prod_{1\leq j<k\leq m} \left(\left|\frac{t_j-t_k}{2}\right|+\epsilon\right)^{-2\alpha^2} dt_1\dots dt_m.
\end{equation}
The lemma follows easily from \eqref{chintreg2}.
\end{proof}

We now take the change of variables $s_j=t_{j+1}-t_j$ for $j=1,\dots,m-1$ and find that
\begin{equation}\nonumber
\widehat I_\epsilon(\alpha)= \int_0^1 dt_1 \int \prod_{1\leq j\leq k\leq m-1}  \left(\sum_{i=j}^k s_i+\epsilon\right)^{-2\alpha^2} ds_1\dots ds_{m-1},
\end{equation}
with integration taken over $ s_1,\dots, s_{m-1}\geq 0$ such that $ \sum_{j=1}^m s_j< 1-t_1$, from which it follows that
\begin{equation} \label{UBIeps}
 \widehat I_\epsilon (\alpha)\leq 
 I_\epsilon^{(2)}(\alpha),
\end{equation}
where 
\begin{equation}\label{intI4}
I_\epsilon^{(2)}(\alpha)=\int_{[0,1)^m} \, \prod_{1\leq j\leq k\leq m-1}  \left(\sum_{i=j}^k s_i+\epsilon\right)^{-2\alpha^2} ds_1\dots ds_{m-1} .
\end{equation}
 If $2\alpha^2>1$, then $I_\epsilon^{(2)}(\alpha)$ is straightforward to evaluate -- one simply notes that
\begin{equation}\label{uppersingle}
 \left(\sum_{i=j}^k s_{i}+\epsilon\right)^{-2\alpha^2} \leq  \left(s_{i}+\epsilon\right)^{-2\alpha^2}, \end{equation}
 for any $i=j,\dots,k$, and thus
 \begin{equation}\nonumber 
 I_\epsilon^{(2)}(\alpha)\leq \int_{[0,1)^m}\prod_{j=1}^{m-1}\left(s_j+\epsilon\right)^{-2j\alpha^2}ds_1\dots ds_{m-1}.
 \end{equation}
 Separating out the variables, it follows that
 \begin{equation}\label{2alpha2>1}
 I_{\epsilon}^{(2)}(\alpha)=\mathcal O\left(\epsilon^{(m-1)(1-m\alpha^2)}\right),
 \end{equation}
 as $\epsilon\to 0$, for $2\alpha^2>1$. However, if $\frac{1}{m}<\alpha^2<\frac{1}{2}$, this approach fails to yield \eqref{2alpha2>1}, and in fact yields a worse error term\footnote{As an example where the approach fails to provide optimal error terms, consider $m=3$ and $\alpha^2=2/5$. Then we obtain 
 \begin{equation}\nonumber
 I_\epsilon^{(\alpha)}\leq \int_0^1ds_1\int_0^1ds_2(s_1+\epsilon)^{-4/5}(s_2+\epsilon)^{-8/5}=\mathcal O\left(\epsilon^{-3/5}\right), 
 \end{equation} as $\epsilon\to 0$. However the optimal bound we are looking to obtain is of order $\epsilon^{(m-1)(1-m\alpha^2)}=\epsilon^{-2/5}$.
 }. To achieve the optimal error term \eqref{2alpha2>1} also in the case $\frac{1}{m}<\alpha^2<\frac{1}{2}$, we need to consider ordered integrals.

Since the integral \eqref{intI4} is taken over all possible orderings, it follows that
\begin{equation}\nonumber
I_\epsilon^{(2)}(\alpha)=\sum_{\sigma\in S_{m-1}}\int_{\mathcal W_\sigma} \, \prod_{1\leq j\leq  k\leq m-1}  \left(\sum_{i=j}^k s_{i}+\epsilon\right)^{-2\alpha^2} ds_1\dots ds_{m-1} ,
\end{equation}
where the integral is taken over
\begin{equation}\nonumber
\mathcal W_\sigma=\left \{ 0<s_{\sigma(m-1)}<s_{\sigma(m-2)}<\dots<s_{\sigma(1)}<1 \right \}.\end{equation}
  Let 
 \begin{equation}\nonumber
 V(\ell)=\{(j,k):\, 1\leq j \leq k \leq m-1 \textrm{ and } j=\sigma(\ell) \textrm{ or } k=\sigma(\ell)\}. \end{equation}
 Let $S(\ell)=V(\ell)\setminus \cup_{j=1}^{\ell-1}V(j)$. Then $S(1),\dots, S(m-1)$ are disjoint, and 
 \begin{equation}\nonumber
 \prod_{1\leq j\leq k\leq m-1}  \left(\sum_{i=j}^k s_i+\epsilon\right)^{-2\alpha^2}=\, \prod_{\ell=1}^{m-1}\, \prod_{(j,k)\in S(\ell)}  \left(\sum_{i=j}^k s_i+\epsilon\right)^{-2\alpha^2}.\end{equation}
By \eqref{uppersingle} it follows that
 \begin{equation}\nonumber
 \prod_{1\leq j\leq k\leq m-1}  \left(\sum_{i=j}^k s_i+\epsilon\right)^{-2\alpha^2}\leq\, \prod_{\ell=1}^{m-1}\, \prod_{(j,k)\in S(\ell)}  \left(s_{\sigma(\ell)}+\epsilon\right)^{-2\alpha^2}.\end{equation}
 It is easily seen that $|S(\ell)|=m-\ell$, and it follows that
 \begin{equation}\label{boundssjs}
 \prod_{1\leq j\leq k\leq m-1}  \left(\sum_{i=j}^k s_i+\epsilon\right)^{-2\alpha^2}\leq\, \prod_{\ell=1}^{m-1}\,  \left(s_{\sigma(\ell)}+\epsilon\right)^{-2\alpha^2(m-\ell)}.\end{equation}

By \eqref{boundssjs},
 it follows that
\begin{equation}\label{ints1m1}
\int_{\mathcal W_\sigma} \, \prod_{1\leq j \leq  k\leq m-1}  \left(\sum_{i=j}^k s_{i}+\epsilon\right)^{-2\alpha^2} ds_1\dots ds_{m-1} 
\leq \int_{\mathcal W_\sigma} \, \prod_{\ell=1}^{m-1}  \left( s_{\sigma(\ell)}+\epsilon\right)^{-2\alpha^2(m-\ell)} ds_{\sigma(m-1)}\dots ds_{\sigma(1)}.
\end{equation}
If $m=2$ and $m\alpha^2>1$, then the right hand side is of order $\epsilon^{-2\alpha^2+1}$, and we are done. We assume that $m>2$, and
 integrate in $s_{\sigma(m-1)}$ on the right hand side of \eqref{ints1m1}.  The power of $s_{\sigma(m-1)}$ is $-2\alpha^2$, which could very well be equal to $-1$, so we need to take this into account.
 Clearly
\begin{equation}\nonumber
\int_0^{s_{\sigma(m-2)}} (s_{\sigma(m-1)}+\epsilon)^xds_{\sigma(m-1)}=\mathcal O\left( \log(s_{\sigma(m-2)}/\epsilon+3)(\epsilon^{x+1}+(s_{\sigma(m-2)}+\epsilon)^{x+1})\right)\end{equation}
for any fixed $x$ as $\epsilon\to 0$. Since 
\begin{equation}\nonumber
\log(s_{\sigma(m-2)}/\epsilon+3)\leq \log(s_{\sigma(1)}/\epsilon+3),\end{equation}
it follows that

\begin{multline}\label{intm2}
\int_{\mathcal W_\sigma} \, \prod_{\ell=1}^{m-1}  \left( s_{\sigma(\ell)}+\epsilon\right)^{-2\alpha^2(m-\ell)} ds_{\sigma(m-1)}\dots ds_{\sigma(1)}=
\mathcal O\Bigg(
\int \log(s_{\sigma(1)}/\epsilon+3) \\ \times \prod_{\ell=1}^{m-2}  \left( s_{\sigma(\ell)}+\epsilon\right)^{-2\alpha^2(m-\ell)} \left(
\epsilon^{-2\alpha^2+1}+\left(s_{\sigma(m-2)}+\epsilon\right)^{-2\alpha^2+1}
\right)ds_{\sigma(m-2)}\dots ds_{\sigma(1)}
\Bigg),
\end{multline}
as $\epsilon\to 0$
where integration on the right hand side is taken over $0<s_{\sigma(m-2)}<\dots<s_{\sigma(1)}<1$.
We will next integrate out $s_{\sigma(m-2)}$, then $s_{\sigma(m-3)}$, etc. To do this, we introduce the following notation  for $v=1,2,\dots,m-2$:
\begin{multline}\nonumber
J_\epsilon(v)=\int \left[\log(s_{\sigma(1)}/\epsilon+3)\right]^v \prod_{\ell=1}^{m-v-1}  
\left( s_{\sigma(\ell)}+\epsilon\right)^{-2\alpha^2(m-\ell)} \\ \times
\left[
\sum_{r=0}^v\epsilon^{r-2\alpha^2\sum_{j=1}^rj}\left(s_{\sigma(m-v-1)}+\epsilon\right)^{v-r-2\alpha^2\sum_{j=r+1}^vj}
\right]
ds_{\sigma(m-v-1)}\dots ds_{\sigma(1)},
\end{multline}
with integration taken over $0<s_{\sigma(m-v-1)}<\dots <s_{\sigma(1)}<1$, and
where we interpret $\sum_{j=1}^0j=\sum_{j=v+1}^vj=0$. We observe that the error term on the right hand side of \eqref{intm2} is equal to $J_\epsilon(1)$. It is easily verified that
\begin{equation}\nonumber
J_\epsilon(v)=\mathcal O\left(J_\epsilon(v+1)\right),
\end{equation}
as $\epsilon\to 0$, for $v=1,2,\dots,m-3$.
  Iterating, we obtain
\begin{multline}\nonumber
\int_{\mathcal W_\sigma} \, \prod_{\ell=1}^{m-1}  \left( s_{\sigma(\ell)}+\epsilon\right)^{-2\alpha^2(m-\ell)} ds_{\sigma(m-1)}\dots ds_{\sigma(1)}=\mathcal O(J_\epsilon(m-2))=\\
\mathcal O\Bigg(
\int_0^1 \left[\log(s/\epsilon+3)\right]^{m-2} 
\left( s+\epsilon\right)^{-2\alpha^2(m-1)} 
\left[\sum_{r=0}^{m-2}\epsilon^{r-2\alpha^2\sum_{j=1}^rj}\left( s+\epsilon\right)^{m-2-r-2\alpha^2\sum_{r+1}^{m-2}j}
\right]
 ds
\Bigg),
\end{multline}
as $\epsilon\to 0$, where we interpret $\sum_{j=1}^0j=\sum_{j=m-1}^{m-2}j=0$. 
 Since $m\alpha^2>1$, it follows that the power of $s+\epsilon$ is smaller than $-1$, namely:
\begin{equation}\nonumber
-2\alpha^2(m-1)+m-2-r-2\alpha^2\sum_{r+1}^{m-2}j<-1,
\end{equation}
for any $r=0,1,2,\dots,m-2$. If $x<-1$, then
\begin{equation}\nonumber
\int_0^1(s+\epsilon)^{x}\log(s/\epsilon+3)^{m-2}ds=\mathcal O(\epsilon^{x+1}),
\end{equation}
as $\epsilon\to 0$, and thus it follows that
\begin{equation}\nonumber
\int_{\mathcal W_\sigma} \, \prod_{\ell=1}^{m-1}  \left( s_{\sigma(\ell)}+\epsilon\right)^{-2\alpha^2(m-\ell)} ds_{\sigma(m-1)}\dots ds_{\sigma(1)}=\mathcal O\left(\epsilon^{(m-1)(1-m\alpha^2)}\right),
\end{equation}
as $\epsilon\to 0$.
Thus by \eqref{UBIeps} and \eqref{ints1m1}, $I_\epsilon(\alpha)=\mathcal O\left(\epsilon^{(m-1)(1-m\alpha^2)}\right),$ as $\epsilon\to 0$, which combined with the lower bound \eqref{LBIeps} and Theorem \ref{Theorem1} proves Corollary \ref{Corollary}.

\subsubsection{$m\alpha^2=1$}
We start by finding a lower bound for $I_\epsilon(\alpha)$ when $m\alpha^2=1$. Let 
\begin{equation}\nonumber
D_\epsilon(j)=\{0<t_1<1: \,\, t_1<t_k<t_1+2j\epsilon \textrm{ for all } k=2,\dots,m\},\end{equation}
and let $B_\epsilon(j)=D_\epsilon(j)\setminus D_\epsilon(j-1)$.
On $B_\epsilon(j)$, the integrand in \eqref{defIalpha} satisfies
\begin{equation}\nonumber
((j+1)\epsilon)^{-m(m-1)\alpha^2}\leq 
\prod_{1\leq j<k\leq m} \left( \sin \left|\frac{t_j-t_k}{2}\right|+\epsilon\right)^{-2\alpha^2},
\end{equation}
for $j=1,2,\dots,1/\epsilon$, assuming for  ease of notation that $1/\epsilon$ is an integer.
Combined with the fact that $B_\epsilon(j)$ are disjoint for $j=1,2,\dots$ and the fact that $m\alpha^2=1$, it follows that
\begin{equation}\nonumber
\sum_{j=1}^{1/\epsilon} ((j+1)\epsilon)^{-(m-1)}\int_{B_\epsilon(j)}dt_1\dots dt_m\leq I_\epsilon(\alpha).
\end{equation}
Since
\begin{equation}\nonumber
\int_{B_\epsilon(j)}dt_1\dots dt_m=(2\epsilon)^{m-1}\left(j^{m-1}-(j-1)^{m-1}\right)\geq \epsilon^{m-1}(j+1)^{m-2},\end{equation}
for $j$ sufficiently large, say $j>j_0$, it follows that
\begin{equation}\label{LBIeps2}
\log (1/\epsilon)-\hat c_0<\sum_{j_0=1}^{1/\epsilon} (j+1)^{-1}\leq I_\epsilon(\alpha),
\end{equation}
for some constant $\hat c_0$. Thus we have a lower bound for $I_\epsilon(\alpha)$ and we look to obtain a corresponding upper bound.

We observe that the upper bounds \eqref{UBIeps} and \eqref{ints1m1} hold also for $m\alpha^2=1$. If $m=2$ and $m\alpha^2=1$, then the right hand side of \eqref{ints1m1} is of order $\log \epsilon^{-1}$ as $\epsilon \to 0$. We assume that $m>2$ and integrate in the variable $s_{\sigma(m-1)}$. We have $2\alpha^2=2/m<1$, and it follows that
\begin{equation}\nonumber
\int_0^{s_{\sigma(m-2)}}\left(s_{\sigma(m-1)}+\epsilon\right)^{-2\alpha^2}ds_{\sigma(m-1)}=\mathcal O\left(\left(
s_{\sigma(m-2)}+\epsilon \right)^{-2\alpha^2+1}\right),
\end{equation}
as $\epsilon \to 0$, and thus
\begin{multline}\nonumber
\int_{\mathcal W_\sigma} \, \prod_{\ell=1}^{m-1}  \left( s_{\sigma(\ell)}+\epsilon\right)^{-2\alpha^2(m-\ell)} ds_{\sigma(m-1)}\dots ds_{\sigma(1)}=\\
\mathcal O\left(
\int  \prod_{\ell=1}^{m-2}  \left( s_{\sigma(\ell)}+\epsilon\right)^{-2\alpha^2(m-\ell)} \left(
s_{\sigma(m-2)}+\epsilon \right)^{-2\alpha^2+1}
ds_{\sigma(m-2)}\dots ds_{\sigma(1)}
\right),
\end{multline}
as $\epsilon\to 0$
where integration on the right hand side is taken over $0<s_{\sigma(m-2)}<\dots<s_{\sigma(1)}<1$.
We redefine $J_\epsilon(v)$ as follows
\begin{equation}\nonumber
J_\epsilon(v)=\int  \prod_{\ell=1}^{m-v-1}  
\left( s_{\sigma(\ell)}+\epsilon\right)^{-2\alpha^2(m-\ell)} 
\left(s_{\sigma(m-v-1)}+\epsilon\right)^{v-\alpha^2v(v+1)}
ds_{\sigma(m-v-1)}\dots ds_{\sigma(1)},
\end{equation}
with integration taken over $0<s_{\sigma(m-v-1)}<\dots <s_{\sigma(1)}<1$. Since $m\alpha^2=1$, we note that the power of $s_{\sigma(m-v-1)}$ is greater than $-1$ for  $v=1,2,\dots,m-3$, namely:
\begin{equation}\nonumber
-2\alpha^2(v+1)+v-\alpha^2v(v+1)>-1,
\end{equation}
and it follows that
$J_\epsilon(v)=\mathcal O\left(J_\epsilon(v+1)\right)$ as $\epsilon \to 0$ for such $v$. It follows that
\begin{multline}\nonumber
\int_{\mathcal W_\sigma} \, \prod_{\ell=1}^{m-1}  \left( s_{\sigma(\ell)}+\epsilon\right)^{-2\alpha^2(m-\ell)} ds_{\sigma(m-1)}\dots ds_{\sigma(1)}=\mathcal O\left(J_\epsilon(m-2)\right)=
\\ \mathcal O\left(\int_0^1\left(s+\epsilon\right)^{m-2-\alpha^2m(m-1)}ds\right),
\end{multline}
as $\epsilon\to 0$. Since $m\alpha^2=1$, it follows that the right hand side is simply $\mathcal O\left(\log \epsilon^{-1}\right)$. Thus, combined with \eqref{LBIeps2} and Theorem \ref{Theorem1}, we have proven Corollary \ref{Corollary} for $m\alpha^2=1$.

\section{Statistics of impenetrable bosons in 1 dimension}\label{SecBos}

Consider
\begin{equation}\nonumber
\psi(x_1,\dots,x_n)=\frac{1}{\sqrt{n!L^n}}\prod_{1\leq j<k\leq n}\left|e^{2\pi i x_j/L}-e^{2\pi i x_k/L}\right|.
\end{equation}
It was proven by Girardeau \cite{Gir} that it has the following properties:
\begin{itemize}\item $\psi$ is the ground--state solution to the general time-independent Schr\"odinger equation in one-dimension with $n$ particles.
\item $\psi$ is symmetric with respect to interchange of $x_i$ and $x_j$ for $i\neq j$ (Bose--Einstein statistics).
\item $\psi$ is translationally invariant with period $L$.
\item $\psi$ vanishes when $x_i=x_j$ for $i\neq j$ (mutual impenetrabililty of particles).
\end{itemize}

In fact Girardeau only proved the above for odd $n$, but as noted by Lieb and Liniger \cite{Lieb1} (footnote 6), it is equally valid for even $n$.
When the system is in ground state, the wave function $\psi$ gives rise to a probability distribution for both the position and momentum. The position of the particles on $[0,L)$ has joint probability density function $\psi(x_1,\dots,x_n)^2$. Following the footsteps of Girardeau, we take as our starting point that the  wave function for the momentum is given by the Fourier transform of the wave function of the position:
\begin{equation} \nonumber
\phi(\mathcal M_1,\dots,\mathcal M_n)=\frac{1}{\sqrt{L^n}}\int_{(0,L)^n}dx_1\dots dx_n \psi(x_1,\dots,x_n)e^{-2\pi i \sum_{j=1}^nx_j\mathcal M_j/L},\qquad \mathcal M_j\in \mathbb Z.
\end{equation}
Thus the probability of the $j$'th particle having momentum $2\pi \mathcal M_j/L$ for each $j=1,\dots, n$ is given by 
\begin{equation}\label{momentumpdf}|\phi(\mathcal M_1,\dots,\mathcal M_n)|^2.\end{equation} 
It is easily verified that $\phi(\mathcal M_1,\dots,\mathcal M_n)$ is independent of $L$,  and that 
\begin{equation}
\sum_{\mathcal M_1,\dots,\mathcal M_n\in \mathbb Z}|\phi(\mathcal M_1,\dots,\mathcal M_n)|^2 =1.\end{equation}
Thus $|\phi|^2$ may simply be viewed as a probability distribution on $\mathbb Z^n$, which is the viewpoint we will take in Corollary \ref{CorMom} below, where we fix $L=2\pi$ without loss of generality.

 Since the particles are indistinguishable from one another, it is preferable to characterize the distribution as a point process, which we do as follows. Let $N_{\mathcal M}(n)$ denote the number of particles with momentum $2\pi \mathcal M/L$. Then if  $\mathcal M_1,\dots, \mathcal M_k$ are distinct, it follows from \eqref{momentumpdf} by a straightforward calculation that
\begin{equation} \nonumber
\mathbb E\left[ \prod_{j=1}^k N_{\mathcal M_j}(n)\right]=\pi_{k,n}(\mathcal M_1,\dots, \mathcal M_k),
\end{equation}
where
\begin{equation}\label{defpikn}
\pi_{k,n}(\mathcal M_1,\dots, \mathcal M_k)=\frac{n!}{(n-k)!}\sum_{\mathcal M_{k+1},\dots, \mathcal M_{n}\in \mathbb Z} |\phi(\mathcal M_1,\dots, \mathcal M_{n})|^2.
\end{equation} 
The above is only valid for distinct particles, for moments of $N_\mathcal M$ we have
\begin{equation}\label{varNM}
\pi_{k,n}(\mathcal M,\dots,\mathcal M)=\mathbb E\left[N_{\mathcal M}(N_{\mathcal M}-1)\dots(N_{\mathcal M}-k+1)\right],
\end{equation}
where $N_{\mathcal M}=N_{\mathcal M}(n)$.
Then the expected number of particles with $0$ momentum is given by $\pi_{1,n}(0)$. In 1963, Schultz \cite{Sch} proved that $\pi_{1,n}(0)=\mathcal O (n^{-\pi/4})$ as $n \to \infty$, which shows that there is no Bose-Einstein condensation according to the Penrose-Onsager criterion (the criterion states that  if the proportion of the particles  expected to have $0$ momentum tends to 0 as $n\to \infty$, then there is no Bose-Einstein condensation).  The upper bound obtained by Schultz was not optimal.  In 1964 Lenard \cite{Lenard} was able to improve on this, and obtained that $\mathbb E(N_0(n))=\mathcal O(n^{1/2})$ as $n\to \infty$. Lenard's approach was to make
a connection to Toeplitz determinants with Fisher--Hartwig singularities  by observing that if we denote the $k$ particle reduced density matrix by
\begin{equation}\label{defrhokn}
 \rho_{k,n}(x_1,\dots, x_k, y_1,\dots, y_k)=\int dx_{k+1}\dots dx_n \psi(x_1,\dots,x_n)\psi(y_1,\dots,y_k,x_{k+1},\dots,x_n),
\end{equation}
then $ \rho_{k,n}$ is a Toeplitz determinant with $2k$ FH singularities. This observation relies on the   multiple integral formula \eqref{intrep}.
By \eqref{defpikn} and \eqref{defrhokn} it is easily verified that 
\begin{multline}\label{defpik}
\pi_{k,n}(\mathcal M_1,\dots, \mathcal M_k)=\frac{n!}{(n-k)!L^k}\int_{(0,L)^{2k}}dx_1\dots dx_k \,dy_1 \dots dy_k \\  e^{2\pi i\sum_{j=1}^k (x_j-y_j)\mathcal M_j/L}\rho_{k,n}(x_1,\dots, x_k, y_1,\dots, y_k).
\end{multline}
Thus, to obtain the asymptotics of $\pi_{k,n}$ one must obtain those of $\rho_{k,n}$. The asymptotics of \\ $ \rho_{k,n}(x_1,\dots, x_k, y_1,\dots, y_k)$ was studied in the limit $n\to \infty$ with $L=n$ and for fixed $x_j, y_j$ independent of $n$ in \cite{VadTr,JMMS}. This is equivalent to studying Toeplitz determinants with $2k$ FH singularities with $\alpha_j=1/2$ for $j=1,\dots, 2k$, in the double scaling limit where the singularities are all at a distance of length $\mathcal O(1/n)$ from each other. This gave rise to some of the first connections to Painlev\'{e} V in the study of Toeplitz determinants. To obtain more detailed asymptotics for $\pi_{k,n}$ however, uniform asymptotics of $\rho_{k,n}$ are required. As mentioned in the introduction, Claeys and Krasovsky \cite{CK} obtained uniform asymptotics for $\rho_{1,n}$, and they  relied on \eqref{formulaCK} to prove that
\begin{equation}\nonumber
\mathbb E\left[\frac{N_0(n)}{\sqrt{n}}\right]\to \frac{\sqrt{2}}{\pi}G(3/2)^4\int_0^{\pi/2}(\sin t)^{-1/2} dx,
\end{equation}
as $n\to \infty$ (see formula (1.53) of \cite{CK}).

We are interested in not just the expectation of $N_0(n)/\sqrt{n}$, but also the variance and higher moments. By combining \eqref{formulaWidom}--\eqref{FormulaE} with Theorem \ref{Theorem1}, we obtain the following.

\begin{corollary} \label{CorMom} Fix $L=2\pi$ and let $\mathcal M_1,\dots, \mathcal M_n \in \mathbb Z$ be random variables with the probability distribution
\begin{equation}\nonumber
\textrm{Prob}(\mathcal M_1,\dots, \mathcal M_n)=|\phi(\mathcal M_1,\dots, \mathcal M_n)|^2.\end{equation}
Then, as $n\to\infty$,
\begin{equation}\nonumber
\mathbb E\left[\left(\frac{N_0(n)}{\sqrt n}\right)^k\right]\to\frac{G(3/2)^{4k}}{(2\pi)^{2k}}  \int_{(0,2\pi)^{2k}}
\frac{\prod_{1\leq r<s\leq k}\left|e^{it_r}-e^{it_s}\right|\prod_{k+1\leq r<s\leq 2k}\left|e^{it_r}-e^{it_s}\right|}{\sqrt{\prod_{1\leq r<s\leq 2k}\left|e^{it_r}-e^{it_s}\right|}}dt_1\dots dt_{2k}, 
\end{equation}
where
\begin{equation}\nonumber N_0(n)=\#\{j:\mathcal M_j=0,\}_{j=1}^n.\end{equation}
\end{corollary}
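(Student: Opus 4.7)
The plan is to reduce $\mathbb E[(N_0(n)/\sqrt n)^k]$ to the top factorial moment $\pi_{k,n}(0,\dots,0)/n^{k/2}$, express $\rho_{k,n}$ as a Toeplitz determinant whose symbol has $2k$ Fisher--Hartwig singularities of type $(\alpha,\beta)=(1/2,0)$, and then combine the pointwise asymptotics of \eqref{formulaWidom}--\eqref{FormulaE} with the uniform bound supplied by Theorem~\ref{Theorem1} in a dominated convergence argument. First, using Stirling numbers of the second kind $S(k,j)$ and \eqref{varNM},
\[
\mathbb E[N_0(n)^k]=\sum_{j=1}^{k}S(k,j)\,\pi_{j,n}(0,\dots,0),
\]
so it is enough to show that $\pi_{j,n}(0,\dots,0)=\mathcal O(n^{j/2})$ for every $j\le k$ and to compute the exact limit of $\pi_{k,n}(0,\dots,0)/n^{k/2}$; the lower--order terms then contribute $o(n^{k/2})$, and the top term (with $S(k,k)=1$) produces the claimed limit.

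Next, splitting both Vandermonde products in $\psi(x_1,\dots,x_n)\,\psi(y_1,\dots,y_k,x_{k+1},\dots,x_n)$ according to whether the indices are $\le k$ or $>k$, and applying \eqref{intrep} to the integration over $x_{k+1},\dots,x_n$, one obtains
\[
\rho_{k,n}(x,y)=\frac{(n-k)!}{n!(2\pi)^{k}}\,\prod_{1\le j<l\le k}\!|e^{ix_j}-e^{ix_l}|\,\prod_{1\le j<l\le k}\!|e^{iy_j}-e^{iy_l}|\;D_{n-k}(f_{k}),
\]
where the symbol
\[
f_{k}(e^{i\theta})=\prod_{j=1}^{k}|e^{i\theta}-e^{ix_j}|\,|e^{i\theta}-e^{iy_j}|
\]
is of the form \eqref{symbol} with $V\equiv 0$, $m=2k$, and $2k$ singular points $t_1,\dots,t_{2k}=x_1,\dots,x_k,y_1,\dots,y_k$, all of type $(\alpha_r,\beta_r)=(1/2,0)$. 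Substituting into \eqref{defpik} with $L=2\pi$ and $\mathcal M_j=0$ expresses $\pi_{k,n}(0,\dots,0)/n^{k/2}$ as a $2k$-fold integral of $D_{n-k}(f_{k})/n^{k/2}$ weighted by the same Vandermondes.

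For distinct $t_r$, the formulas \eqref{formulaWidom}--\eqref{FormulaE} specialize (using $G(2)=1$ and $\beta_r=0$) to
\[
D_{n-k}(f_{k})=G(3/2)^{4k}\prod_{1\le r<s\le 2k}|e^{it_r}-e^{it_s}|^{-1/2}\,(n-k)^{k/2}(1+o(1)),
\]
which gives pointwise convergence of the integrand to the integrand in the corollary statement. The main obstacle is the exchange of limit and integral: the pointwise limit is unbounded on the diagonals $t_r=t_s$, so an $n$-independent integrable majorant is required. Here Theorem~\ref{Theorem1} supplies exactly what is needed, yielding
\[
D_{n-k}(f_{k})\;\le\; C\,n^{k/2}\prod_{1\le r<s\le 2k}\Bigl(\sin\Bigl|\tfrac{t_r-t_s}{2}\Bigr|+n^{-1}\Bigr)^{-1/2}\;\le\; C'\,n^{k/2}\prod_{1\le r<s\le 2k}|e^{it_r}-e^{it_s}|^{-1/2},
\]
so that after multiplying by the Vandermondes in front of $D_{n-k}(f_k)$, the integrand for $\pi_{k,n}(0,\dots,0)/n^{k/2}$ is dominated almost everywhere by a constant multiple of the limit integrand itself. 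Integrability of this majorant on $(0,2\pi)^{2k}$ reduces, after the cancellations from the Vandermonde numerators, to integrability of
\[
\prod_{1\le j<l\le k}|e^{ix_j}-e^{ix_l}|^{1/2}\prod_{1\le j<l\le k}|e^{iy_j}-e^{iy_l}|^{1/2}\prod_{j,l=1}^{k}|e^{ix_j}-e^{iy_l}|^{-1/2},
\]
which is checked by a straightforward scaling argument: near any cluster consisting of $r$ of the $x_j$'s and $s$ of the $y_l$'s at a common scale $\epsilon$, the integrand contributes $\epsilon^{r(r-1)/4+s(s-1)/4-rs/2}$ while the cluster itself has volume $\epsilon^{r+s-1}$, so the radial integral converges whenever the exponent $\frac{(r-s)^2+3(r+s)}{4}-1$ is positive, which holds for all $r,s\ge 0$ with $r+s\ge 2$. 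Dominated convergence therefore gives $\pi_{k,n}(0,\dots,0)/n^{k/2}\to A_k$, where $A_k$ is the integral in the corollary; running the same argument with $k$ replaced by $j<k$ yields $\pi_{j,n}(0,\dots,0)=\mathcal O(n^{j/2})$, and the reduction of the first paragraph completes the proof.
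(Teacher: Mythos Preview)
Your argument is correct and follows essentially the same route as the paper: express $\pi_{k,n}(0,\dots,0)$ as a $2k$-fold integral of $D_{n-k}$ of a symbol with $2k$ root-type singularities of exponent $1/2$, use \eqref{formulaWidom}--\eqref{FormulaE} for the pointwise limit, and use Theorem~\ref{Theorem1} to control the integral uniformly (the paper phrases this last step as ``similar types of arguments as in Section~\ref{malp<1}'', which is the same $\epsilon$--$\delta$ implementation of dominated convergence). Your explicit integrability check via the cluster-scaling exponent $\tfrac{(r-s)^2+3(r+s)}{4}-1>0$ is a welcome addition that the paper omits; note that for a fully rigorous justification one should either remark that for products of power functions the single-cluster condition over all subsets is sufficient, or iterate the estimate over nested clusters, but this is routine here since every pairwise exponent is at least $-1/2$.
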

\begin{proof}
To study higher moments of $N_0=N_0(n)$, we have by \eqref{varNM} that
\begin{equation}\label{piE}
\mathbb E\left[N_0^k\right]=\pi_{k,n}(0,\dots,0)+\mathcal O\left(\max_{j=1,\dots,k-1}\pi_{j,n}(0,\dots,0)\right),
\end{equation}
for any fixed $k$.

We note that $\pi_{k,n}(0,\dots,0)$ is independent of $L$, and so we set $L=2\pi$. Thus, by \eqref{defrhokn}-\eqref{defpik},
\begin{equation}\nonumber
\pi_{k,n}(0,\dots,0)=\frac{1}{(2\pi)^{2k}}\int_{(0,2\pi)^{2k}}D_{n-k}\left(f_{2k}^{(1/2)}\right)\prod_{1\leq r<s\leq k}\left|e^{it_r}-e^{it_s}\right|\prod_{k+1\leq r<s\leq 2k}\left|e^{it_r}-e^{it_s}\right|dt_1\dots dt_{2k},
\end{equation}
where we recall the notation \eqref{falpha}.
Thus $\pi_{k,n}(0,\dots,0)$ is evaluated by combining \eqref{formulaWidom}--\eqref{FormulaE} and Theorem \ref{Theorem1}   as $n\to \infty$ by using similar types of arguments as in Section \ref{malp<1}, and we obtain
\begin{multline} \nonumber
\pi_{k,n}(0,\dots,0)=n^{k/2}\frac{G(3/2)^{4k}}{(2\pi)^{2k}} \\ \times \int_{(0,2\pi)^{2k}}
\frac{\prod_{1\leq r<s\leq k}\left|e^{it_r}-e^{it_s}\right|\prod_{k+1\leq r<s\leq 2k}\left|e^{it_r}-e^{it_s}\right|}{\sqrt{\prod_{1\leq r<s\leq 2k}\left|e^{it_r}-e^{it_s}\right|}}dt_1\dots dt_{2k}(1+o(1)).
\end{multline}
Thus the corollary follows from \eqref{piE}.
\end{proof}

\section{Method of proof of Theorem \ref{Theorem1}} \label{SecMethod}
Denote $\psi_0(z)=\chi_0=1/\sqrt{D_1(f)}$ and define the polynomials $\psi_j$ for $j=1,2,\dots$ by
\begin{equation}\nonumber
\psi_j(z)=\frac{1}{\sqrt{D_j(f)D_{j+1}(f)}}\det 
{ \begin{pmatrix} f_0 & f_{-1} & \dots &f_{-j+1} &f_{-j}\\
f_1& f_0 &\dots&f_{-j+2}&  f_{-j+1}\\
&&\ddots &\\
f_{j-1}&f_{j-2}& \dots& f_0 &f_{-1}\\
1&z&\dots &z^{j-1}&z^j \end{pmatrix}}=\chi_jz^j+\dots,
\end{equation}
where the leading coefficient $\chi_j$ is given by
\begin{equation}
\chi_j=\sqrt{\frac{D_j(f)}{D_{j+1}(f)}}.\label{DefChij}
\end{equation}
By the representation \eqref{intrep}, it follows that $D_j(f)>0$ and we fix $\chi_j>0$.
It is easily seen that $\psi_j$ are orthonormal on the unit circle:
\begin{equation}\nonumber
\int_0^{2\pi} \psi_k\left(e^{i\theta}\right)\overline{ \psi_j\left(e^{i\theta}\right)} f\left(e^{i\theta}\right)\frac{d\theta}{2\pi}= \delta_{jk}=\begin{cases}
0&{\rm for}\, j\neq k,\\
1&{\rm for } \, j=k,
\end{cases}
\end{equation}
for $j,k=0,1,2,\dots$.
By \eqref{DefChij} and the definition $\chi_0=1/\sqrt{D_1(f)}$,
\begin{equation} \label{ToepOP}
D_n(f)=\prod_{j=0}^{n-1}\chi_{j}^{-2}.
\end{equation}

Given $U>\epsilon>0$, we say that the parameters $t_1,\dots, t_m$ satisfy condition $(\epsilon,U,n)$ if $t_1,\dots,t_m\in \mathcal S_t$, where 
\begin{equation}\mathcal S_t=\{t_1,\dots,t_m:0\leq t_1<\dots<t_m<2\pi-\pi/m\},\end{equation} and  for each $1\leq j<k\leq m$, either $t_k-t_j< \epsilon/n$ or $t_k-t_j\geq U/n$. The  assumption that $t_m<2\pi-\pi/m$ one can make without loss of generality when studying Toeplitz determinants, since the Toeplitz determinant is rotationally invariant (i.e. $D_n(f(e^{i\theta}))=D_n(f(e^{i(\theta+x)}))$ for all $x\in [0,2\pi)$).

If $t_1,\dots,t_m$ satisfies condition $(\epsilon,U,n)$, the points $\{t_1,\dots,t_m\}$ partition naturally into clusters 
$\textbf{Cl}_1(\epsilon,U,n),\dots, \textbf{Cl}_r(\epsilon,U,n)$, where $r=r(\epsilon,U,n)$, satisfying the following conditions.
\begin{itemize}
\item  The radius of each cluster is less than $\epsilon/n$. Namely, $\epsilon_n<\epsilon$, where 
\begin{equation}\nonumber
\epsilon_n=n\max_{j=1,\dots,r}\, \max_{x,y\in {\bf Cl}_j(\epsilon,U,n)} |x-y|.
\end{equation}
\item The distance between any two clusters is greater than $U/n$. Namely, $\widehat u_n>U$, where
\begin{equation}\label{hatun}
\widehat u_n=n\min_{1\leq j<k\leq r} \, \min_{\substack {x\in {\bf Cl}_j\\ y\in {\bf Cl}_k}} |x-y|.
\end{equation}
 \end{itemize}

In Sections \ref{ModelProblem}-- \ref{SectionAsym}, we prove the following proposition.
\begin{Prop}\label{PropPoly}
\begin{itemize}
\item[(a)]
As $n\to \infty$,
\begin{equation}\nonumber
\log \chi_n=-V_0/2 +\mathcal O(1/n),
\end{equation}
uniformly for $t_1,\dots,t_m\in \mathcal S_t$.
\item[(b)]
There exists $U_1>U_0>0$, $C>0$ and $n_0>0$, such that if
the parameters $t_1,\dots,t_m$ satisfy condition $(U_0,U_1,n)$ and $n>n_0$, then
\begin{equation}\nonumber
\left|\log \chi_n+V_0/2 +H_n(\alpha_j,\beta_j,t_j)_{j=1}^m\right|<C\left(\frac{1}{n^2}+\frac{1}{ne^{\widehat u_n}}+\frac{\epsilon_n}{n}\right),
\end{equation}
where
\begin{equation}\nonumber
H_n(\alpha_j,\beta_j,t_j)_{j=1}^m=
\frac{1}{2n}\sum_{j=1}^m \left( \alpha_j^2 - \beta_j^2\right)+\frac{1}{n}\sum_{1\leq j<k\leq m}(\alpha_j\alpha_k-\beta_j\beta_k)\mathds 1_{U_0/n}(|t_k-t_j|),\end{equation}
 where
\begin{equation}\nonumber
\mathds 1_{U_0/n}(x)=\begin{cases}
1&0<x<U_0/n,\\
0&U_0/n<x.
\end{cases}
\end{equation}
\end{itemize}
\end{Prop}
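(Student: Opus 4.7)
The proof of Proposition \ref{PropPoly} proceeds via Deift--Zhou steepest descent applied to the Riemann--Hilbert (RH) problem characterizing the orthonormal polynomials $\psi_n$ on the unit circle with weight $f$. The leading coefficient $\chi_n$ is extracted from the subleading term of the standard $2\times 2$ matrix-valued solution $Y(z)$ at $z=\infty$, so the task reduces to producing asymptotics of $Y$ that are uniform in $(t_1,\dots,t_m)\in\mathcal S_t$.

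The plan is to execute the chain of transformations $Y\to T\to S\to R$. One first removes the growth at infinity by setting $T(z)=Y(z)\operatorname{diag}(z^{-n},z^n)$ outside the unit circle (and the analogous transformation inside), then opens lenses off $|z|=1$ to convert the oscillatory jump on the circle into one that decays exponentially away from the singular points $z_j$. The global parametrix $P^{(\infty)}$ is built from the Szeg\H{o} function of $e^V$ and the Szeg\H{o}-type decomposition of $\omega$, in the style of \cite{DIK,DIK3}. The key input is the local parametrix around each cluster $\textbf{Cl}_\ell$: when a cluster is a single singularity well separated from the others, one uses the confluent hypergeometric parametrix of Its--Krasovsky; when a cluster contains several singularities within distance $\lesssim 1/n$, the individual Its--Krasovsky disks overlap and one instead constructs a single cluster parametrix on a disk of fixed radius enclosing the whole cluster. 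After rescaling $z=z_c+\zeta/n$, with $z_c$ the centre of the cluster, the cluster parametrix solves a model RH problem on the $\zeta$-plane with $k$ Fisher--Hartwig singularities on $\mathbb R$, encoding an isomonodromic system that generalizes the Painlev\'e V construction of \cite{CK}.

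With the parametrices installed, one defines the error matrix $R$ in the usual fashion; its jumps are $I+\mathcal O(1/n)$ uniformly on the relevant contours, so the small-norm theory gives $R=I+\mathcal O(1/n)$ uniformly. Reading off the coefficient at infinity of $R$ yields $\log\chi_n=-V_0/2+\mathcal O(1/n)$, which is part (a). For part (b), the hypothesis that the parameters satisfy condition $(U_0,U_1,n)$ lets one push the expansion of each cluster parametrix one order further. The single-singularity piece of each cluster contributes the diagonal term $\frac{1}{2n}(\alpha_j^2-\beta_j^2)$, familiar from the isolated-singularity asymptotics \eqref{formulaWidom}, while the pairwise interaction of two singularities inside the same cluster produces the off-diagonal contribution $\frac{1}{n}(\alpha_j\alpha_k-\beta_j\beta_k)$; the indicator $\mathds 1_{U_0/n}$ enforces that only intra-cluster pairs appear, inter-cluster interactions being absorbed into the error $1/(n e^{\widehat u_n})$ via exponential decay in the matching between clusters, and the subleading remainder inside each cluster giving the $\epsilon_n/n$ error.

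The main obstacle is the cluster model RH problem. One must establish its solvability uniformly in the configuration of singularities inside the cluster, presumably via a vanishing lemma that exploits $\alpha_j\geq 0$ and $\Re\beta_j=0$ to rule out homogeneous solutions, and then compute the expansion of its solution at $\zeta=\infty$ precisely enough to identify the coefficients of $H_n$. In particular, the cluster parametrix must behave correctly both as the rescaled cluster radius $\epsilon_n\to 0$ (so it remains bounded and patches against the global parametrix on the disk boundary) and in the matching regime $\widehat u_n\to\infty$ (so it reduces to a product of isolated Its--Krasovsky parametrices, with the indicator $\mathds 1_{U_0/n}$ switching off). Securing this uniform solvability and expansion is precisely what distinguishes the present statement from \cite{DIK3} (isolated singularities) and \cite{CK} (the case $m=2$), and is the delicate point on which both parts (a) and (b) ultimately rest.
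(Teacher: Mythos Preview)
Your outline follows the same route as the paper: the RH characterization of the orthogonal polynomials, the chain $Y\to T\to S\to R$, a Szeg\H{o}-function global parametrix, cluster local parametrices built from a model RH problem $\Phi$ with several FH singularities on a line (whose solvability is indeed secured by a vanishing lemma exploiting $\alpha_j\ge 0$, $\Re\beta_j=0$), and small-norm analysis of $R$. Two harmless technical differences: the paper extracts $\chi_{n-1}^2=-Y_{21}(0)$ rather than reading off at $z=\infty$, and the local coordinate is $\zeta=n(\log z-i\widehat t_j)$ rather than an affine rescaling. Also, the paper does not need the cluster model to degenerate to a product of Its--Krasovsky parametrices as $\widehat u_n\to\infty$; instead the local disks $\mathcal U_j$ have radius $\widehat u_n/(3n)$, so on $\partial\mathcal U_j$ one has $|\zeta|=\widehat u_n/3\to\infty$ and the matching $P_jM^{-1}=I+\mathcal O(\widehat u_n^{-1})$ follows directly from the uniform large-$\zeta$ expansion of $\Phi$. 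The $H_n$ in part (b) then drops out of $\Phi_{1,11}\to(\sum_\nu\alpha_\nu^{(j)})^2-(\sum_\nu\beta_\nu^{(j)})^2$ as the cluster shrinks, whose expansion gives exactly your diagonal and intra-cluster cross terms.

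There is, however, a genuine gap in your argument for part (a). The small-norm analysis only runs once a clustering has been \emph{fixed}, i.e.\ once the configuration satisfies some condition $(u,\widetilde U,n)$; this yields $R=I+\mathcal O(1/n)$ uniformly over configurations satisfying \emph{that} condition, not over all of $\mathcal S_t$. A configuration with two points at distance of order $\sqrt{U_0U_1}/n$, say, satisfies no such condition with the constants you have in hand, and your sentence ``so the small-norm theory gives $R=I+\mathcal O(1/n)$ uniformly'' does not address this. The paper closes this gap by a pigeonhole argument (Section~4.1): one builds a finite chain $U_0<U_1<\dots<U_m$ with $U_{j+1}=\widetilde U(U_j)$ and observes that any configuration in $\mathcal S_t$ must satisfy at least one of the conditions $(U_{k-1},U_k,n)$, since otherwise each gap $[U_{k-1},U_k)$ would have to contain some $n(t_j-t_{j-1})$, forcing $m+1$ distinct points among $t_1,\dots,t_m$. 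Taking the maximum of the finitely many implied constants then gives the uniformity over $\mathcal S_t$. Without this step part (a) is not proved.
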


Using Proposition \ref{PropPoly}, we now compute the asymptotics of $D_n(f)$ as $n\to \infty$, for a specific configuration $0\leq t_1<t_2<\dots<t_m<2\pi-\pi/m$, but with error terms which are uniform over all configurations. Let $n_0$ be a fixed positive integer such that the asymptotics of Proposition \ref{PropPoly} are valid for $n\geq n_0$. Then $D_{n_0}(f)$ is a continuous function in terms of $t_j$ on the compact set $t_1,\dots,t_m\in [0,2\pi]$, and is thus uniformly bounded as $t_j$ vary.  Thus by \eqref{ToepOP}
\begin{equation}\label{bound0}
\log D_n=-2\sum_{N=n_0}^{n-1}\log \chi_N+\mathcal O(1),\end{equation}
as $n\to \infty$, uniformly over $\mathcal S_t$. Denote $\mathbb N_0=\mathbb N \setminus \{0,1,2,\dots,n_0\}$,
 and let 
\begin{equation}\nonumber
J_t=\{N\in \mathbb N_0:t_1,\dots, t_m \textrm{ satisfy condition } (U_0,U_1,N)\},
\end{equation}
with complement $J^c_t=\mathbb N_0\setminus J_t$. Then 
\begin{equation}\nonumber
J^c_t=\cup_{j=1}^{m-1}I_{j,t},\qquad I_{j,t}= \{N\in \mathbb N_0: (t_{j+1}-t_j)N\in [U_0,U_1)\}. \end{equation}
Written differently, we have
\begin{equation}\nonumber
I_{j,t}=\left\{ N\in \mathbb N_0: \frac{U_0}{t_{j+1}-t_j}\leq N<\frac{U_1}{t_{j+1}-t_j}\right\},
\end{equation}
and it follows that
\begin{equation}\nonumber
\sum_{N\in I_{j,t}}\frac{1}{N}=\log (U_1/U_0)+\mathcal O(1),\end{equation}
uniformly for $t_{j+1}-t_j>0$. Since $U_0$ and $U_1$ are fixed, it follows that the right hand side is uniformly bounded, and by Proposition \ref{PropPoly} (a) and the fact that $H_N=\mathcal O(1/N)$ we have
\begin{equation}\label{bound1}
\sum_{N\in J^c_t}\left(\log \chi_n+V_0/2+H_N(\alpha_j,\beta_j,t_j)_{j=1}^m\right)=\mathcal O(1),
\end{equation}
uniformly for $t_1,\dots,t_m\in[0,2\pi-\pi/m)$.

Suppose that $t_1,\dots,t_m$ satisfy condition $(U_0,U_1,N)$ for $N$ in an interval $N_1,N_1+1,\dots,N_2$. By Proposition \ref{PropPoly} (b), 
\begin{equation}\label{sumcont}
\sum_{N_1}^{N_2}\left(\log \chi_n+V_0/2+H_N(\alpha_j,\beta_j,t_j)_{j=1}^m\right)<C\sum_{N=N_1}^{N_2}\left(\frac{1}{N^2}+\frac{1}{Ne^{\widehat u_N}}+\frac{\epsilon_N}{N}\right),
\end{equation}
Since $\widehat u_N=N(t_i-t_{i-1})$ for some $i\in\{2,3,\dots,m\}$ (where $i$ is fixed for $N\in [N_1,N_2]$), it follows that $\frac{\widehat u_N}{N}=\frac{\widehat u_{N_1}}{N_1}$, and as a consequence (bearing in mind that $\frac{\widehat u_{N_1}}{N_1}>U_1$) we have $\widehat u_N>NU_1$. Thus, bounding the sum by a suitable integral,
\begin{equation}
\sum_{N=N_1}^{N_{2}}\frac{1}{Ne^{\widehat u_N}}\leq \frac{e^{-U_1}}{U_1}.
\end{equation}
 Similarly, $\epsilon_N/N=\epsilon_{N_2}/N_2$, and thus
\begin{equation}\label{sumcont1}
\sum_{N=N_1}^{N_{2}}\frac{\epsilon_N}{N}\leq \epsilon_{N_2}\leq U_0.
\end{equation}
Since $J_t^c$ is composed of at most $m-1$ disjoint intervals, it follows that $J_t$ is composed of at most $m$ disjoint intervals, and it follows by \eqref{sumcont}-\eqref{sumcont1} that
\begin{equation}\label{bound2}
\sum_{\substack{N\in J_t\\ N\leq n}}\left(\log \chi_n+V_0/2+H_N(\alpha_j,\beta_j,t_j)_{j=1}^m\right)\leq Cm\left(\frac{1}{n_0}+\frac{e^{-U_1}}{U_1}+U_0\right).
\end{equation}
Since $U_0, U_1,n_0$ are just arbitrary constants, the right hand side is bounded uniformly over $\mathcal S_t$. Thus, by \eqref{bound0}, \eqref{bound1}, \eqref{bound2}, it follows that
\begin{equation}\nonumber
\log D_n(f)=nV_0+2\sum_{N=1}^nH_N(\alpha_j,\beta_j,t_j)_{j=1}^m+\mathcal O(1),\end{equation}
uniformly over $\mathcal S_t$.
 Since
\begin{equation}\nonumber
\sum_{N=1}^n\frac{1}{N} \mathds 1_{U_0/N}(t_j-t_i)=\log\frac{1}{t_j-t_i+1/n}+\mathcal O(1),
\end{equation}
as $n\to \infty$, with the implicit constant depending only on $U_0$ which is fixed, it follows that
\begin{equation}\nonumber
\log D_n(f)=nV_0+\sum_{j=1}^m\left(\alpha_j^2-\beta_j^2\right)\log n+\sum_{1\leq j< k\leq m}2(\alpha_j\alpha_k-\beta_j\beta_k)\log \left(\frac{1}{\left|t_j-t_k\right|+n^{-1}}\right)+\mathcal O(1),
\end{equation}
as $n\to \infty$, uniformly over $\mathcal S_t$. For such $t_1,\dots,t_m$, we have
\begin{equation}
\log \left(\frac{1}{\left|t_j-t_k\right|+n^{-1}}\right)=\log \left(\frac{1}{\sin \frac{\left|t_j-t_k\right|}{2}+n^{-1}}\right)+\mathcal O(1)\end{equation}
with uniform error terms, which yields Theorem \ref{Theorem1} for
$t_1,\dots,t_m\in \mathcal S_t$, and the full theorem 
follows from the aforementioned rotational invariance of the Toeplitz determinant.

\subsection{Structure of the proof of Proposition \ref{PropPoly}}

We will prove the following proposition, which holds if and only Proposition \ref{PropPoly} (a) holds.
\begin{Prop}\label{PropPoly2} 
Given $u>0$, there exists $\widetilde U>u$, $\widetilde C>0$, and $\widetilde N_0>0$ such that if 
the parameters $t_1,\dots,t_m$ satisfy condition $(u,\widetilde U,n)$ and $n>\widetilde N_0$, then
\begin{equation}\nonumber
\left|\log \chi_n+V_0/2 \right|<\widetilde C/n.
\end{equation}
\end{Prop}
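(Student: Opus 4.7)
The plan is a Riemann--Hilbert steepest descent analysis of the orthogonal polynomials $\psi_n$ with respect to the weight $f$. Encode $\psi_n$ in the Fokas--Its--Kitaev $2\times 2$ matrix $Y(z)$, analytic off $\mathbb T$, with the standard jump and normalisation $Y(z) = (I + O(1/z))\, z^{n\sigma_3}$ at infinity. Since $\chi_n$ is read off from the expansion of $Y$ at $\infty$ (or equivalently from its value at $0$), it suffices to control $Y$ up to an additive $O(1/n)$ correction in the appropriate entries, from which $\log \chi_n = -V_0/2 + O(1/n)$ will follow.

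The first transformations $Y \mapsto T \mapsto S$ are standard: normalise at infinity via multiplication by $z^{-n\sigma_3}$ outside $\mathbb T$, then open lenses on both sides of $\mathbb T$ using that $f > 0$ on the unit circle (a consequence of conditions (a) and (c)). The resulting $S$ has jumps exponentially close to the identity on the lens contours, away from the singularities $z_1,\dots,z_m$. Outside small disks $D_1,\dots,D_r$ chosen around each cluster $\mathbf{Cl}_j$, one builds a global parametrix $N$ from the Szeg\H o function of $e^V$. I would take each $D_j$ of radius proportional to $\widetilde U/n$, so that it comfortably contains the whole cluster (whose diameter is at most $u/n$), while disks for different clusters remain disjoint because inter-cluster distances are at least $\widetilde U/n$ with $\widetilde U \gg u$.

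The crux of the argument, and the main obstacle, is the construction of a local parametrix $P^{(j)}$ on each $D_j$. Within a single cluster, Fisher--Hartwig singularities sit at mutual distance at most $u/n$, so they are comparable on the $n$-scale and cannot be treated by independent single-singularity parametrices of confluent hypergeometric type. Instead, in the rescaled coordinate $\zeta = n(z - z_{j,*})$ the cluster occupies a bounded region $|\zeta|\le u$ of the $\zeta$-plane, and what is needed is a single model Riemann--Hilbert problem whose jumps simultaneously encode all singularities in $\mathbf{Cl}_j$ with the correct exponents $(\alpha_k,\beta_k)$. This cluster parametrix generalises the Deift--Its--Krasovsky confluent hypergeometric model for a single singularity, and the Painlev\'e~V based parametrix used in \cite{CK} for two. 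Constructing it in full generality, together with uniform $L^\infty$ bounds and a controlled large-$\zeta$ asymptotic expansion, is presumably what Section~\ref{ModelProblem} undertakes.

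Matching $P^{(j)}$ with $N$ on $\partial D_j$ yields a jump of the form $I + O(1/|\zeta|) = I + O(1/\widetilde U)$; here it is crucial that $\widetilde U$ is taken large in terms of $u$, so that the large-$\zeta$ expansion of the model solution is in force on $|\zeta| = \widetilde U$. Although the pointwise error is merely a constant, $\partial D_j$ has length $O(\widetilde U/n)$, and the standard small-norm Riemann--Hilbert estimate for $R := S \cdot (\text{parametrix})^{-1}$ produces $R(z) = I + O(1/n)$ uniformly in $z$ and uniformly over configurations satisfying condition $(u,\widetilde U,n)$. Reading off the expansion of $R$ at infinity then delivers $\log \chi_n = -V_0/2 + O(1/n)$, exactly the statement of Proposition~\ref{PropPoly2}. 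The uniformity in $t_1,\dots,t_m$ is built in from the outset: the cluster parametrix lives on a fixed compact $\zeta$-domain whose geometry does not degenerate, and every other contribution to the error scales cleanly with $n$ alone.
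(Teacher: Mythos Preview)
Your outline is essentially the paper's own proof: Deift--Zhou steepest descent $Y\mapsto T\mapsto S$, a global parametrix from the Szeg\H{o} function, and a local parametrix on shrinking disks around each cluster built from the multi-singularity model problem $\Phi$ of Section~\ref{ModelProblem}, followed by a small-norm argument for $R$. Two points to correct. First, the global parametrix must be built from the Szeg\H{o} function of the \emph{full} symbol $f$ (including the singularity factors, as in \eqref{defDinDout}--\eqref{def:M}), not of $e^V$ alone; otherwise $SM^{-1}$ retains a nontrivial jump on all of $\mathbb T$ outside the disks and the small-norm machinery does not apply. The $V_0$ you want is then recovered from $M_{21}(0)=-e^{-V_0}$ via $\chi_{n-1}^2=-(R(0)M(0))_{21}$. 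Second, the claim $R(z)=I+O(1/n)$ uniformly in $z$ is too strong: small-norm theory on a jump of size $O(1/\widetilde U)$ only yields $R=I+O(1/\widetilde U)$ uniformly (and even this requires care since the contours shrink; the paper handles it directly following \cite{CFLW}). The $O(1/n)$ improvement is specific to evaluation at $z=0$ (or $\infty$), obtained exactly by the mechanism you describe---integrating a jump of size $O(1/\widetilde U)$ over contours of length $O(\widetilde U/n)$ at distance $O(1)$ from the evaluation point---and this is all that is needed for $\log\chi_n=-V_0/2+O(1/n)$.
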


We now show that Proposition \ref{PropPoly2} implies Proposition \ref{PropPoly} (a). It will be useful make the dependence of the implicit constants in Proposition \ref{PropPoly2} more explicit, so we denote $\widetilde U=\widetilde U(u)$, $\widetilde C=\widetilde C(u)$, and $\widetilde N_0=\widetilde N_0(u)$.
Let $U_0,U_1$ be two constants for which Proposition \ref{PropPoly} (b) holds.
For $j=1,2,\dots,m-1$, we define $U_{j+1}=\widetilde U (U_j)$ (meaning that given $u=U_j$,  Proposition \ref{PropPoly2} holds with $\widetilde U=\widetilde U(U_j)$, which we define to be equal to $U_{j+1}$).
For each $N=1,2,\dots,n$, we have a sequence of conditions
\begin{equation}\label{sequence}
(U_0,U_1,N),\quad (U_1,U_2,N), \quad (U_2,U_3,N), \dots, (U_{m-1},U_{m},N).
\end{equation} 
Then the configuration $t_1,\dots,t_m$ will satisfy one of the conditions in the sequence \eqref{sequence}, because otherwise, for each $k=1,2,\dots,m$, one would have  a corresponding $j=j(k)=2,3,\dots,m$ such that $U_{k-1}/N\leq  t_j-t_{j-1}<  U_k/N$, meaning that there are at least $m+1$ distinct points in $\{t_j\}_{j=1}^m$, which is a contradiction. Thus, if we denote the maximum of the implicit constants $\widetilde C(U_j)$ and $\widetilde N_0(U_j)$ over $j=1,\dots, m-1$ by $\widetilde C_{\textrm{max}}$ and $\widetilde N_{0, \textrm{max}}$ we obtain
\begin{equation}\nonumber
\left|\log \chi_n+V_0/2 \right|<\widetilde C_{\textrm{max}}/n,
\end{equation}
for any $n\geq \widetilde N_{0, \textrm{max}}$, which proves Proposition \ref{PropPoly} (a).

We will prove Proposition \ref{PropPoly} (b) and Proposition \ref{PropPoly2} by applying the Deift--Zhou steepest descent analysis \cite{DeiftZhou} to a Riemann--Hilbert (RH) problem associated to the orthogonal polynomials $\psi_j$. Under the Deift--Zhou steepest descent framework, there are several standard ingredients, including the opening of the lens, and the construction of a main parametrix and local parametrices. Among these ingredients, the opening of the lense and  the construction of a local parametrix is the most involved. Each local parametrix  contains a cluster ${\bf Cl}_j(u,\widetilde U,n)$, and we map a model RH problem to a shrinking disc containing ${\bf Cl}_j(u,\widetilde U,n)$. We construct and analyze the model RH problem in the next section, Section \ref{ModelProblem}, and use these results in Section \ref{SectionAsym} to prove Propositions \ref{PropPoly} (b) and \ref{PropPoly2}.

\section{Model RH problem}\label{ModelProblem}

\begin{figure}
\begin{tikzpicture}
\draw [decoration={markings, mark=at position 0.12 with {\arrow[thick]{>}}},
        postaction={decorate}]
        [decoration={markings, mark=at position 0.38 with {\arrow[thick]{>}}},
        postaction={decorate}]
        [decoration={markings, mark=at position 0.56 with {\arrow[thick]{>}}},
        postaction={decorate}]
        [decoration={markings, mark=at position 0.69 with {\arrow[thick]{>}}},
        postaction={decorate}] [decoration={markings, mark=at position 0.87 with {\arrow[thick]{>}}},
        postaction={decorate}] (0,-2)--(0,6);
\draw [decoration={markings, mark=at position 0.5 with {\arrow[thick]{>}}},
        postaction={decorate}] (-2,-4.5)--(0,-2);
\draw [decoration={markings, mark=at position 0.5 with {\arrow[thick]{>}}},
        postaction={decorate}] (2,-4.5)--(0,-2);
\draw [decoration={markings, mark=at position 0.5 with {\arrow[thick]{>}}},
        postaction={decorate}] (0,6)-- (-2,8.5);
\draw [decoration={markings, mark=at position 0.5 with {\arrow[thick]{>}}},
        postaction={decorate}] (0,6)-- (2,8.5);
\draw [decoration={markings, mark=at position 0.20 with {\arrow[thick]{>}}},
        postaction={decorate}] [decoration={markings, mark=at position 0.80 with {\arrow[thick]{>}}},
        postaction={decorate}] (-2.5,4)--(2.5,4);
\draw [decoration={markings, mark=at position 0.20 with {\arrow[thick]{>}}},
        postaction={decorate}] [decoration={markings, mark=at position 0.80 with {\arrow[thick]{>}}},
        postaction={decorate}] (-2.5,3)--(2.5,3);
\draw [decoration={markings, mark=at position 0.20 with {\arrow[thick]{>}}},
        postaction={decorate}] [decoration={markings, mark=at position 0.80 with {\arrow[thick]{>}}},
        postaction={decorate}] (-2.5,2)--(2.5,2);
\draw [decoration={markings, mark=at position 0.20 with {\arrow[thick]{>}}},
        postaction={decorate}] [decoration={markings, mark=at position 0.80 with {\arrow[thick]{>}}},
        postaction={decorate}] (-2.5,0)--(2.5,0);
\node [above] at (2.5,0) {$e^{\pi i (\alpha_4+\beta_4)\sigma_3}$}; 
\node [above] at (2.5,2) {$e^{\pi i (\alpha_3+\beta_3)\sigma_3}$};       
\node [above] at (2.5,3) {$e^{\pi i (\alpha_2+\beta_2)\sigma_3}$};
\node [above] at (2.5,4) {$e^{\pi i (\alpha_1+\beta_1)\sigma_3}$};
\node [above] at (-2.5,0) {$e^{\pi i (\alpha_4-\beta_4)\sigma_3}$}; 
\node [above] at (-2.5,2) {$e^{\pi i (\alpha_3-\beta_3)\sigma_3}$};       
\node [above] at (-2.5,3) {$e^{\pi i (\alpha_2-\beta_2)\sigma_3}$};
\node [above] at (-2.5,4) {$e^{\pi i (\alpha_1-\beta_1)\sigma_3}$};
\node [left] at (-1.5,7.5) 
{$\begin{pmatrix}1&0\\-1&1\end{pmatrix}
$};
\node [left] at (-1.8,8.8) 
{$\Gamma_2
$};
\node [right] at (1.5,7.5) 
{$\begin{pmatrix}1&1\\0&1\end{pmatrix}
$};
\node [right] at (1.8,8.8) 
{$\Gamma_1
$};
\node [left] at (-1.5,-3.5) 
{$\begin{pmatrix}1&0\\-1&1\end{pmatrix}
$};
\node [left] at (-1.8,-4.8) 
{$\Gamma_3
$};
\node [right] at (1.5,-3.5) 
{$\begin{pmatrix}1&1\\0&1\end{pmatrix}
$};
\node [right] at (1.9,-4.8) 
{$\Gamma_4
$};
\node[right] at (0,1) 
{$J_0
$};
\node[right] at (0,3.5) 
{$J_0
$};
\node[right] at (0,2.5) 
{$J_0
$};
\node[right] at (0,5) 
{$J_0
$};
\node[right] at (0,-1.3) 
{$J_0
$};
\node[right] at (3.5,-2) 
{$J_0=\begin{pmatrix}
0&1\\-1&1
\end{pmatrix}
$};
\node [below left] at (0,4) {$iw_1$};
\node [below left] at (0,0) {$w_4$};
\node [below left] at (0,3) {$iw_2$};
\node [below left] at (0,2) {$iw_3$};
\node [left] at (0,6) {$iu$};
\node [left] at (0,-2) {$-iu$};
\draw [fill=black] (0,0) circle[radius=0.07]; 
\draw [fill=black] (0,2) circle[radius=0.07];
\draw [fill=black] (0,3) circle[radius=0.07];
\draw [fill=black] (0,4) circle[radius=0.07];

\end{tikzpicture}
\caption{Jumps of $\Phi$ for $4$ singularities.}\label{JumpsPhi}
\end{figure}
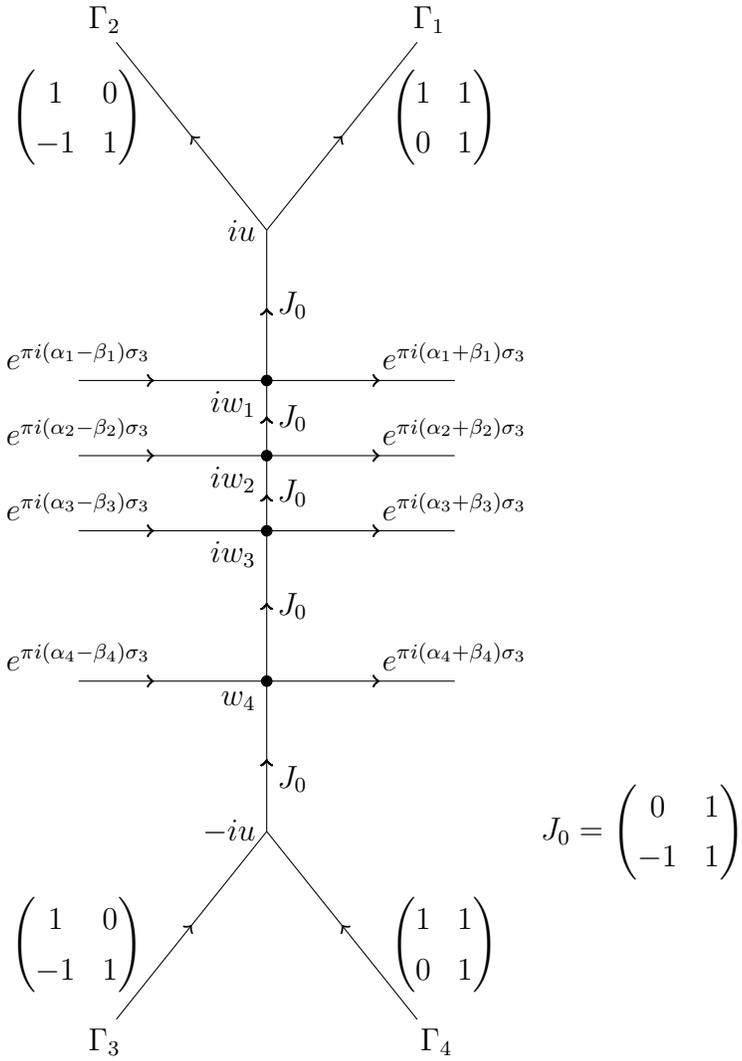
In this section we introduce and analyze a model Riemann--Hilbert problem, yielding results which we rely on to prove Propositions \ref{PropPoly} (b) and \ref{PropPoly2} in Section \ref{SectionAsym}.

We pose a Riemann--Hilbert problem for $\Phi=\Phi\left(\zeta;(w_j,\alpha_j,\beta_j)_{j=1}^{\mu}\right)$ with parameters 
\begin{itemize}
\item $\mu=1,2,3,\dots$,
\item $-u/2\leq w_\mu<w_{\mu-1}<\dots<w_2<w_1\leq u/2$, where $w_1\geq 0$,
\item $\alpha_j\geq0$ and $\Re \beta_j=0$, with $(\alpha_j,\beta_j)\neq (0,0)$ for $j=1,2,\dots,\mu$,\end{itemize}
where $u>0$ is some fixed constant.

The model RH problem will later be used to construct a local parametrix at each cluster of points, where $\mu$ will be the number of points in the cluster. In particular it means that the ordering of the $\alpha_j,\beta_j$ here do not necessarily correspond with those in the definition of the Toeplitz determinant, see Section \ref{SectionAsym} for details on how the model RH problem is utilized.

\subsubsection*{RH problem for $\Phi$}

\begin{itemize}
\item[(a)] $\Phi$ is analytic on $\mathbb C\setminus \Gamma_\Phi$, where $\Gamma_\Phi$ is described by Figure \ref{JumpsPhi} for $\mu=4$, and is in general given by
\begin{equation}\nonumber
\begin{aligned}
\Gamma_\Phi&=\cup_{j=0}^5 \Gamma_j , & \Gamma_0&=[-iu,iu],\\
\Gamma_1&=iu+e^{\pi i/4}\mathbb R_+,& \Gamma_2&=iu+e^{3\pi i /4}\mathbb R_+,
& \Gamma_3&=-iu+e^{5\pi i/4}\mathbb R_+\\
\Gamma_4&=-iu+e^{7\pi i /4}\mathbb R_+,& \Gamma_5&=\cup_{j=1}^\mu \{z:\Im z=w_j\},
\end{aligned}
\end{equation}
where $iu+e^{\pi i j/4}\mathbb R_+=\{z:\arg (z-iu)=\pi j/4\}$,
with the orientation of $\Gamma_5$ taken to the right,  and the orientation of $\Gamma_j$ taken upwards for $j=0,\dots, 4$.
\item[(b)] $\Phi$ has the following jumps on $\Sigma$:
\begin{equation}\nonumber
\begin{aligned}
\Phi_+(\zeta)&=\Phi_-(\zeta)e^{\pi i (\alpha_j+\beta_j)} && \textrm{for $\Im \zeta=w_j$ and $\Re \zeta>0$},\\
\Phi_+(\zeta)&=\Phi_-(\zeta)e^{\pi i (\alpha_j-\beta_j)} && \textrm{for $\Im \zeta=w_j$ and $\Re \zeta<0$},\\
\Phi_+(\zeta)&=\Phi_-(\zeta)\begin{pmatrix}
1&1\\0&1
\end{pmatrix}
 && \textrm{for }\zeta \in \Gamma_1,\Gamma_4,\\
 \Phi_+(\zeta)&=\Phi_-(\zeta)\begin{pmatrix}
1&0\\-1&1
\end{pmatrix}
 && \textrm{for }\zeta \in \Gamma_2,\Gamma_3,\\
 \Phi_+(\zeta)&=\Phi_-(\zeta)J_0
 && \textrm{for }\zeta \in \Gamma_0\setminus \{-iu,iu,iw_1,\dots,iw_\mu\},\\
\end{aligned}
\end{equation}
where $J_0=\begin{pmatrix}
0&1\\-1&1
\end{pmatrix}$.
\item[(c)] As $\zeta \to \infty$,
\begin{equation}\nonumber
\begin{aligned}
\Phi(\zeta)&=\left(I+\frac{\Phi_1}{\zeta}+\mathcal O(\zeta^{-2})\right)e^{-\frac{\zeta}{2} \sigma_3}\prod_{j=1}^\mu (\zeta-iw_j)^{-\beta_j\sigma_3}\exp\left[\pi i(\beta_j-\alpha_j)\chi_{w_j}(\zeta)\sigma_3\right], \\
\chi_w(\zeta)&=\begin{cases}
0&\textrm{for }\Im \zeta>w,\\
1&\textrm{for }\Im \zeta<w,\\
\end{cases}
\end{aligned}
\end{equation}
where the branches are chosen such that $\arg (\zeta-iw_j)\in [0,2\pi)$ for $j=1,\dots,\mu$, and where $\Phi_1=\Phi_1\left(\mu;(w_j,\alpha_j,\beta_j)_{j=1}^\mu\right)$ is independent of $\zeta$.
\item[(d)]  $\Phi(\zeta)$ is bounded as $\zeta \to \pm iu$. 
As $\zeta \to iw_j$ for $j=1,\dots, k$ in the sector $\arg (\zeta-iw_j)\in (\pi/2,\pi)$,
\begin{equation}\nonumber
\Phi(\zeta)=\begin{cases}F_j(\zeta)(\zeta-iw_j)^{\alpha_j\sigma_3}\begin{pmatrix}
1&g(\alpha_j,\beta_j)\\0&1
\end{pmatrix} &\textrm{for }2\alpha_j\notin \mathbb N=\{0,1,2,\dots\},\\
F_j(\zeta)(\zeta-iw_j)^{\alpha_j\sigma_3}\begin{pmatrix}
1&g(\alpha_j,\beta_j)\log(\zeta-iw_j)\\0&1
\end{pmatrix} &\textrm{for }2\alpha_j\in \mathbb N,
\end{cases}
\end{equation}
for some function $F_j$ which is analytic on a neighbourhood of $iw_j$, and
\begin{equation}\label{def:g}
g(\alpha,\beta)=\begin{cases}-\frac{e^{2\pi i \beta}-e^{-2\pi i \alpha}}{2i\sin 2\pi \alpha}&\textrm{for }2\alpha\notin \mathbb N,\\
\frac{i}{2\pi}\left((-1)^{2\alpha}e^{2\pi i \beta}-1\right)
&\textrm{for }2\alpha\in \mathbb N.
\end{cases}
\end{equation}
\end{itemize}
Note that $\pm iu$ are not special points, and therefore the values of $u$ are not particularly important. However, we present the RH problem in this manner for notational convenience and to make it clear that the local behaviour at each singularity can be presented in the same form, also for the top and bottom singularity.

We also note that $g$ was chosen such that the local behaviour of $\Psi$ at the point $iw_j$ is consistent with the jumps. 

\subsubsection{The case of a single singularity $\mu=1$}
When there is only one singularity $\mu=1$ and $w_1,u=0$,  the  RH problem for $\Phi$ (and equivalent versions of it) has been studied by many authors. It was first solved by Kuijlaars and Vanlessen in \cite{KV,Vanlessen} for  $\beta=0$ in terms of Bessel functions, and brought to the setting of determinants by Krasovsky in \cite{Krasovsky}. For $\alpha=0$ it was solved by Its and Krasovsky in \cite{IK}, and a  solution for general $\alpha,\beta$ was found in terms of confluent hypergeometric functions by Deift, Its, Krasovsky in \cite{DIK, DIK3} and Moreno in \cite{Moreno}. 
 
 Claeys, Its and Krasovsky \cite{CIK} brought the above solution to the form which we will refer to. In \cite{CIK} the RH problem is  denoted by $M$, which we will denote by $M_{\textrm{CIK}}$, and by comparison of RH problems it follows that
\begin{equation}\nonumber
M_{\textrm{CIK}}(\zeta)=e^{\frac{\pi i}{2}(\alpha-\beta)\sigma_3}
\Phi(\zeta;\mu=1,w=0,\alpha,\beta)e^{\pi i (\alpha-\beta)\chi_0(\zeta)\sigma_3}e^{\frac{\pi i }{2}(\beta-\alpha)\sigma_3},
\end{equation}
when one takes $u=0$.
The solution to $M_{\textrm{CIK}}$ may also be found in \cite{CK}, Section 4, where we find the following formula
\begin{equation}\label{1singPhi}
\Phi_1(\mu=1=,w=0,\alpha,\beta)=\begin{pmatrix}
\alpha^2-\beta^2&-e^{-\pi i (\alpha+\beta)}\frac{\Gamma(1+\alpha-\beta)}{\Gamma(\alpha+\beta)}\\
e^{\pi i (\alpha+\beta)}\frac{\Gamma(1+\alpha+\beta)}{\Gamma(\alpha-\beta)}
&\beta^2-\alpha^2
\end{pmatrix},
\end{equation}
where $\Gamma$ is Euler's $\Gamma$ function.

\subsubsection{The case of multiple singularities $\mu>1$}
In the case of 2 singularities $\mu=2$,  an equivalent version of the RH problem for $\Phi$ was proven to have a unique solution by Claeys and Krasovsky \cite{CK}, and to be connected to the Painlev\'{e} V equation. See also \cite{Van2} for a reference on Riemann--Hilbert problems connected to the Painlev\'{e} equations.
 The proof of a unique solution by \cite{CK} generalizes easily to our  situation of $\mu=1,2,3,\dots$ singularities, and we have included a proof of the following proposition in the Appendix for the reader's convenience.
\begin{proposition}\label{PropSolv} Let $\alpha_j\geq 0$ and $\Re \beta_j=0$ for $j=1,2,\dots\mu$.
There exists a unique solution to the Riemann--Hilbert problem for $\Phi$.
\end{proposition}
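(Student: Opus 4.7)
The plan splits into uniqueness and existence, handled separately.

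\emph{Uniqueness.} Given two solutions $\Phi^{(1)}, \Phi^{(2)}$, I would form the ratio $R(\zeta) = \Phi^{(1)}(\zeta)(\Phi^{(2)}(\zeta))^{-1}$. Since the jump matrices in condition (b) do not depend on $\Phi$, $R$ is analytic across the interior of every arc of $\Gamma_\Phi$. At each $iw_j$, condition (d) gives both solutions the same singular factor $(\zeta-iw_j)^{\alpha_j\sigma_3}$ and the same (possibly logarithmic) upper-triangular factor, so the difference between them is absorbed into the analytic prefactors $F_j$, and $R$ is bounded near $iw_j$; condition (d) similarly gives boundedness at $\pm iu$. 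Hence $R$ is entire, and by (c) it tends to $I$ at infinity, so Liouville forces $R \equiv I$.

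\emph{Existence.} I would follow the standard Deift--Zhou framework. First normalize by
$$\widehat\Phi(\zeta) = \Phi(\zeta) \prod_{j=1}^\mu \exp\bigl[-\pi i(\beta_j-\alpha_j)\chi_{w_j}(\zeta)\sigma_3\bigr] \prod_{j=1}^\mu (\zeta-iw_j)^{\beta_j\sigma_3} e^{\tfrac{\zeta}{2}\sigma_3},$$
so that $\widehat\Phi(\zeta) \to I$ as $\zeta\to\infty$. The resulting RH problem is equivalent to a singular integral equation $(I - \mathcal{C}_J)\nu = f$ on a weighted $L^2$ space on $\Gamma_\Phi$, with weights near each $iw_j$ absorbing the allowed power/logarithmic blowup. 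Exponential decay of the jumps along the unbounded arcs (from the factor $e^{\zeta\sigma_3/2}$ along $\Gamma_1,\ldots,\Gamma_4$) together with H\"older continuity elsewhere show that $I - \mathcal{C}_J$ is Fredholm of index zero, so existence reduces to injectivity, i.e.\ to a vanishing lemma.

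\emph{Vanishing lemma and main obstacle.} Assume $\widehat\Phi^{(0)}$ solves the homogeneous RH problem ($\widehat\Phi^{(0)} = O(1/\zeta)$). Following the argument in \cite{CK}, I would exploit the invariance of $\Gamma_\Phi$ under $\zeta \mapsto -\bar\zeta$, together with $\overline{\alpha_j}=\alpha_j$ and $\overline{\beta_j}=-\beta_j$ (from $\Re\beta_j=0$), to show that
$$H(\zeta) = \widehat\Phi^{(0)}(\zeta)\, \overline{\widehat\Phi^{(0)}(-\bar\zeta)}^T$$
extends continuously across $\Gamma_\Phi$. The horizontal jumps $e^{\pi i(\alpha_j \pm \beta_j)\sigma_3}$ are diagonal unimodular scalars that cancel trivially under the pairing, and the jumps on $\Gamma_0,\Gamma_1,\ldots,\Gamma_4$ are handled by $2\times 2$ matrix identities for $J_0$ and the triangular jumps combined with a standard contour deformation in the sectors. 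Combined with $H = O(\zeta^{-2})$ at infinity and removability at the singular points (using condition (d) and $\alpha_j\geq 0$), Liouville gives $H \equiv 0$, from which $\widehat\Phi^{(0)} \equiv 0$ follows by a Cauchy argument. The main obstacle is verifying that the single symmetry $\zeta\mapsto -\bar\zeta$ simultaneously produces cancellation on all of the arcs of $\Gamma_\Phi$; the generalization from $\mu=2$ to arbitrary $\mu$ then reduces to checking that the additional horizontal arcs contribute only scalar factors which pair up correctly in $H$, so the argument of \cite{CK} goes through essentially unchanged.
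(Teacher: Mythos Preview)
Your uniqueness argument and your reduction of existence to a vanishing lemma via Fredholm theory are fine and match the paper. The gap is in the vanishing lemma itself.

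You claim that $H(\zeta)=\widehat\Phi^{(0)}(\zeta)\,\overline{\widehat\Phi^{(0)}(-\bar\zeta)}^{T}$ extends continuously across $\Gamma_\Phi$ and then apply Liouville. This fails on the vertical part of the contour. With upward orientation, one has $H_{+}=\widehat\Phi^{(0)}_{-}\,J\,(\widehat\Phi^{(0)}_{-})^{*}$ and $H_{-}=\widehat\Phi^{(0)}_{-}\,J^{*}\,(\widehat\Phi^{(0)}_{-})^{*}$, where $J$ is the (conjugated) jump on the relevant segment. Thus $H_{+}=H_{-}$ would require $J=J^{*}$. But already $J_{0}=\begin{pmatrix}0&1\\-1&1\end{pmatrix}$ is not Hermitian (nor unitary), and the same holds for the $J_{j}$ with $j\geq 1$. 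So $H$ is \emph{not} entire, and the Liouville step does not apply. Your remark that the horizontal diagonal jumps ``cancel trivially'' is correct, but that is not where the obstruction lies.

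The paper's argument does not try to make $H$ (there called $W$) jump-free. Instead it works with the transformed problem $\widehat\Phi$ whose only jump contour is the imaginary axis, sets $U=\widehat\Phi_{\mathrm{Hom}}e^{\zeta\sigma_3/2}$ and $W(\zeta)=U(\zeta)U^{*}(-\bar\zeta)$, and uses Cauchy's theorem in a half-plane to obtain $\int_{-i\infty}^{i\infty}W_{+}\,d\zeta=0$. The key computation is then
\[
e^{-\zeta\sigma_3/2}J_{j}e^{\zeta\sigma_3/2}+\bigl(e^{-\zeta\sigma_3/2}J_{j}e^{\zeta\sigma_3/2}\bigr)^{*}
=\begin{pmatrix}0&0\\0&2\exp\bigl[2\pi i\sum_{\nu\le j}\beta_{\nu}\bigr]\end{pmatrix}
\]
for $\zeta\in i\mathbb{R}$, which is positive semidefinite precisely because $\Re\beta_{\nu}=0$. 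Adding the integral identity to its conjugate transpose forces the second column of $U_{-}$ to vanish on $i\mathbb{R}$, and the jump relation then kills the first column of $U_{+}$. This yields vanishing of each column only in a half-plane, and a further step---Carlson's theorem after a change of variable $\zeta\mapsto -(\zeta+u)^{3/2}$---is needed to conclude $U\equiv 0$. Your phrase ``$\widehat\Phi^{(0)}\equiv 0$ follows by a Cauchy argument'' hides this nontrivial step: even if $H\equiv 0$ held, it would not give vanishing of $\widehat\Phi^{(0)}$ directly, since $H=0$ only relates $+$ and $-$ boundary values rather than forcing $|\widehat\Phi^{(0)}|^{2}=0$ pointwise.
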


\subsection{Continuity of $\Phi$ for varying $w_j$'s}\label{SecCon}
The main result of Section \ref{ModelProblem} is the following.

\begin{lemma}\label{Lemmacont}
Let $\alpha_j\geq 0$ and $\Re \beta_j=0$ for $j=1,2,\dots, \mu$. Then the following two statements hold.

(a) Given $u>0$,
\begin{equation}\label{Phiinfunif}
\Phi(\zeta)e^{\frac{\zeta}{2} \sigma_3}\prod_{j=1}^\mu (\zeta-iw_j)^{\beta_j\sigma_3}\exp\left[\pi i(-\beta_j+\alpha_j)\chi_{w_j}(\zeta)\sigma_3\right]
=I+\mathcal O \left( \frac{1}{\zeta}\right),
\end{equation}
  as $\zeta \to \infty$, uniformly for $-u/2\leq w_\mu<\dots<w_1\leq u/2$. 

(b)
As $w_1-w_\mu\to 0$,
\begin{equation} \label{Phi1limsing}
\Phi_1\left(\mu;(w_j,\alpha_j,\beta_j)_{j=1}^\mu\right)=\begin{pmatrix}
\mathcal A^2-\mathcal B^2&-e^{-\pi i (\mathcal A+\mathcal B)}\frac{\Gamma(1+\mathcal A-\mathcal B)}{\Gamma(\mathcal A+\mathcal B)}\\
e^{\pi i (\mathcal A+\mathcal B)}\frac{\Gamma(1+\mathcal A+\mathcal B)}{\Gamma(\mathcal A-\mathcal B)}
&\mathcal B^2-\mathcal A^2
\end{pmatrix}+\mathcal O(w_1-w_\mu),
\end{equation} 
where $\mathcal A=\sum_{j=1}^\mu \alpha_j$ and $\mathcal B=\sum_{j=1}^\mu \beta_j$.
\end{lemma}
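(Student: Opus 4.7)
The plan is to prove part (b) first by a Riemann--Hilbert comparison with the single-singularity model, and then deduce part (a) from (b) by a compactness argument on the parameter space $\{-u/2\leq w_\mu<\cdots<w_1\leq u/2\}$.

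\textbf{Part (b).} After translating $\zeta$ so that $w^\ast:=(w_1+w_\mu)/2=0$, let $\Phi^{(1)}(\zeta):=\Phi(\zeta;\,\mu=1,\,w=0,\,\mathcal{A},\,\mathcal{B})$ denote the single-singularity solution with combined exponents, whose leading coefficient is the explicit matrix in \eqref{1singPhi}. The motivation is that if all $w_j$ coincided, the scalar jumps on the merged horizontal line would telescope via $\prod_{j=1}^\mu e^{\pi i(\alpha_j\pm\beta_j)\sigma_3}=e^{\pi i(\mathcal{A}\pm\mathcal{B})\sigma_3}$ into the single-singularity jump, and the asymptotic factor $\prod_j(\zeta-iw_j)^{-\beta_j\sigma_3}$ would collapse to $\zeta^{-\mathcal{B}\sigma_3}$. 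For small but positive $\delta:=w_1-w_\mu$, I would form the ratio
\begin{equation*}
R(\zeta):=\Phi(\zeta;\mu,(w_j,\alpha_j,\beta_j))\,\Phi^{(1)}(\zeta)^{-1},
\end{equation*}
which is analytic across $\Gamma_0,\Gamma_1,\ldots,\Gamma_4$ since the jumps of numerator and denominator coincide there, and satisfies $R(\zeta)\to I$ at infinity. The remaining jumps of $R$ lie on the $\mu+1$ horizontal lines $\{\Im\zeta=w_j\}$ and $\{\Im\zeta=0\}$, all confined to the strip $|\Im\zeta|\leq\delta/2$. For $|\zeta|\gg\delta$, $\Phi^{(1)}$ is nearly constant across the strip and the telescoping identity makes the total jump of $R$ across the strip equal to $I+\mathcal{O}(\delta/|\zeta|)$. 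Inside a disk $D$ of radius $\delta\ll\rho\ll 1$ about the origin, a local parametrix must be constructed from a rescaled copy of $\Phi$ in the variable $\eta=\zeta/\delta$, in which the singularities $iw_j/\delta$ are $\mathcal{O}(1)$-separated; matching this parametrix to $\Phi^{(1)}$ on $\partial D$ yields an $\mathcal{O}(\delta/\rho)$ error there. A standard small-norm Riemann--Hilbert argument then gives $R(\zeta)=I+\mathcal{O}(\delta)/(1+|\zeta|)$ uniformly, and reading off the $1/\zeta$ coefficient at infinity yields $\Phi_1^{(\mu)}=\Phi_1^{(1)}+\mathcal{O}(\delta)$, which with \eqref{1singPhi} is exactly \eqref{Phi1limsing}.

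\textbf{Part (a).} The closed parameter space $\overline{\mathcal{W}}=\{-u/2\leq w_\mu\leq\cdots\leq w_1\leq u/2\}$ is compact. By Proposition~\ref{PropSolv} and the standard continuity of Cauchy singular integral operators with respect to their data, the map $(w_j)\mapsto\Phi_1$ is continuous on the open stratum where the $w_j$ are distinct. Part (b), applied recursively to each maximal cluster of coalescing $w_j$'s, extends this map continuously to the entire closure $\overline{\mathcal{W}}$, with the boundary value at each stratum given by the Riemann--Hilbert solution with fewer singularities and combined exponents within each cluster. Compactness then gives a uniform bound on $\Phi_1$; applying the same reasoning to each higher-order coefficient in the asymptotic expansion (together with the exponential decay at infinity of the jumps on $\Gamma_1,\ldots,\Gamma_4$ after normalization) yields the full uniform estimate \eqref{Phiinfunif}.

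\textbf{Main obstacle.} The hardest ingredient is the local parametrix construction inside $D$ in part (b): the multi-singularity factor $\prod_j(\zeta-iw_j)^{\alpha_j\sigma_3}$ of $\Phi$ must be matched to the single branch $\zeta^{\mathcal{A}\sigma_3}$ of $\Phi^{(1)}$ on $\partial D$ with quantitative error $\mathcal{O}(\delta)$. Using $\Phi$ itself in the rescaled variable requires uniform large-$\eta$ asymptotic control of the rescaled problem, which is essentially the content of part (a); parts (a) and (b) are therefore best proved simultaneously by induction on $\mu$, with $\mu=1$ (where $\Phi_1$ is given explicitly by \eqref{1singPhi}) as the base case.
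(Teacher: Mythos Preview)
Your overall architecture---compare the $\mu$-singularity model to the single-singularity model via a small-norm RH argument, then pass to compactness---matches the paper. The paper likewise first transforms $\Phi$ to an auxiliary problem $\widehat\Phi$ with jumps only on the imaginary axis (Section~\ref{SecCon}), builds a local parametrix near the cluster(s), obtains $\widehat R(\zeta)=I+\mathcal O(\epsilon/(|\zeta|+1))$, and then proves~(a) by a contradiction/compactness argument and~(b) by specializing to $\tau=1$ and invoking \eqref{1singPhi}.

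The genuine gap is your local parametrix. You propose to build it from a rescaled copy of $\Phi$ in the variable $\eta=\zeta/\delta$, and you correctly flag that matching this to $\Phi^{(1)}$ on $\partial D$ (where $|\eta|=\rho/\delta\to\infty$) requires uniform large-$\eta$ asymptotics of the rescaled problem. But the rescaled problem still has $\mu$ singularities, located at $iw_j/\delta\in[-i/2,i/2]$; the intermediate $w_j/\delta$ can sit anywhere in that interval, so the matching step needs exactly statement~(a) for the same $\mu$. Your proposed ``induction on $\mu$'' does not break this loop: rescaling does not decrease $\mu$, so the inductive hypothesis for $\mu'<\mu$ is never invoked. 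The paper avoids this entirely by constructing the local parametrix \emph{explicitly}: the matrix $\widehat Q_j$ in \eqref{defQj} is upper-triangular with a Cauchy integral $\int_{iy_{\nu+1}}^{iy_\nu}\prod_s|\lambda-iy_s|^{2\tilde\alpha_s}\,d\lambda/(\lambda-\zeta)$ in the $(1,2)$ slot (with a $\log$ variant when $2\mathcal A_{M_j}\in\mathbb N$). This object has the required jumps across each short segment $(iy_{\nu+1},iy_\nu)$ by Plemelj, and is manifestly $I+\mathcal O(\epsilon)$ on $\partial U_{\mathcal W_j}$ because the integration intervals have length $\mathcal O(\epsilon)$. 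No appeal to $\Phi$ itself is needed, and the circularity disappears.

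A secondary issue: in your formulation the ratio $R=\Phi(\Phi^{(1)})^{-1}$ carries jumps on the full horizontal lines $\{\Im\zeta=w_j\}$ and $\{\Im\zeta=0\}$. These jumps are $\Phi^{(1)}e^{\pi i(\alpha_j\pm\beta_j)\sigma_3}(\Phi^{(1)})^{-1}$, and since $\Phi^{(1)}\sim e^{-\zeta\sigma_3/2}$ at infinity they do not tend to $I$; small-norm theory does not apply directly. The paper's preliminary passage to $\widehat\Phi$ (jumps only on $i\mathbb R$) is precisely what removes these infinite horizontal contours.
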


The first step in the proof of Lemma \ref{Lemmacont} is to transform the RH problem for $\Phi$ to a RH problem for $\widehat \Phi$ which is analytic except on the imaginary axis $\Re z=0$, and in particular the jump contour is independent of the locations of the singularities $w_j$ (though the jumps themselves will vary with the location of the singularities).

\subsection{Transformation of RH problem}
Let 
\begin{equation}\nonumber
\Pi(z)=\begin{cases}
I&{\rm for}\, z\in I,IV,\\
\begin{pmatrix}
1&-1\\0&1
\end{pmatrix}&{\rm for}\, z\in II,VI,\\
\begin{pmatrix}
1&0\\-1&1
\end{pmatrix}
&{\rm for}\, z\in III,V,
\end{cases}
\end{equation}
where I-VI are regions in the complex plane given in Figure \ref{Regions}.

\begin{figure}
\begin{tikzpicture}
\draw  (0,-3)--(0,7);
\draw  (-2,-2.5)--(0,0);
\draw (2,-2.5)--(0,0);
\draw  (0,4)-- (-2,6.5);
\draw  (0,4)-- (2,6.5);

\node [left] at (-1.8,6.8) 
{$\Gamma_2
$};
\node [right] at (1.8,6.8) 
{$\Gamma_1
$};
\node [left] at (-1.8,-2.8) 
{$\Gamma_3
$};
\node [right] at (1.9,-2.8) 
{$\Gamma_4
$};

\node [left] at (0,4) {$iu$};
\node [left] at (0,0) {$-iu$};

\node at (2,2){I};
\node at (0.8,6) {II};
\node at (-0.8,6) {III};
\node at (-2,2){IV};
\node at (-0.8,-2) {V};
\node at (0.8,-2) {VI};

\draw [decoration={markings, mark=at position 0.1 with {\arrow[thick]{>}}},
        postaction={decorate}]
        [decoration={markings, mark=at position 0.4 with {\arrow[thick]{>}}},
        postaction={decorate}]
        [decoration={markings, mark=at position 0.6 with {\arrow[thick]{>}}},
        postaction={decorate}]
	[decoration={markings, mark=at position 0.9 with {\arrow[thick]{>}}},
        postaction={decorate}]  (9,-3)--(9,7);



\node [left] at (9,4) {$iw_1$};
\node [left] at (9,2) {$iw_2$};
\node [left] at (9,0) {$iw_3$};

\draw [fill=black] (9,4) circle[radius=0.07]; 
\draw [fill=black] (9,2) circle[radius=0.07]; 
\draw [fill=black] (9,0) circle[radius=0.07]; 

\node [right] at (9,5) {$J_0$};
\node [right] at (9,3) {$J_1$};
\node [right] at (9,1) {$J_2$};
\node [right] at (9,-1) {$J_3$};

\end{tikzpicture}
\caption{On the left we display Regions I-VI. On the right we display the jumps of RH problem for $\widehat \Phi$ when $\mu=3$, which are all on the imaginary axis $\Re z=0$, and we have not displayed $\pm iu$ because these points are not relevant for $\widehat \Phi$.}\label{Regions}
\end{figure}
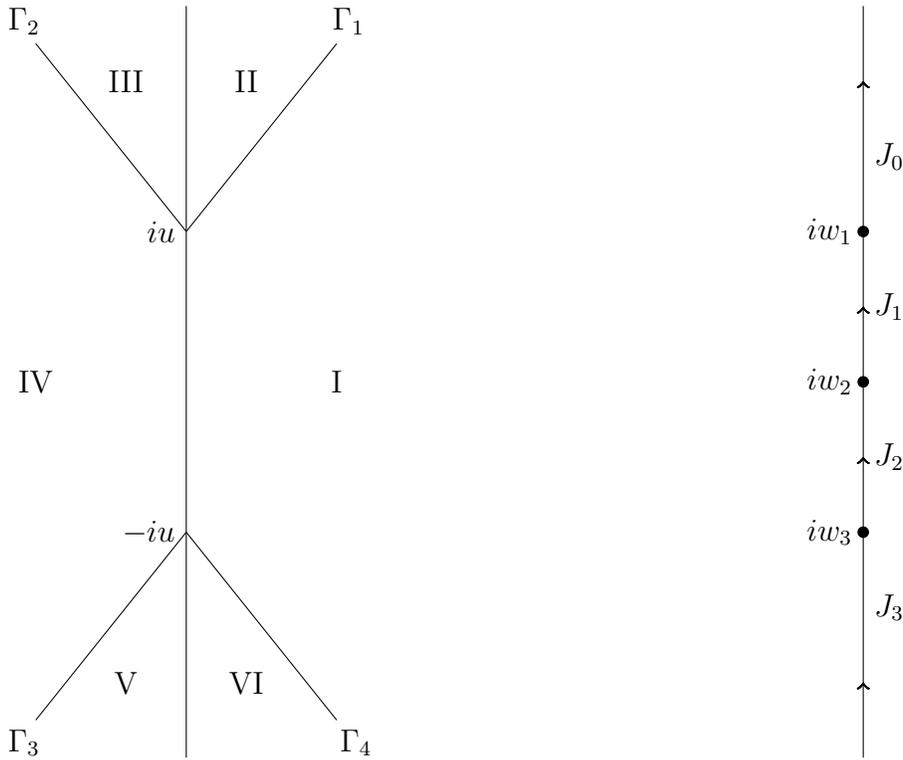
Let $\widehat \Phi$ be defined in terms of $\Phi$ as follows.
\begin{equation}\label{defhatPhi}
\widehat \Phi(\zeta)=\begin{cases}\Phi(\zeta)\Pi(\zeta)\prod_{j=1}^\mu \exp\left[\pi i (\alpha_j-\beta_j)\chi_{w_j}(\zeta)\sigma_3\right]&\textrm{for } \Re \zeta <0,\\
\Phi(\zeta)\Pi(\zeta)\prod_{j=1}^\mu \exp\left[\pi i (\alpha_j+\beta_j)\chi_{w_j}(\zeta)\sigma_3\right]
&\textrm{for } \Re \zeta >0.
\end{cases}
\end{equation}

Then $\widehat \Phi$ solves the following RH problem.

\subsubsection*{RH problem for $\widehat \Phi$}
\begin{itemize}
\item[(a)] $\widehat \Phi$ is analytic on $\mathbb C\setminus (-i\infty,i\infty)$, 
with the orientiation of $(-i\infty,i\infty)$ upwards.
\item[(b)]Let $w_0=+\infty$ and $w_{\mu+1}=-\infty$. On $(iw_{j+1},iw_j)$,
\begin{equation}\nonumber
\widehat \Phi_+=\widehat \Phi_-J_j,
\end{equation}
for $j=0,1,\dots,\mu$,
where $J_0$ was defined in condition (b) of the RH problem for $\Phi$, and 
\begin{equation}\nonumber
J_j=\begin{pmatrix}
0&
\exp \left[-2\pi i \sum_{\nu=1}^j \alpha_\nu\right]\\
-\exp \left[2\pi i \sum_{\nu=1}^j \alpha_\nu\right]&
\exp \left[2\pi i \sum_{\nu=1}^j \beta_\nu\right]
\end{pmatrix}
\end{equation}
for $j=1,2,\dots, \mu$.

\item[(c)] The behaviour of $\widehat \Phi(\zeta)$ as $\zeta\to \infty$  is inherited  from conditions (c) of the RH problem for $\Phi$.
\item[(d)]The behaviour of $\widehat \Phi(\zeta)$ in neighbourhoods of $iw_j$, $j=1,\dots,\mu$,  is inherited  from conditions (d) of the RH problem for $\Phi$.

\end{itemize}

\subsection{Steepest descent analysis of $\widehat \Phi$}

We now prove that $\widehat \Phi$ is continuous with respect to the parameters $w_1,\dots,w_\mu$.
 
 Partition $\{w_1,\dots,w_\mu\}$ into $\tau$ disjoint sets 
\begin{equation} \widetilde C_j(\epsilon)=\{w_\nu^{(j)}\}_{\nu=1}^{M_j},\qquad j=1,2,\dots,\tau,
\end{equation}
such that
\begin{equation}
|w_\nu^{(j)}-\mathcal W_j|<\epsilon, \qquad \textrm{for all }\nu=1,2,\dots,M_j,
\end{equation}
where $\mathcal W_1,\dots,W_\tau$ are distinct fixed points. We will consider the asymptotics of $\widehat \Phi(\zeta)$ as $\epsilon \to 0$. We order the points so that
\begin{equation}
w_1^{(j)}>\dots >w_{M_j}^{(j)}, \qquad   \mathcal W_1>\dots>\mathcal W_\tau.
\end{equation}

Denote 
\begin{equation}\nonumber
\mathcal A^{(j)}=\sum_{\nu=1}^{M_j}\alpha_\nu^{(j)}, \qquad \mathcal B^{(j)}=\sum_{\nu=1}^{M_j}\beta_\nu^{(j)}, 
\end{equation}
for $j=1,2,\dots, \tau$. 

We plan to approximate the RH problem associated with the $w_j$'s by the RH problem associated with the $\mathcal W_j$'s, and so for increased clarity we label them as different functions.

Let the RH problem associated with $w_1,\dots,w_{\mu}$ be denoted by 
\begin{equation}\Psi(\zeta)=\widehat \Phi\left(\zeta;\mu ,(w_i,\alpha_i,\beta_i)_{i=1}^{\mu}\right),\end{equation} and the RH problem associated with $\mathcal W_1,\dots, \mathcal W_\tau$ by \begin{equation}N(\zeta)=\widehat \Phi\left(\zeta;\tau,(\mathcal W_j,\mathcal A^{(j)},\mathcal B^{(j)})_{j=1}^\tau\right).\end{equation}

We note that $N$ has the same jumps as $\Psi$ except on neighbourhoods containing $\mathcal W_1,\dots,\mathcal W_\tau,$ and that $\Psi(\zeta)N(\zeta)^{-1}\to I$ as $\zeta\to \infty$ by condition (c) for the RH problem for $\Phi$ and the definition of $\widehat \Phi$. 

We will additionally need to show that there exists a local parametrix $Q(\zeta)$ on fixed neighbourhoods $U_{\mathcal W_j}$ of $\mathcal W_j$, such that $\Psi(\zeta)Q(\zeta)^{-1}$ is analytic on $U_{\mathcal W_j}$, and 
\begin{equation}
Q(\zeta)N^{-1}(\zeta)=I+\mathcal O(\epsilon),
\end{equation}
as $\epsilon\to 0$, uniformly for $\zeta\in \partial U_{\mathcal W_j}$. Although we only need existence of such a local parametrix, we prove the existence by construction, and we do this in the next subsection, Section \ref{LocalModel}. By standard theory of small norm problems, see e.g. \cite{Deift}, it will follow that $N$ approximates $\Psi$ well outside of the neighbourhoods $\cup_{j=1}^\tau U_{\mathcal W_j}$, which we will subsequently use in Section \ref{ProofCont} to prove Lemma \ref{Lemmacont}.

\subsubsection{Local parametrix}\label{LocalModel}
We construct a local parametrix at the point $\mathcal W_j$ which will contain the points $w_1^{(j)},\dots,w_{M_j}^{(j)}$, for $j=1,\dots, \tau$, and are inspired here by a similar construction in \cite{CK} in the special case of two singularities.

Throughout the section \ref{LocalModel}, $j$ will be fixed, and to reduce the number of superscripts, we denote throughout the section
\begin{equation}
y_\nu=w_\nu^{(j)}, \qquad \tilde \alpha_\nu=\alpha_\nu^{(j)},\qquad \tilde \beta_\nu= \beta_\nu^{(j)}, 
\end{equation}
for $\nu=1,2,\dots,M_j$.
Let $U_{\mathcal W_j}$ be a fixed open disc centered at $i\mathcal W_j$ with a fixed radius $R>0$.

We first take a transformation $\Psi\to \Psi_j$, where $\Psi_j$ is analytic for all 
\begin{equation} \left \{\zeta \in U_{\mathcal W_j}:\arg(\zeta-iy_1)\neq 3\pi/2\right \},\end{equation} and similarly a transformation $N\to N_j$ such that $N_j$ is analytic for all  
\begin{equation} \left \{ \zeta\in U_{\mathcal W_j}: \arg\left(\zeta-i\mathcal W_j\right) \neq 3\pi/2\right \}.\end{equation}
On $U_{\mathcal W_j}$, we define
\begin{equation}\nonumber
\Psi_j(\zeta)=\begin{cases}
\Psi(\zeta)J_{{\rm L}}^{(j)} &\textrm{for } \Re \zeta<0,\\
\Psi(\zeta)J_{{\rm R}}^{(j)} &\textrm{for } \Re \zeta>0,
\end{cases}\qquad 
N_j(\zeta)=\begin{cases}
N(\zeta)J_{{\rm L}}^{(j)} &\textrm{for } \Re \zeta<0,\\
N(\zeta)J_{{\rm R}}^{(j)} &\textrm{for } \Re \zeta>0,
\end{cases}
\end{equation}
where 
\begin{equation}\nonumber
J_{{\rm L}}^{(j)}=\prod_{s=1}^{j-1} \exp\left[-\pi i\left(\mathcal A^{(s)}-\mathcal B^{(s)}\right) \sigma_3 \right], \qquad J_{{\rm R}}^{(j)}=\left(\prod_{s=1}^{j-1} \exp\left[-\pi i\left(\mathcal A^{(s)}+\mathcal B^{(s)}\right) \sigma_3 \right]\right)J_0,
\end{equation}
for $j=2,3,\dots,\tau$, and 
\begin{equation}\nonumber
J_L^{(1)}=I, \qquad J_R^{(1)}=J_0. \end{equation}
Denote
\begin{equation}\nonumber
\mathcal A_\nu=\mathcal A_\nu^{(j)}=\sum_{s=1}^{\nu}\tilde \alpha_s, \qquad \mathcal B_\nu=\mathcal B_\nu^{(j)}=\sum_{s=1}^{\nu}\tilde \beta_s, \qquad \nu=1,\dots,M_j.
\end{equation}
  On $(iy_{\nu+1},i y_\nu)$, $\Psi_j$ has the jumps
\begin{equation}\label{jumpPsi1}
\begin{aligned}
\Psi_{j,+}(\zeta)&=\Psi_{j,-}(\zeta)J_0^{-1}\exp\left[-\pi i \left(\mathcal A_\nu+
\mathcal B_\nu \right)\sigma_3\right] J_0
\exp\left[\pi i \left(\mathcal A_\nu-\mathcal B_\nu\right)\sigma_3\right]\\
&=\Psi_{j,-}(\zeta)\begin{pmatrix}
\exp \left(2\pi i \mathcal A_\nu\right)
&\exp \left(-2\pi i \mathcal A_\nu\right)-\exp \left(2\pi i \mathcal B_\nu\right)\\ 0 &\exp \left(-2\pi i \mathcal A_\nu \right),
\end{pmatrix},
\end{aligned}
\end{equation}
for $\nu=1,\dots,M_{j-1}$ where orientation of the contour is taken upwards,
where $J_0$ was defined in condition (b) of the RH problem for $\Phi$, and on $\arg (\zeta-y_{M_j})=3\pi/2$,
\begin{equation}\label{jumpPsi2}
\Psi_{j,+}(\zeta)=\Psi_{j,-}(\zeta)
\begin{pmatrix}
\exp \left(2\pi i \mathcal A_{M_j}\right)
&\exp \left(-2\pi i \mathcal A_{M_j}\right)-\exp \left(2\pi i \mathcal B_{M_j}\right)\\ 0 &\exp \left(-2\pi i \mathcal A_{M_j}\right)
\end{pmatrix}.
\end{equation}

We search for a local parametrix $Q_j$ such that $Q_j$ has the same jumps as $\Psi_j$ on $U_{\mathcal W_j}$ and such that $Q_jN_j^{-1}=I+\mathcal O(\epsilon)$ as $\epsilon \to 0$, uniformly on the boundary $\partial U_{\mathcal W_j}$ 
for $j=1,2,\dots, \tau$ (it follows that $Q_j$ differs from $Q$ by right multiplication of $J_{{\rm L}}^{(j)}$ and $J_{{\rm R}}^{(j)}$ for $\Re z$ negative and positive respectively). The approach depends on whether or not $2\mathcal A_{M_j}\in \mathbb N$.

\subsubsection*{Local parametrix for $2\mathcal A_{M_j}\notin \mathbb N$}\label{SectionLocalPsi}
Assume that $2\mathcal A_{M_j}\notin \mathbb N$, and define
\begin{equation}\label{defQj}
\begin{aligned}
Q_j(\zeta)&=E_j(\zeta) \widehat Q_j(\zeta)
\prod_{\nu=1}^{M_j}\left(\zeta-iy_\nu\right)^{\tilde \alpha_\nu\sigma_3}\begin{pmatrix}
1&g\left(\mathcal A_{M_j},\mathcal B_{M_j}\right)\\0&1
\end{pmatrix},\\
\widehat Q_j(\zeta)&=\begin{pmatrix}
1&\sum_{\nu=1}^{M_j-1} c_\nu
\int_{iy_{\nu+1}}^{iy_\nu} \prod_{s=1}^{M_j}\left|\lambda-iy_s\right|^{2\tilde \alpha_s} \frac{d\lambda}{\pi i(\lambda-\zeta)}\\
0&1
\end{pmatrix}
\end{aligned}
\end{equation}
where $g$ was defined in \eqref{def:g},  $E_j$ is an analytic function  given below in  \eqref{defEj},  $\arg (\zeta-iy_\nu)\in (-\pi/2,3\pi/2)$,
and
\begin{equation}\nonumber
c_\nu=c_\nu^{(j)}=\begin{cases}i\sin\left(2\pi \mathcal A_\nu\right) e^{\pi i \mathcal A_{M_j}}\left(g\left(\mathcal A_\nu,\mathcal B_\nu\right)-g\left(\mathcal A_{M_j},\mathcal B_{M_j}\right)\right) & \rm{for} \,2\mathcal A_\nu \notin \mathbb N,\\
\exp\left[\pi i \mathcal A_{M_j}\right]\left(\exp\left[-2\pi i \mathcal A_\nu\right]-\exp \left[2\pi i\mathcal B_\nu\right]\right)/2
& \rm{for} \,2\mathcal A_\nu \in \mathbb N.
\end{cases}
\end{equation}
 
 We first consider the jumps of $Q_j$.
 If 
\begin{equation}\nonumber
F(\zeta)=\frac{1}{\pi i}\int_{ia}^{ib}h(\lambda)\frac{d\lambda}{\lambda-\zeta},
\end{equation}
for $a<b$ and $h$ is $L^{2}$ integrable on $[ia,ib]$, then $F$ is analytic on $\mathbb C\setminus [ia,ib]$. It follows that $\widehat Q_j$ is analytic on $U_{\mathcal W_j}\setminus [iy_{M_j},iy_1]$, and it 
 is easily verified by comparison with \eqref{jumpPsi2} that $Q_{j,-}^{-1}Q_{j,+}=\Psi_{j,-}^{-1}\Psi_{j,+}$ for  $\zeta$ with $\arg \left(\zeta-iy_{M_j}\right)=3\pi/2$. 
If in addition $h(\lambda)$ extends to an analytic function on a an open set containing $(ia,ib)$, then
\begin{equation}\nonumber
F_+(\zeta)=F_-(\zeta)+2h(\zeta),
\end{equation}
for all $\zeta \in (ia,ib)$, 
 with upward orientation,  so
\begin{equation}\nonumber
\widehat Q_j(\zeta)_+=\widehat Q_j(\zeta)_-\begin{pmatrix}
1&2c_\nu
 \prod_{s=1}^{M_j}\left|\zeta-iy_s\right|^{2\tilde \alpha_s} \\
0&1
\end{pmatrix}
\end{equation} 
 for $\zeta\in (iy_{\nu+1},iy_{\nu})$, $\nu=1,2,\dots, M_j-1$, and it follows that
 \begin{equation}\nonumber
  Q_j(\zeta)_+= Q_j(\zeta)_-\begin{pmatrix}
1&-g\left(\mathcal A_{M_j},\mathcal B_{M_j}\right)\\0&1
\end{pmatrix}\begin{pmatrix}
e^{2\pi i \mathcal A_\nu}  
  &2c_\nu e^{-\pi i \mathcal A_\nu}\\0
  &  e^{-2\pi i \mathcal A_\nu}  
 \end{pmatrix}
 \begin{pmatrix}
1&g\left(\mathcal A_{M_j},\mathcal B_{M_j}\right)\\0&1
\end{pmatrix}.
 \end{equation}
 By comparison with \eqref{jumpPsi1} and the definition of $c_\nu$, it follows that $Q_{j,-}^{-1}Q_{j,+}=\Psi_{j,-}^{-1}\Psi_{j,+}$ on $(iy_{\nu+1},iy_{\nu})$, $\nu=1,2,\dots, M_j-1$.  Furthermore, since $\widehat Q_j$ is bounded on $\mathcal U_j$ and by the fact that (using the definition of $\Psi$ and condition (d) of the RH problem for $\Phi$)
 \begin{equation} \nonumber
 \Psi(\zeta) \begin{pmatrix}1&-g(\mathcal A_{M_j},\mathcal B_{M_j})
 \\0&1\end{pmatrix}\prod_\nu^{M_j}(\zeta-iy_\nu)^{-2\tilde \alpha_\nu \sigma_3}\end{equation}
 is bounded on $\mathcal U_j$, and the fact that $E_j$ is analytic,  it follows that $\Psi_j(\zeta)Q_j(\zeta)^{-1}$ is analytic on $\mathcal U_j$.
  
  We define $E_j$ by 
\begin{equation}\label{defEj}
E_j(\zeta)=N_j(\zeta)\begin{pmatrix}
1&-g\left(\mathcal A_{M_j},\mathcal B_{M_j}\right)\\0&1
\end{pmatrix}\left(\zeta-i\mathcal W_j\right)^{-\mathcal A_{M_j}\sigma_3}.
\end{equation}
We recall that $\mathcal A_{M_j}=\mathcal A^{(j)}$.
Thus, by the definition of $N_j$ and condition (d) for the RH problem for $\Phi$, it follows that $N_j(\zeta)=\Phi\left(\zeta;\left(\mathcal W_i,\mathcal A^{(i)},\mathcal B^{(i)}\right)_{i=1}^\tau\right)$ for $\arg (\zeta)\in (\pi/2,\pi)$, and one verifies that the singularity of $N_j$ cancels with that of $(\zeta-i\mathcal W_j)^{-\mathcal A_{M_j}\sigma_3}$. It is easily seen that $E_j$ has no jumps on $U_{\mathcal W_j}$, and thus it is analytic. 
For $\zeta \in \partial U_{\mathcal W_j}$, define
\begin{equation}\nonumber
\Delta_j(\zeta)=E_j(\zeta)\widehat Q_j(\zeta)\prod_{\nu=1}^{M_j}\left(\zeta-iy_\nu \right)^{\tilde \alpha_{\nu}\sigma_3}\left(\zeta-i\mathcal W_j\right)^{-\mathcal A_{M_j}\sigma_3}E_j^{-1}(\zeta).\end{equation}
We first note that $E_j$ are analytic functions on $U_{\mathcal W_j}$, and since they are independent of $\epsilon$, the are uniformly bounded on $\partial U_{\mathcal W_j}$.
Since
\begin{equation}\nonumber
\int_{iy_{\nu+1}}^{iy_\nu} \prod_{j=1}^{M_j}\left|\lambda-iy_s\right|^{2\tilde \alpha_s} \frac{d\lambda}{\lambda-\zeta}=\mathcal O(\epsilon), \qquad  \frac{\prod_{\nu=1}^{M_j}\left(\zeta-iy_{\nu}\right)^{\tilde \alpha_\nu}}{\left(\zeta-i\mathcal W_j\right)^{\mathcal A_{M_j}}}=\mathcal O(\epsilon)
\end{equation}
as $\epsilon \to 0$, uniformly for $\zeta\in \partial U_{\mathcal W_j}$, it follows by \eqref{defQj} and the boundedness of $E_j$ on $\partial U_{\mathcal W_j}$, that
\begin{equation}\label{small1}
\Delta_j(\zeta)=I+\mathcal O(\epsilon)
\end{equation}
as $\epsilon \to 0$, uniformly for $\zeta\in \partial U_{\mathcal W_j}$.

\subsubsection*{Local parametrix for $2\mathcal A_{M_j}\in \mathbb N$}
Assume that $2\mathcal A_{M_j}\in \mathbb N$, and define
\begin{equation}\nonumber
\begin{aligned}
Q_j(\zeta)&=E_j(\zeta) \widehat Q_j(\zeta)
\prod_{\nu=1}^{M_j}\left(\zeta-iy_\nu\right)^{\tilde\alpha_\nu\sigma_3}\begin{pmatrix}
1&g\left(\mathcal A_{M_j},\mathcal B_{M_j}\right)\log\left( \zeta-iy_{M_j}\right)\\0&1
\end{pmatrix},\\
\widehat Q_j(\zeta)&=\begin{pmatrix}
1&\sum_{\nu=1}^{M_j-1} 
\int_{iy_{\nu+1}}^{iy_\nu} 
\left( d_\nu+e_\nu\log \left(\lambda-iy_{M_j}\right)\right) \prod_{s=1}^{M_j}\left|\lambda-iy_s\right|^{2\tilde \alpha_s} \frac{d\lambda}{\pi i(\lambda-\zeta)}\\
0&1
\end{pmatrix}
\end{aligned}
\end{equation}
where $g$ was defined in \eqref{def:g},  $E_j$ is an analytic function  given below in  \eqref{defEj2}, the argument $\arg \left(\zeta-iy_\nu\right)\in (-\pi/2,3\pi/2)$, and
\begin{equation}\nonumber
\begin{aligned}
d_\nu=d_\nu^{(j)}&=\frac{e^{\pi i \mathcal A_{M_j} }}{2}\left(
e^{-2\pi i \mathcal A_\nu}
-e^{2\pi i \mathcal B_\nu}
\right),
\\
e_\nu=e_\nu^{(j)}&=
-ie^{\pi i \mathcal A_{M_j}}
g\left(\mathcal A_{M_j},\mathcal B_{M_j}\right)
\sin 2\pi \mathcal A_\nu
.
\end{aligned}
\end{equation}
As in the case for $2\mathcal A_{M_j}\notin \mathbb N$, it is easily seen that $\widehat Q_j$ is analytic on $U_{\mathcal W_j}\setminus\left[iy_{M_j},iy_1\right]$, and it follows that $Q_{j,-}^{-1}Q_{j,+}=\Psi_{j,-}^{-1}\Psi_{j,+}$ on $\arg(\zeta-iy_{M_j})=3\pi/2$.
On  $ (iy_{\nu+1},iy_{\nu})$, $\nu=1,2,\dots, M_j-1$,
\begin{equation}\nonumber
\widehat Q_j(\zeta)_+=\widehat Q_j(\zeta)_-\begin{pmatrix}
1&2\left( d_\nu+e_\nu \log \left(\zeta-iy_{M_j}\right)\right)
 \prod_{s=1}^{M_j}\left|\zeta-iy_s\right|^{2\tilde \alpha_s} \\
0&1
\end{pmatrix},
\end{equation} 
  and it follows that
 \begin{multline}\nonumber 
  Q_j(\zeta)_+= Q_j(\zeta)_-\begin{pmatrix}
1&-g\left(\mathcal A_{M_j},\mathcal B_{M_j}\right)\log \left(\zeta-iy_{M_j}\right)\\0&1
\end{pmatrix}\\ \times 
\begin{pmatrix}
e^{2\pi i \mathcal A_\nu}  
  &2\left( d_\nu+e_\nu\log \left(\zeta-iy_{M_j}\right)\right)e^{-\pi i \mathcal A_{M_j} }\\0
  &  e^{-2\pi i \mathcal A_\nu}  
 \end{pmatrix} 
\begin{pmatrix}
1&g\left(\mathcal A_{M_j},\mathcal B_{M_j}\right)\log \left(\zeta-iy_{M_j}\right) \\0&1
\end{pmatrix}.
 \end{multline}
 By comparison with \eqref{jumpPsi1} and the definition of $d_\nu$, $e_\nu$, it follows that $Q_{j,-}^{-1}Q_{j,+}=\Psi_{j,-}^{-1}\Psi_{j,+}$ on $(iy_{\nu+1},iy_{\nu})$, $\nu=1,2,\dots, M_j-1$.

 We define $E_j$ by 
\begin{equation}\label{defEj2}
E_j(\zeta)=N_j(\zeta)\begin{pmatrix}
1&-g\left(\mathcal A_{M_j},\mathcal B_{M_j}\right)\log(\zeta-i\mathcal W_j)\\0&1
\end{pmatrix}\left(\zeta-i\mathcal W_j\right)^{-\mathcal A_{M_j}},
\end{equation}
and in a similar manner to the case $2\mathcal A_{M_j}\notin \mathbb N$,
it follows that $E_j$ is analytic on $U_{\mathcal W_j}$, and that
\begin{equation}\label{small2}
Q_j(\zeta)N_j(\zeta)^{-1}=I+\mathcal O(\epsilon)
\end{equation}
as $\epsilon \to 0$, uniformly for $\zeta\in \partial U_{\mathcal W_j}$.

\subsubsection{Small norm matrix}\label{SmallnormModel}

Define $Q$ on $\cup_{j=1}^{\tau}U_{\mathcal W_j}$ by
\begin{equation}\nonumber
Q(\zeta)=\begin{cases}
Q_j(\zeta)\left(J_{{\rm L}}^{(j)}\right)^{-1} &\textrm{for } \Re \zeta<0,\\
Q_j(\zeta)\left(J_{{\rm R}}^{(j)}\right)^{-1} &\textrm{for } \Re \zeta>0.
\end{cases}
\end{equation}

Let $ \widehat R$ be given by
\begin{equation}\nonumber
 \widehat R(\zeta)=\begin{cases}
 \Psi(\zeta)N(\zeta)^{-1} &\textrm{for } z\in \mathbb C \setminus  \left(\cup_{j=1}^r  U_{\mathcal W_j}\right), \\
 \Psi(\zeta)Q(\zeta)^{-1} &\textrm{for } z\in  U_{\mathcal W_j},\, j=1,\dots,r.
\end{cases}
\end{equation}
By \eqref{small1} and \eqref{small2}, it follows that
 $\widehat R$ satsisfies the following RH problem.
\subsubsection*{RH problem for $\widehat R$}
\begin{itemize}
\item[(a)]
Then $\widehat R$ is analytic on $\mathbb C\setminus \cup_{j=1}^\tau \partial U_{\mathcal W_j}$.
\item[(b)]On $\cup_{j=1}^\tau \partial U_{\mathcal W_j}$,
 \begin{equation}\nonumber
\widehat R_+(\zeta)=\widehat R_-(\zeta)(I+\mathcal O(\epsilon)),
\end{equation}  
as $\epsilon \to 0$, uniformly on $\cup_{j=1}^\tau U_{\mathcal W_j}$.

\item[(c)]
 $\widehat R(\zeta)=I+\mathcal O(\zeta^{-1})$ as $z\to \infty$.
\end{itemize}
By standard small norm analysis,
\begin{equation}\label{hatRsmall}
\widehat R(\zeta)=I+\mathcal O\left(\frac{\epsilon}{(|\zeta|+1)}\right),
\end{equation}
as $\epsilon \to 0$ uniformly for the parameters 
\begin{equation}\nonumber
|w_\nu^{(j)}-\mathcal W_j|< \epsilon,
\end{equation}
for fixed $\mathcal W_1,\dots,\mathcal W_\tau$, with the implicit constant depending only on $u$, and the parameters $\alpha_j,\beta_j$.

\subsection{Proof of Lemma \ref{Lemmacont}} \label{ProofCont}
We prove (a) by contradiction. Denote the left hand side of \eqref{Phiinfunif} by $F_{\Phi}(\zeta;w_1,\dots,w_\mu)$. Assume that there is a sequence of points $-u/2\leq w_\mu(k)<\dots<w_1(k)\leq u/2$ for $k=1,2,\dots $ and corresponding $\zeta_k$ such that $\zeta_k\to \infty$ as $k\to \infty$, satisfying
\begin{equation}\label{Contra}
|\zeta_k|\left|F_{\Phi}\left(\zeta_k;w_1(k),\dots,w_\mu(k)\right)-I\right|\to \infty,
\end{equation}
as $k\to \infty$. Then there would be a subsequence $k_i$ such that $w_j(k_i)\to w_j$ for $j=1,2,\dots,\mu$, for some points 
\begin{equation}\nonumber
-u/2\leq w_\mu\leq w_{\mu-1}\leq \dots \leq w_1\leq u/2.\end{equation}
We denote $\{w_1,\dots,w_\mu\}=\{\mathcal W_1,\dots,\mathcal W_\tau\}$, where the points $\mathcal W_\tau<\dots< \mathcal W_1$ are distinct.
Let 
\begin{equation}\nonumber
\epsilon_{k_i}=\max_{j=1,\dots, \mu} \min_{s=1,\dots,\tau}\left|w_j(k_i)-\mathcal W_s\right|,\end{equation}
By \eqref{hatRsmall}, it follows that
\begin{equation}\nonumber
\widehat \Phi\left(\zeta_{k_i};(w_j(k_i),\alpha_j(k_i),\beta_j(k_i))_{j=1}^{\mu}\right)=\left(I+\mathcal O\left(\frac{\epsilon_{k_i}}{(|\zeta_{k_i}|+1)}\right)\right)\widehat \Phi\left(\zeta_{k_i};(\mathcal W_j,\mathcal A_j,\mathcal B_j)_{j=1}^{\tau}\right),
\end{equation}
 as $k_i\to \infty$. By condition (c) for the RH problem for $ \Phi\left(\zeta;(\mathcal W_j,\mathcal A_j,\mathcal B_j)_{j=1}^\tau\right)$, and the definition of $\widehat \Phi$, it follows that 
\begin{multline}
F_\Phi\left(\zeta_{k_i};w_1(k_i),\dots,w_\mu(k_i)\right)
=\left(I+\mathcal O\left(\frac{\epsilon_{k_i}}{(|\zeta_{k_i}|+1)}\right)\right)
\Phi\left(\zeta_{k_i};(\mathcal W_j,\mathcal A_j,\mathcal B_j)_{j=1}^{\tau}\right)e^{\frac{\zeta_{k_i}}{2}\sigma_3}\\ \prod_{j=1}^\tau (\zeta_{k_i}-i\mathcal W_j)^{\mathcal B_j\sigma_3} \times
\exp\left[\pi i(-\mathcal B_j+\mathcal A_j)\chi_{\mathcal W_j}(\zeta_{k_i})\sigma_3\right]r(\zeta_{k_i}),\label{eqn101}
\end{multline}
where $r(\zeta)$ is given by
\begin{equation}\nonumber
r(\zeta)=\frac{\prod_{j=1}^\mu (\zeta-iw_j(k_i))^{\beta_j\sigma_3}}{\prod_{j=1}^\tau (\zeta-i\mathcal W_j)^{\mathcal B_j\sigma_3}},\end{equation}
where the branch cuts of $r$ are a subset of $[-iu/2,iu/2]$ and $r(\zeta)\to I$ as $\zeta \to \infty$. By condition (c) of the RH problem for $\Phi\left(\zeta_{k_i};(\mathcal W_j,\mathcal A_j,\mathcal B_j)_{j=1}^{\tau}\right)$, and the fact that $r(\zeta)=\mathcal O(1/\zeta)$ uniformly in $k_i$  as $\zeta \to \infty$, it follows that
\begin{equation}\nonumber
F_\Phi\left(\zeta_{k_i};w_1(k_i),\dots,w_\mu(k_i)\right)=\mathcal O(\zeta_{k_i}^{-1}),
\end{equation}
as $k_i \to \infty$. Thus the left hand side of \eqref{Contra} is bounded as $k_i\to \infty$, which is a contradiction, concluding the proof of Lemma \ref{Lemmacont} (a).

 To prove (b), we note that if $|w_1|,\dots,|w_m|<\epsilon$ and $\epsilon \to 0$, then similarly to \eqref{eqn101} we have
\begin{multline}\nonumber
\Phi(\zeta)e^{\frac{\zeta }{2}\sigma_3}\prod_{j=1}^\mu (\zeta-iw_j)^{\beta_j\sigma_3}\exp\left[\pi i(-\beta_j+\alpha_j)\chi_{w_j}(\zeta)\sigma_3\right]
\\=\left(I+\mathcal O\left(\frac{\epsilon}{(|\zeta|+1)}\right)\right)
\Phi\left(\zeta;0,\mathcal A,\mathcal B\right)e^{\frac{\zeta}{2}\sigma_3} \zeta^{\mathcal B\sigma_3}
\exp\left[\pi i(-\mathcal B+\mathcal A)\chi_{0}(\zeta)\sigma_3\right]r_0(\zeta),
\end{multline}
as $\zeta \to \infty$ and $\epsilon \to 0$, where
 $r_0(\zeta)$ is given by
\begin{equation}\nonumber
r_0(\zeta)=\frac{\prod_{j=1}^\mu (\zeta-iw_j(k_i))^{\beta_j\sigma_3}}{\zeta^{\mathcal B_1\sigma_3}},\end{equation}
where the branch cuts of $r_0$ are a subset of $[-iu/2,iu/2]$ and $r_0(\zeta)\to I$ as $\zeta \to \infty$. 
Then part (b) of the lemma follows by \eqref{1singPhi} and noting that $r_0(\zeta)=I+\mathcal O \left(\frac{\epsilon}{|\zeta|}\right)$ as $\zeta \to \infty$ and $\epsilon \to 0$.

\section{Asymptotics of the orthogonal polynomials}
\label{SectionAsym}

Define $Y=Y(z)$ in terms of the orthogonal polynomials:
\begin{equation}\label{Soln Y}
Y(z)=\begin{pmatrix}
\chi_n^{-1}\psi_n(z)&\chi_n^{-1} \int_{\mathcal C} \frac{\psi_n(\lambda)}{\lambda-z}\frac{f(\lambda)d\lambda}{2\pi i \lambda^n} \\
-\chi_{n-1}z^{n-1} \overline{\psi}_{n-1}(z^{-1})& -\chi _{n-1} \int_{\mathcal C} \frac{\overline{\psi_{n-1}(\lambda)}}{\lambda-z} \frac{f(\lambda)d\lambda}{2\pi i\lambda }
\end{pmatrix},
\end{equation}
with the integration taken in counter-clockwise direction on the unit circle $\mathcal C$, and where $\overline \psi_{n-1}(z)=\overline{\psi_{n-1}(\overline z)}$. 
The function $Y$ uniquely solves the following Riemann--Hilbert  Problem
\begin{itemize}
\item[(a)] $Y:\mathbb C \setminus \mathcal C \to \mathbb C^{2\times 2}$ is analytic;
\item[(b)]$Y_+(z)=Y_-(z)\begin{pmatrix} 1&f(z)z^{-n}\\0&1 \end{pmatrix}$ for $|z|=1$, \, $\arg z\neq t_1,t_2,\dots,t_m$;
\item[(c)] $Y(z)=(I+\mathcal{O}(1/z))\begin{pmatrix}z^n&0\\0&z^{-n}\end{pmatrix}$ as $z\to \infty$.
\end{itemize}
That $Y$ defined in \eqref{Soln Y} solves the RH problem for $Y$ is easily verified, and is a result due to Baik, Deift, Johansson \cite{BDJ}, who were inspired by a 
similar observation by  Fokas, Its, Kitaev \cite{FIK} concerning orthogonal polynomials on the real line.
It is immediate that 
\begin{equation}\label{chinY}
\chi_{n-1}^2=-Y_{21}(0).
\end{equation}
We rely on the Deift-Zhou \cite{DeiftZhou} steepest descent analysis for RH problems to obtain the asymptotics of $Y(0)$ as $n\to \infty$. See  e.g. \cite{Deift} for an introduction to analysis of RH problems.

The  Szeg\H{o} function $\mathcal D(z)=\exp\frac{1}{2\pi i }\int_C \frac{\log f(s)}{s-z}ds$ plays an important role. Define

\begin{equation}
\label{defDinDout}
\begin{aligned}
\mathcal D_{{\rm in}}(z)&=e^{\sum_{j=0}^\infty V_jz^j}\prod_{j=1}^m\left(\frac{z-e^{it_j}}{e^{it_j}e^{\pi i}}\right)^{\alpha_j+\beta_j}&&
\textrm{for $z\in \mathbb C \setminus \{z:\arg z=\arg t_j,|z|\geq 1\}$,}
\\
\mathcal D_{{\rm out}}(z)&=e^{-\sum_{j=-\infty}^{-1} V_jz^j}\prod_{j=1}^m \left(\frac{z-e^{it_j}}{z}\right)^{-\alpha_j+\beta_j}&&
\textrm{for $z\in \mathbb C \setminus\left(\{0\}\cup \{z:\arg z=\arg t_j,|z|\leq 1\}\right)$,}
\end{aligned}
\end{equation} 
analytic on $\mathbb C\setminus \{z:\arg z=\arg t_j\}$. In \cite{DIK}, it was noted that for $|z|<1$, we have $\mathcal D(z)=\mathcal D_{{\rm in}}(z)$ and for $z>1$ we have $\mathcal D(z)=\mathcal D_{{\rm out}}(z)$.

Furthermore, 
\begin{equation}\label{defComplexf}
f(z)=\mathcal D_{{\rm in}}(z)
\mathcal D_{{\rm out}}(z)^{-1}\end{equation}for $z\in \mathcal C\setminus \left(\cup_{j=1}^me^{it_j}\right) $, and we extend the definition of $f$ by letting $f$ be defined by \eqref{defComplexf} on $\mathbb C\setminus \left(\{0\}\cup  \{z:\arg z=\arg t_j\}\right)$. It follows that on $\{z:\arg z=\arg t_j\}$,
\begin{equation}\label{jumpsf}
f_+(z)=f_-(z)\begin{cases}
e^{2\pi i(\alpha_j-\beta_j)}& \textrm{for }0<|z|<1\\
e^{-2\pi i (\alpha_j+\beta_j)}&\textrm{for }|z|>1,
\end{cases}
\end{equation}
with the orientation taken away from $0$ and toward $\infty$.

\subsection{Transformation of the RH problem for $Y$, and opening of the lens}
Define 
\begin{equation}\label{defT}
T(z)=\begin{cases}
Y(z)&{\rm for}\,\, |z|<1,\\
Y(z)z^{-n\sigma_3}& {\rm for}\,\, |z|>1.
\end{cases}
\end{equation}

Let $\widehat U>0$ be such that the asymptotics of Lemma \ref{Lemmacont} (a) hold for $\zeta>\widehat U/3$, for any $\mu=1,2,\dots,m$.
Recall the notation from Section \ref{SecMethod}. 
Assume that $t_1,\dots,t_m$ satisfies condition $(u,\widehat U,n)$. In this section we denote $\textbf{Cl}_j=
\textbf{Cl}_j(u,\widehat U,n)$. 
Denote the number of points in each set $\textbf{Cl}_j$ by $\mu_j$ for $j=1,2,\dots,r$, and let 
\begin{equation}\label{deftj}
\widehat t_j= \frac{1}{\mu_j}\sum_{x \in \textbf{Cl}_j} x.\end{equation}
We denote the elements of $\textbf{Cl}_j=\{t^{(j)}_1,\dots,t^{(j)}_{\mu_j}\}$ for $j=1,2,\dots,r$, and denote the degree of the singularity of the Toeplitz determinant at $t^{(j)}_i$ by $\alpha_i^{(j)},\beta_i^{(j)}$, and order the parameters so that $t^{(j)}_1>\dots>t^{(j)}_{\mu_j}$. In this way we have a natural partition
\begin{equation}\label{clusternot}
\{(t_j,\alpha_j,\beta_j)\}_{j=1}^m=\cup_{j=1}^r\left\{\left(t_i^{(j)},\alpha_i^{(j)},\beta_i^{(j)}\right)\right\}_{i=1}^{\mu_j}.
\end{equation}

It is clear that the parameters which we label $\alpha_i^{(j)}, \beta_i^{(j)}$ in this section are not in general in direct correspondence with the parameters of the same notation in Section \ref{ModelProblem}.

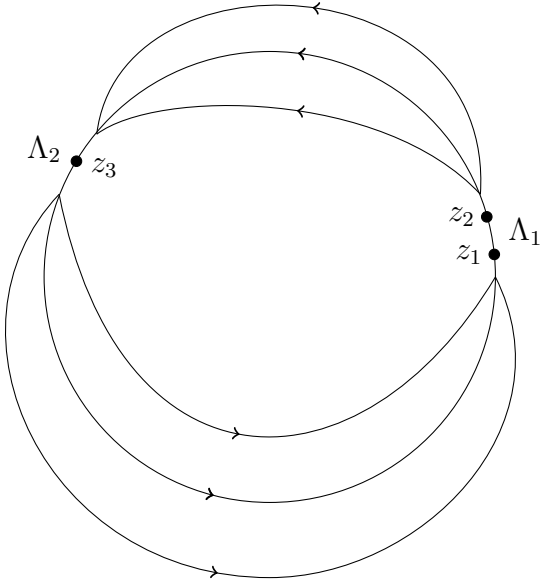
\begin{figure}
\begin{tikzpicture}

\draw  
        [decoration={markings, mark=at position 0.23 with {\arrow[thick]{>}}},
        postaction={decorate}]
        [decoration={markings, mark=at position 0.71 with {\arrow[thick]{>}}},
        postaction={decorate}]
        (3,3) circle (3cm);
	
\node at (6.4,3.6) 
{$\Lambda_1$};
\node at (0,4.7) {$\Lambda_2$};

\draw [decoration={markings, mark=at position 0.5 with {\arrow[thick]{>}}},
        postaction={decorate}]
        (0.2,4.1) .. controls (-1.5,2.3) and (0,-1) .. (3,-1) .. controls (5,-1) and (7,1)  .. (6,3);
\draw [decoration={markings, mark=at position 0.5 with {\arrow[thick]{>}}},
        postaction={decorate}]
        (0.2,4.1) .. controls (1,0) and (4.2,0)   .. (6,3);

\draw [decoration={markings, mark=at position 0.5 with {\arrow[thick]{>}}},
        postaction={decorate}]
        (5.8,4.1) .. controls (6,7.2) and (1,7.4)  .. (0.7,4.9);
\draw [decoration={markings, mark=at position 0.5 with {\arrow[thick]{>}}},
        postaction={decorate}]
        (5.8,4.1) .. controls (4.5,5.5) and (1.5,5.5)   .. (0.7,4.9);

\draw [fill=black] (5.985,3.3) circle[radius=0.07]; 
\draw [fill=black] (5.885,3.8) circle[radius=0.07]; 
\draw [fill=black] (0.43,4.54) circle[radius=0.07]; 

\node at (5.65,3.3) {$z_1$};
\node at (5.55,3.8) {$z_2$};
\node at (0.8,4.45) {$z_3$};

\end{tikzpicture}
\caption{
Opening of lenses and jump contour $\Sigma_S$ in the case of 3 singularities $z_1,z_2,z_3$ partitioned into two clusters.}\label{Lenses}
\end{figure}

We let
\begin{equation}\nonumber
\Lambda_j=\left\{z:|z|=1, \,\, -u\leq n\left(\arg (z)-\widehat t_j\right) \leq u\right\},
\end{equation}
so that $\{e^{it}:t\in \textbf{Cl}_j\}\subset \Lambda_j$ and so that $e^{it_1^{(j)}}, e^{it_{\mu_j}^{(j)}}$ are not the endpoints of the arc $\Lambda_j$,  and define $\Lambda=\cup_j \Lambda_j$. We open a lens around each arc comprising $ \mathcal C \setminus \Lambda$, where $\mathcal C$ is the unit circle, as in Figure \ref{Lenses}.

Let 
\begin{equation}\label{defS}
S(z)=\begin{cases}
T(z)&\textrm{outside the lens,}\\ 
T(z)\begin{pmatrix}
1&0\\z^{-n}f(z)^{-1}&1\end{pmatrix}& \textrm{inside the lenses and outside the unit disc,}\\
T(z)\begin{pmatrix}
1&0\\-z^{n}f(z)^{-1}&1\end{pmatrix}& \textrm{inside the lenses and inside the unit disc.}
\end{cases}
\end{equation}

By noting that
\begin{equation}\nonumber
T_+(z)=T_-(z)\begin{pmatrix}
z^n&f(z)\\0&z^{-n}
\end{pmatrix},
\end{equation} and by using the factorisation
\begin{equation}\nonumber
\begin{pmatrix}
z^n&f(z)\\0&z^{-n}
\end{pmatrix}
=\begin{pmatrix}
1&0\\z^{-n}f(z)^{-1}&1
\end{pmatrix}
\begin{pmatrix}
0&f(z)\\-f(z)^{-1}&0
\end{pmatrix}
\begin{pmatrix}
1&0\\z^nf(z)^{-1}&1
\end{pmatrix},
\end{equation}
it is easily verified that $S$ uniquely solves the following RH problem.
\subsubsection{RH problem for $S$}
\begin{itemize}
\item[(a)] $S$ is analytic on $\mathbb C\setminus \Sigma_S$, where $\Sigma_S$ is the union of the unit circle and the contours of the lenses.
\item[(b)] $S$ has the following jumps on $\Sigma_S$:
\begin{equation}\nonumber
\begin{aligned}
S_+(z)&=S_-\begin{pmatrix}
1&0		\\ z^{-n}f(z)^{-1}&1
\end{pmatrix}&& \textrm{on the contours of the lenses, $|z|>1$,}\\
S_+(z)&=S_-\begin{pmatrix}
1&0		\\ z^{n}f(z)^{-1}&1
\end{pmatrix}&& \textrm{on the contours of the lenses, $|z|<1$,}\\
S_+(z)&=S_-(z)\begin{pmatrix}
0&f(z)\\-f(z)^{-1}&0
\end{pmatrix}&&\textrm{for } z\in \mathcal C\setminus \Lambda,\\
S_+(z)&=S_-(z)
\begin{pmatrix}
z^n&f(z)\\0&z^{-n}
\end{pmatrix}&& \textrm{for }z\in \Lambda.
\end{aligned}
\end{equation}
\item[(c)] As $z\to \infty$,
\begin{equation}\nonumber
S(z)=I+\mathcal O(1/z).
\end{equation}
\item[(d)] As $z\to e^{it_j}$, $j=1,\dots,m$, in the region outside the lens,
\begin{equation}\nonumber
S(z)=\mathcal O(|\log |z-e^{it_j}||).
\end{equation}
\end{itemize}

\subsection{Main parametrix}
We define $M$ by
\begin{equation}
M(z)=\label{def:M}
\begin{cases}
\begin{pmatrix}
0&1\\-1&0
\end{pmatrix}
\mathcal D_{{\rm in}}(z)^{-\sigma_3}&{\rm for}\, |z|<1\\
\mathcal D_{{\rm out}}(z)^{\sigma_3}&{\rm for}\, |z|>1.
\end{cases}
\end{equation}
Then $M$ is analytic for $z\in \mathbb C\setminus \mathcal C$, by \eqref{defComplexf}
\begin{equation}\nonumber
M_+(z)=M_-(z)\begin{pmatrix}
0&f(z)\\-f(z)^{-1}&0
\end{pmatrix},
\end{equation}
and $M(z)=I+\mathcal O(z^{-1})$ as $z\to \infty$. Thus $M S^{-1}$ solves an RH problem with jumps that converge pointwise to $I$ as $n\to \infty$ except on the shrinking contour $\Lambda$, and $(M S^{-1})(z)\to I$ as $z\to \infty$, so we take $M$ to be our main parametrix.

\subsection{Local parametrix}
\label{SectionLocal}
We define open sets $\mathcal U_1,\dots,\mathcal U_r$ containing each cluster $\textbf{Cl}_1,\dots,\textbf{Cl}_r$ respectively by
\begin{equation}\nonumber
\mathcal U_j=\left\{z:\left|\log z-i\widehat t_j \right|<\widehat u_n/3n\right\},
\end{equation}
where we recall $\widehat t_j$ from \eqref{deftj} and $\widehat u_n$ from \eqref{hatun}. Let 
\begin{equation}\label{defzetaj}
\zeta_j(z)=n\left(\log z-i\widehat t_j\right),
\end{equation}
for $z\in \mathcal U_j$. 

Recall the notation \eqref{clusternot}. We define 
\begin{equation}
w^{(j)}_\nu=-i\zeta\left(e^{it_\nu^{(j)}}\right)=n\left(t_\nu^{(j)}-\widehat t_j\right) \end{equation}
  for $\nu=1,\dots,\mu_j$.
Then $\zeta_j$ is a conformal map on $\mathcal U_j$ mapping $\Lambda_j$ to $[-iu,iu]$. For $z\in \partial \mathcal U_j$,
\begin{equation}\nonumber
|\zeta_j(z)|=\widehat u_n/3.
\end{equation}
Recall the model RH problem $\Phi$ from Section \ref{ModelProblem}.
On $\mathcal U_j$, we define 
\begin{equation}\nonumber
P_j(z)=E_j(z)\Phi\left(\zeta_j(z);w_1^{(j)},\dots,w_{\mu_j}^{(j)}\right)
z^{\frac{n}{2}\sigma_3}W(z),\qquad W(z)=\begin{cases}
-\sigma_3 f(z)^{-\frac{1}{2}\sigma_3}&{\rm for}\, |z|<1,\\
 \begin{pmatrix}0&1\\1&0\end{pmatrix}f(z)^{-\frac{1}{2}\sigma_3}&{\rm for}\, |z|>1,
\end{cases}
\end{equation}
where 
\begin{equation}\nonumber
E_j(z)=M(z)W(z)^{-1}\prod_{\nu=1}^{\mu_j} \left(\zeta_j(z)-iw^{(j)}_\nu\right)^{\beta_\nu^{(j)}\sigma_3}\exp\left[\pi i\left(\alpha_\nu^{(j)}-\beta_\nu^{(j)}\right)\chi_{w_\nu^{(j)}}\left(\zeta_j(z)\right)\sigma_3\right]e^{-\frac{in\widehat t_j}{2}\sigma_3},
\end{equation}
recalling that $\chi_w(\zeta)$ was defined in condition (c) of the RH problem for $\Phi$,
and the branches of $\zeta_j(z)-iw_\nu^{(j)}$ are such that $\arg\left(\zeta_j(z)-iw_\nu^{(j)}\right)\in (0,2\pi)$.
By the jumps of $f$ in \eqref{jumpsf}, the definition of $M$ in \eqref{def:M}, and the definition of $W$, it follows that $E_j$ has no jumps on $\mathcal U_j$. By the definitions of $\mathcal D_{{\rm in}}$ and $\mathcal D_{{\rm out}}$ in \eqref{defDinDout}, of $f$ in \eqref{defComplexf}, and of $M$ in \eqref{def:M}, it is easily seen that $M(z)W(z)^{-1}$ is bounded on $\overline{\mathcal U_j}$. Thus $E_j$ is analytic on $\mathcal U_j$, and uniformly bounded on $\partial \mathcal U_j$ as $n\to \infty$.

By the jumps of $f$ in \eqref{jumpsf} and condition (b) for the RH problem for $\Phi$, it follows that $P_j$ and $S$ have the same jumps on $\mathcal U_j$. By condition (c) for the RH problem for $\Phi$, and the boundedness of $E_j(z)$ on $\partial \mathcal U_j$,
\begin{equation}\label{PM-1}
P_j(z)M^{-1}(z)=I+\mathcal O\left(\widehat u_n^{-1}\right),
\end{equation}
as $n\to \infty$, uniformly for $z\in \mathcal U_j$, and by \eqref{Phiinfunif} the error term is also uniform for $t_1,\dots,t_m$ satisfying condition $(u,\widehat U,n)$.

\subsection{Small norm matrix}
Let $ R$ be given by
\begin{equation}\label{def:R}
 R(z)=\begin{cases}
 S(z)M(z)^{-1} &\textrm{for } z\in \mathbb C \setminus  \left(\cup_{j=1}^r  \mathcal U_j\right), \\
 S(z)P_j(z)^{-1} &\textrm{for } z\in U_j,\, j=1,\dots,r.
\end{cases}
\end{equation}
 $R$ satsisfies the following RH problem.
\subsubsection*{RH problem for $R$}
\begin{itemize}
\item[(a)]
 $R$ is analytic on $\mathbb C\setminus\Sigma_R$, where $\Sigma_R$ is the union of the edges of the lenses and $ \cup_{j=1}^r\partial \mathcal U_j$.
\item[(b)]On $\Sigma_R$,
 \begin{equation}\nonumber
 R_+(z)=R_-(z)(I+\Delta(z)),
\end{equation}  
with orientiation taken clockwise, and by \eqref{PM-1} and condition (b) for the RH problem for $S$ we have 
\begin{equation}\label{OrderDelta}
\Delta(z)=\begin{cases}
\mathcal O(\widehat u_n^{-1}) &
\textrm{ for $z\in \partial \mathcal U_j$ and $j=1,\dots,r$,}\\
\mathcal O(|z|^n)&\textrm{ for $z$ on the edges of the lenses and $|z|<1$,}
\\
\mathcal O(|z|^{-n})&\textrm{ for $z$ on the edges of the lenses and $|z|>1$,}\end{cases}
\end{equation}
as $n\to \infty$, uniformly for $t_1,\dots,t_m$ satisfying condition $(u,\widehat U, n)$.
\item[(c)]
 $R(z)=I+\mathcal O(z^{-1})$ as $z\to \infty$.
\end{itemize}

\begin{lemma}Let $\alpha_j\geq0$ and $\Re \beta_j=0$ for $j=1,\dots,m$. 
$R(z)=I+\mathcal O(\widehat u_n^{-1})$, as $n\to \infty$, uniformly for $z\in  \mathbb C$ and $t_1,\dots, t_m$ satisfying condition $(u,\widetilde U, n)$, for some $\widetilde U>0$.
\end{lemma}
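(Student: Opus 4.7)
The plan is to apply the standard small-norm conclusion of the Deift--Zhou scheme to the RH problem for $R$. Combining the three bounds in \eqref{OrderDelta}: on $\cup_{j=1}^r \partial \mathcal{U}_j$ we already have $\Delta = \mathcal{O}(\widehat u_n^{-1})$; on the portions of the lens edges outside $\cup_{j=1}^r \mathcal{U}_j$, these contours lie at distance at least of order $\widehat u_n/n$ from the unit circle (their endpoints touch $\partial \mathcal{U}_j$, where the distance from $|z|=1$ is comparable to the disk radius $\widehat u_n/(3n)$, and further away from the disks the contours are bounded away from $|z|=1$ by a fixed positive amount), so the factors $|z|^{\pm n}$ give $\Delta = \mathcal{O}(e^{-c\widehat u_n})$ there, which is much smaller than $\widehat u_n^{-1}$ once $\widetilde U$ is sufficiently large. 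Hence $\|\Delta\|_{L^\infty(\Sigma_R)} = \mathcal{O}(\widehat u_n^{-1})$, and by the shrinking length of $\partial \mathcal{U}_j$ the same bound (or better) holds in $L^2(\Sigma_R)$.

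Next I would invoke the Beals--Coifman representation $R_- = I + (I - T_\Delta)^{-1}\mathcal{C}_-[\Delta]$, where $T_\Delta[f] = \mathcal{C}_-[f\Delta]$ acts on $L^2(\Sigma_R)$ and $\mathcal{C}_\pm$ denote the boundary values of the Cauchy transform along $\Sigma_R$. Since $\|T_\Delta\| \leq \|\mathcal{C}_-\|\cdot\|\Delta\|_{L^\infty}$, choosing $\widetilde U$ large enough that $\widehat u_n^{-1}$ falls below a fixed threshold makes $I - T_\Delta$ invertible on $L^2(\Sigma_R)$ with uniformly bounded inverse, and yields $\|R_\pm - I\|_{L^2 \cap L^\infty(\Sigma_R)} = \mathcal{O}(\widehat u_n^{-1})$. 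Plugging this into the Cauchy integral representation $R(z) - I = \frac{1}{2\pi i}\int_{\Sigma_R}\frac{R_-(s)\Delta(s)}{s-z}\,ds$ produces the stated bound $R(z) - I = \mathcal{O}(\widehat u_n^{-1})$, uniformly for $z$ bounded away from $\Sigma_R$, and the boundary bound on $R_\pm$ extends it to all $z\in\mathbb{C}$.

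The main technical point is the uniform-in-$n$ boundedness of $\|\mathcal{C}_-\|$ on $L^2(\Sigma_R)$, since the components $\partial \mathcal{U}_j$ of $\Sigma_R$ shrink like $\widehat u_n/n$ and vary with the configuration $(t_1,\dots,t_m)$. This is handled by noting that the singular Cauchy operator on a single circle has operator norm invariant under dilation, while the cross-terms between distinct components of $\Sigma_R$ are controlled because the disks $\mathcal{U}_j$ are mutually separated by distances at least comparable to their radii: any two centers are at angular distance at least $\widehat u_n/n$, while each radius equals $\widehat u_n/(3n)$, and the lens portions of $\Sigma_R$ outside the disks stay a fixed positive distance from $\partial \mathcal{U}_j$. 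Checking this uniformity in the separated/shrinking-contour setup is essentially the only nontrivial point; the rest is a direct invocation of standard small-norm theory, see e.g.\ \cite{Deift}.
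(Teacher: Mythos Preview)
Your approach follows the standard $L^2$ small-norm framework (Beals--Coifman), which is genuinely different from the paper's argument. The paper explicitly remarks that ``for RH--problems with shrinking contours the theory is less developed'' and instead works entirely in $L^\infty$: starting from the representation \eqref{Standard1}, it analytically continues $R_-$ and $\Delta$ into strips of width $\sim c\widehat u_n/n$ around $\Sigma_R$, then deforms the integration contour so that the evaluation point $z_{\max}$ (where $|R|$, and in a second step $|R-I|$, attains its supremum) lies at distance $\geq c\widehat u_n/n$ from the deformed contour. This converts the Cauchy integral into a direct $L^1$ estimate on $|\Delta(s)/(s-z_{\max})|$, using only that $\partial\mathcal U_j$ has length $\sim \widehat u_n/n$ with $\Delta=\mathcal O(\widehat u_n^{-1})$ there, and that the lens-edge contribution is $\mathcal O(e^{-\widehat u_n})$. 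No $L^2$ Cauchy-operator bound is ever invoked.

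Your $L^2$ route can be made to work, but the part you flag as ``the only nontrivial point'' is not actually resolved in your sketch, and contains a geometric error: the lens portions of $\Sigma_R$ do \emph{not} stay a fixed positive distance from $\partial\mathcal U_j$; they terminate on $\partial\mathcal U_j$, so these components meet. Dilation invariance handles a single circle, but the full contour $\Sigma_R$ consists of $r$ shrinking circles at $n$-dependent locations joined by lens arcs whose geometry near each junction also scales like $\widehat u_n/n$, and a uniform-in-$n$ bound on $\|\mathcal C_-\|_{L^2(\Sigma_R)}$ requires an honest argument (for instance, rescaling each junction to a fixed model contour). The paper's contour-deformation trick is designed precisely to bypass this: what it buys is a short, self-contained $L^\infty$ argument with no operator theory on varying contours, at the cost of the extra analytic-continuation step.
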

\begin{proof}
Small--norm analysis of RH--problems with fixed contours is standard material, see e.g. \cite{Deift}, but for RH--problems with shrinking contours the theory is less developed. In the following, we follow \cite{CFLW}, where a slightly more detailed description may be found for a similar problem.

It is easily verified that
\begin{equation}\label{Standard1}
R(z)=I+\frac{1}{2\pi i}\int_{\Sigma_R}\frac{R_-(s)\Delta(s)}{s-z}ds.
\end{equation}
Consider
\begin{equation}\nonumber
R_{\max}=\sup_{z\in \mathbb C, j,k\in \{1,2\}}|R_{j,k}(z)|,\end{equation}
and assume this maximum is acheived at $z_{\max}\in \mathbb C\cup \{\infty\}$ (or that $R_+$ or $R_-$ acheives this supremum at $z_{\rm max}$). We piecewise analytically continue $R_-$ and $\Delta$ to strips of width of order $2c\widehat u_n/n$ containing $\Sigma_R$, for some fixed but sufficiently small $c>0$. On these strips the bounds on $\Delta$ from \eqref{OrderDelta} still hold. Furthermore, on these strips $R$ is either equal to $R_-$ or  $R_-(I+\Delta)$, either way it follows by \eqref{OrderDelta} that
\begin{equation}\label{R-max}
\max_{j,k\in \{1,2\}}|R_-(z)|_{j,k}\leq 2R_{\max},\end{equation}
for $n$ sufficiently large, for all $z$ in the strips.  By deforming the contour of integration $\Sigma_R$, but keeping it in the strips, we may assume that $z_{\max}$ is of distance greater than $c\widehat u_n/n$ from $\Sigma_R$. Crucially, \eqref{R-max} still holds on this deformed contour, and combined with \eqref{Standard1}, it follows that
\begin{equation}\nonumber
R_{\max}\leq 1+R_{\max}\max_{j,k=\{1,2\}}\left|\int_{\Sigma_R}\left|\frac{\Delta(s)}{s-z_{\max}}\right|ds\right|_{j,k}, 
\end{equation}
where we now assume that $z_{\max}$ is of distance greater than $c\widehat u_n/n$ from $\Sigma_R$.
Thus 
\begin{equation}\nonumber
R_{\max}\leq  \frac{1}{1-\max_{j,k=\{1,2\}}\left|\int_{\Sigma_R}\left|\frac{\Delta(s)}{s-z_{\max}}\right|ds\right|_{j,k}}.
\end{equation}
By the fact that $z_{\max}$ is of at least distance $c\widehat u_n/n$ from $\Sigma_R$ for some $c>0$, by \eqref{OrderDelta}, and by the fact that $\partial \mathcal U_j$ is of length of order $\widehat u_n/n$, it follows that
\begin{equation}\label{intR1a}
\left|\sum_{j=1}^r \int_{\partial U_j}\left|\frac{\Delta(s)}{s-z_{\textrm{max}}}\right| ds\right| =\mathcal O\left( \frac{1}{\widehat u_n}\right),\end{equation}
as $n\to \infty$, uniformly for $t_1,\dots,t_m$ satisfying condition $(u,\widetilde U, n)$. Let $\Sigma_{\textrm{Edge}}^{\textrm{out}}$ denote the edges of the lenses in the exterior of the unit disc in the complex plane. Then
\begin{equation}\label{intR2a}
\left|\int_{\Sigma_{\textrm{Edge}}^{\textrm{out}}}\left|\frac{\Delta(s)}{s-z_{\textrm{max}}}\right| ds\right| 
=\mathcal O\left(\frac{n}{\widehat u_n}\left|\int_{\Sigma_{\textrm{Edge}}^{\textrm{out}}}\left|s^{-n}\right| ds\right| \right)
=\mathcal O\left(e^{-\widehat u_n}\right),\end{equation}
for $\widehat u_n>\widetilde U$ and some $\widetilde U>0$.
It follows that
\begin{equation}\label{Rmax}
R_{\max}<2.
\end{equation}
Now consider
\begin{equation}\nonumber
(R-I)_{\max}=\sum_{z\in \mathbb C, j,k\in \{1,2\}}|R(z)-I|_{j,k},\end{equation}
and assume this supremum is acheived at $z_{\max,2}\in \mathbb C\cup \{\infty\}$. By deforming the contour of integration, we may assume that $z_{\max,2}$ is of distance greater than $c\widehat u_n/n$ from $\Sigma_R$, for some constant $c>0$. Thus, by \eqref{Rmax}, \eqref{R-max}, \eqref{Standard1}, it follows that
\begin{equation}\nonumber
(R-I)_{\max}\leq \frac{2}{\pi}\max_{j,k=\{1,2\}}\left|\int_{\Sigma_R}\frac{\Delta(s)}{s-z_{\max,2}}ds\right|_{j,k}.
\end{equation}
The lemma follows upon integration, by similar arguments to \eqref{intR1a} and \eqref{intR2a}.
\end{proof}

 \begin{lemma}
 \label{lemmaR}
 Let $\alpha_j\geq 0$ and $\Re \beta_j=0$. Then the following two statements hold.
 
 \begin{itemize}
\item[(a)] As $n \to\infty$,
\begin{equation}\nonumber
R(0)=I+\mathcal O(1/n),
\end{equation}
uniformly for $t_1,\dots, t_m$ satisfying condition $(u,\widetilde  U,n)$. 
\item[(b)] As $n\to \infty$
\begin{equation}\nonumber
R(0)=I+\sum_{j=1}^r\int_{\partial U_j}\frac{\Delta_1(s)}{s}\frac{ds}{2\pi i }+\mathcal O\left(\frac{1}{n\widehat u_n}\right),
\end{equation}
 uniformly for $t_1,\dots,t_m$ satisfying condition $(u,\widetilde U,n)$, where
\begin{equation}\nonumber
\Delta_{1,22}(z)=\frac{\Phi_{1,11}\left(\mu_j;\left(w_\nu^{(j)},\alpha_{\nu}^{(j)},\beta_\nu^{(j)}\right)_{\nu=1}^{\mu_j}\right)}{\zeta_j(z)},
\end{equation}
for $z\in \partial U_j$.
\end{itemize}
\end{lemma}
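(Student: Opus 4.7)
The plan is to express $R$ as a Cauchy integral from its jump and evaluate at $z=0$. Sokhotski--Plemelj applied to the RH problem for $R$ (with $R(\infty)=I$) gives
$$R(z) = I + \frac{1}{2\pi i}\int_{\Sigma_R} \frac{R_-(s)\Delta(s)}{s-z}\,ds,$$
and I would set $z=0$ and split $\Sigma_R$ into $\cup_j \partial \mathcal U_j$ and the lens edges $\Sigma_{\mathrm{lens}}$.

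On $\Sigma_{\mathrm{lens}}$ the origin lies at distance of order $1$, and the exponential bounds in \eqref{OrderDelta} combined with parametrising by arc length from the nearest $\partial \mathcal U_j$ yield a contribution of size $\mathcal O(e^{-c\widehat u_n}/n)$, which is absorbed into $\mathcal O(1/(n\widehat u_n))$. On each $\partial \mathcal U_j$ we have $|s|\approx 1$, length $\mathcal O(\widehat u_n/n)$, and $\Delta = \mathcal O(\widehat u_n^{-1})$ by \eqref{OrderDelta}; combined with $R_- = I + \mathcal O(\widehat u_n^{-1})$ from the preceding lemma, this already yields (a).

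For (b) I would refine the expansion of $\Delta(s) = P_j(s)M(s)^{-1} - I$ on $\partial \mathcal U_j$. Inserting condition (c) of the RH problem for $\Phi$ into the definition of $P_j$, and using that the prefactor $E_j$ was chosen so that the exponential and power asymptotics of $\Phi$ cancel against $z^{n\sigma_3/2}W$ to reproduce $M$ (a direct computation rewriting $M = E_j e^{in\widehat t_j\sigma_3/2}\mathcal E_j^{-1}W$ where $\mathcal E_j$ collects the diagonal exponential/power factors), one obtains
$$P_j(s)M(s)^{-1} = E_j(s)\bigl(I + \Phi_1/\zeta_j(s) + \mathcal O(\zeta_j(s)^{-2})\bigr)E_j(s)^{-1},$$
so $\Delta_1(s) = E_j(s)\Phi_1 E_j(s)^{-1}/\zeta_j(s)$. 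Uniformity of the $\mathcal O(\zeta^{-2})$ remainder in $w_\nu^{(j)}$ follows from a compactness/contradiction argument parallel to the proof of Lemma \ref{Lemmacont}(a) applied to the next term in the expansion at infinity. Boundedness of $E_j$ and $E_j^{-1}$ on $\partial \mathcal U_j$ then gives $\Delta - \Delta_1 = \mathcal O(\widehat u_n^{-2})$, producing the claimed error of $\mathcal O(1/(n\widehat u_n))$, and $R_-(s)$ may be replaced by $I$ within the same error.

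Finally I identify $\Delta_{1,22}$, which is the step requiring attention. Inspecting the definitions, the factors $\mathcal D_{\mathrm{in}}^{-\sigma_3}$, $f^{\sigma_3/2}$, $\sigma_3$, $\mathcal E_j$ and $e^{-in\widehat t_j\sigma_3/2}$ are diagonal, while the remaining factors in $MW^{-1}$ reduce to the constant antidiagonal matrices $\begin{pmatrix}0&1\\-1&0\end{pmatrix}$ (inside $\mathcal C$) and $\begin{pmatrix}0&1\\1&0\end{pmatrix}$ (outside $\mathcal C$); hence $E_j$ is antidiagonal on both sides of $\mathcal C$. A short calculation shows that conjugation of any $2\times 2$ matrix by an antidiagonal matrix swaps the $(1,1)$ and $(2,2)$ entries, so $(E_j\Phi_1 E_j^{-1})_{22} = \Phi_{1,11}$, yielding the stated formula for $\Delta_{1,22}$. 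Apart from this algebraic identification, the proof is standard small-norm RH analysis.
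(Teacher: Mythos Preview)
Your proof is correct and follows the paper's argument essentially step for step: the same Cauchy integral representation \eqref{Standard1}, the same splitting of $\Sigma_R$ with the exponential bound on the lens edges and the length$\times$size estimate on $\cup_j\partial\mathcal U_j$, the same expansion $\Delta=\Delta_1+\mathcal O(\widehat u_n^{-2})$ with $\Delta_1(z)=E_j(z)\Phi_1E_j(z)^{-1}/\zeta_j(z)$, and the same observation that $E_j$ is antidiagonal (the paper writes this as $E_j=\begin{pmatrix}0&1\\1&0\end{pmatrix}g_{E,j}^{\sigma_3}$) to conclude $(\Delta_1)_{22}=\Phi_{1,11}/\zeta_j$. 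Your remark about needing uniformity of the $\mathcal O(\zeta^{-2})$ remainder via a compactness argument parallel to Lemma~\ref{Lemmacont}(a) is a point the paper passes over silently, so you are if anything slightly more careful there.
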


\begin{proof}
We evaluate \eqref{Standard1} as $n\to \infty$. The integration contour $\Sigma_R$ partitions naturally into two parts, $\cup_{j=1}^r \partial \mathcal U_j$ and the edges of the lenses $\Sigma_{{\rm edge}}$.  Denote the edges of the lenses on the inside of the unit disc by $\Sigma_{{\rm Edge }}^{{\rm in}}$. Then
\begin{equation}\label{Standard2}
\left|\int_{\Sigma_{{\rm Edge }}^{{\rm in}}}\frac{R_-(s)\Delta(s)}{s}ds\right|_{jk}\leq C_1\left|\int_{\Sigma_{{\rm Edge }}^{{\rm in}}}\frac{s^n}{s}ds\right|
\leq C_2 \frac{\left(1-C_3\frac{\widehat u_n}{n}\right)^n}{n}\leq \frac{C_4}{ne^{C_3\widehat u_n}},
\end{equation}
for some constants $C_i>0$, $i=1,\dots,4$, and sufficiently large $\widehat u_n$, for $j,k\in \{1,2\}$. A similar statement can be made for the edges of the lense outside the unit disc. 
We note that the length of the contour $\cup_{j=1}^r \partial \mathcal U_j$ is of order $\widehat u_n/n$ as $n\to \infty$, and so since $\Delta(z)=\mathcal O(1/\widehat u_n)$ for $z\in \cup_{j=1}^r \partial \mathcal U_j$, it follows that
\begin{equation}\label{Standard3}
\int_{\cup_{j=1}^r \partial U_j}\frac{R_-(s)\Delta(s)}{s-z}ds =\mathcal O(1/n),\end{equation}
as $n\to \infty$. Part (a) of the lemma follows from \eqref{Standard2}-\eqref{Standard3}.
By the definition of $R$ in \eqref{def:R} and condition (c) of the RH problem for $\Phi$, we have that $\Delta(z)=\Delta_1(z)+\mathcal O(\widehat u_n^{-2})$ as $n\to \infty$, where
\begin{equation}\nonumber
\Delta_1(z)=\frac{E_j(z)\Phi_1E_j^{-1}(z)}{\zeta_j(z)},
\end{equation} 
for $z\in \partial U_j$. By \eqref{Standard2} it follows that as $n\to \infty$,
\begin{equation}\label{R0}
R(0)=I+\sum_{j=1}^r \int_{\partial U_j} \frac{\Delta_1(s)}{s}\frac{ds}{2\pi i }+\mathcal O\left(n^{-2}\right)+\mathcal O\left(\frac{1}{n\widehat u_n}\right),
\end{equation}
where the orientation of the integral is clockwise. Since 
\begin{equation}\nonumber
E_j(z)=\begin{pmatrix}
0&1\\1&0
\end{pmatrix}g_{E,j}^{\sigma_3}(z),
\end{equation}
for some analytic function $g_{E,j}$, it follows that
\begin{equation}\nonumber
\Delta_{1,22}(z)=\Phi_{1,11}/\zeta_j(z),
\end{equation}
and thus we have proven part (b) of the lemma.

\end{proof}

\subsubsection{Proof of Proposition \ref{PropPoly}}
By \eqref{chinY}, and the definition of $T,S,R$ in \eqref{defT}, \eqref{defS}, \eqref{def:R}, it follows that
\begin{equation}\label{chiRM}
\chi_{n-1}^2=-(R(0)M(0))_{21}.
\end{equation}
By \eqref{def:M},
\begin{equation}\label{Mat0}
M(0)=\begin{pmatrix} 0&e^{V_0}\\-e^{-V_0}&0\end{pmatrix},\end{equation}
 and by Lemma \ref{lemmaR}, $R(0)=I+\mathcal O(n^{-1})$ as $n\to \infty$, and it follows that
\begin{equation}\nonumber
\left|\log \chi_n+V_0/2\right|=\mathcal O\left(n^{-1}\right),
\end{equation}
as $n\to \infty$, uniformly for all $t_1,\dots, t_m$ satisfying condition $(u,\widetilde U, n)$,
which proves Proposition \ref{PropPoly2}.

By  \eqref{Mat0}, \eqref{chiRM} and Lemma \ref{lemmaR} (b),
\begin{equation}\label{chi11}
\chi_{n-1}^2=e^{-V_0}\left(1+\sum_{j=1}^r\Phi_{1,11}\left(\mu_j;\left(w_\nu^{(j)},\alpha_{\nu}^{(j)},\beta_\nu^{(j)}\right)_{\nu=1}^{\mu_j}\right)\int_{\partial U_j}\frac{ds}{2\pi i s \zeta_j(s)}\right)+\mathcal O\left(\frac{1}{n\widehat u_n}\right),
\end{equation}
as $n\to \infty$, uniformly for $t_1,\dots,t_m$ satisfying condition $(u,\widetilde U,n)$, with the integral taken with clockwise orientation.
By  \eqref{Phi1limsing},
\begin{equation}
\Phi_{1,11}\left(\mu_j;\left(w_\nu^{(j)},\alpha_{\nu}^{(j)},\beta_\nu^{(j)}\right)_{\nu=1}^{\mu_j}\right)=\left(\sum_{\nu=1}^{\mu_j} \alpha_\nu^{(j)}\right)^2-\left(\sum_{\nu=1}^{\mu_j} \beta_\nu^{(j)}\right)^2 +\mathcal O\left(w_{\mu_j}^{(j)}-w_1^{(j)}\right),
\end{equation}
as $w_{\mu_j}^{(j)}-w_1^{(j)}\to 0$. By the definition of $\zeta_j$ in \eqref{defzetaj} and the fact 
 that the orientation of the integral is clockwise, it follows that
 \begin{equation}\label{chi13}
 \int_{\partial U_j} \frac{ds}{2\pi i s \zeta_j(s)}=-1/n.
 \end{equation}

Combining \eqref{chi11}-\eqref{chi13} proves Proposition \ref{PropPoly} (b).

\section*{Acknowledgements}
I am grateful to Igor Krasovsky for useful discussions and suggestions.
The author was supported by the G\"{o}ran
Gustafsson Foundation (UU/KTH) and by the Leverhulme Trust research programme grant
RPG-2018-260.

\section*{Appendix}
\subsubsection*{Proof of Proposition \ref{PropSolv}}
Define
\begin{equation}\nonumber
V(\zeta)=\Phi(\zeta)e^{\frac{\zeta}{2}\sigma_3}\prod_{j=1}^\mu (\zeta-iw_j)^{\beta_j\sigma_3}\exp\left[-\pi i (\beta_j-\alpha_j)\chi_{w_j}(\zeta)\sigma_3\right].
\end{equation}
Then $V$ is analytic on $\mathbb C\setminus \Gamma_V$, where $\Gamma_V= \cup_{j=0}^4\Gamma_j$. On $\Gamma_V\setminus \left(\pm iu \cup_{j=1}^\mu w_j\right)$, the jump matrix of $V$ factorizes into
\begin{equation}\nonumber
J_V=J_{V,-}^{-1}J_{V,+},
\end{equation}
where $J_{V,-}$ is upper triangular and piecewise analytic, and $J_{V,+}$ is lower triangular and piecewise analytic, and 
\begin{equation}\nonumber
|J_{V,+}(\zeta)-I|,\, |J_{V,-}(\zeta)-I|=\mathcal O(e^{-|\zeta|/2})\end{equation}
 as $\zeta \to \infty$. Furthermore, as $\zeta\to \infty$,
\begin{equation}\nonumber
V(\zeta)=I+\mathcal O(\zeta^{-1}).\end{equation}

 For RH problems of the form $V$, it is well known (see \cite{Van1} and \cite{Van2,Van3,Van4}) that $V$ has a unique solution if and only if the homogenous RH problem $V_{{\rm Hom}}$ has a unique solution, namely $0$, where $V_{{\rm Hom}}$ is analytic on $\mathbb C\setminus \Gamma_V$, has jumps $J_V$ on $\Gamma_V$, and satisfies
\begin{equation}\nonumber
V_{{\rm Hom}}(\zeta)=\mathcal O(\zeta^{-1}),
\end{equation}
as $\zeta \to \infty$. 

We will find it easier to work with $\widehat \Phi_{{\rm Hom}}$ which we define below. It is easily verified that if $\widehat \Phi_{{\rm Hom}}$ has the zero solution as its unique solution, then the same holds for $V_{{\rm Hom}}$.

We consider the homogenous RH problem for $\widehat \Phi$. Namely, we search for a function 
 $\widehat \Phi_{{\rm Hom}}$ satisfying conditions (a), (b), and (d) in the RH problem for $\widehat \Phi$, and as $\zeta \to \infty$,
   \begin{equation}\nonumber
  \widehat \Phi_{{\rm Hom}}(\zeta)e^{\frac{\zeta}{2}\sigma_3}=\mathcal O\left(\zeta^{-1}\right).
   \end{equation}
   We will prove that $\widehat \Phi_{{\rm Hom}}(\zeta)=0$ is the only function satisfying these conditions, and by the discussion above, it follows that the RH problems for $\widehat  \Phi$ and $\Phi$ have  unique solutions.
   Let $U(\zeta)=\widehat \Phi_{{\rm Hom}}(\zeta)e^{\frac{\zeta}{2}\sigma_3}$ and let
   \begin{equation}\nonumber
   W(\zeta)=U(\zeta)U^*(-\overline \zeta),
   \end{equation}
   where $^*$ denotes the conjugate transpose. Then $W$ is analytic on $\mathbb C \setminus (-i\infty,i\infty)$, and we take the orientation of $(-i\infty,i\infty)$ upwards.
We note that if $x\in \mathbb R$, then as $\zeta\to ix$ from the "+" side, it follows that $-\overline \zeta \to ix$ from the "-" side.
 Thus, by conditions (b) and (d) of the RH problem for $\Phi$ and the definitions of $\widehat \Phi$, $U$, $W$, as $\zeta \to iw_j$ from the $+$ side,
   \begin{equation}\nonumber 
   W(\zeta)=F_j(\zeta)(\zeta-iw_j)^{\alpha_j\sigma_3}\begin{pmatrix}
   \mathcal O\left(|\log(\zeta-iw_j)|^2\right)&    \mathcal O\left(|\log(\zeta-iw_j)|^2\right)\\
   \mathcal O\left(|\log(\zeta-iw_j)|^2\right)&0
   \end{pmatrix}
 \left( \overline{ -\overline{\zeta}-iw_j}\right)^{\alpha_j\sigma_3}F_j^*(-\overline \zeta).
   \end{equation}
   Thus $W_+(\zeta)$ is integrable for $\zeta\in(-i\infty,i\infty)$, and since $W(\zeta)=\mathcal O\left(|\zeta|^{-2}\right)$ as $\zeta\to \infty$, it follows from Cauchy's theorem that
   \begin{equation}\nonumber
    \int_{-i\infty}^{i\infty}W_+(\zeta)d\zeta=0.
\end{equation}    
For $\zeta\in(-i\infty,i\infty)$, we have $W_+(\zeta)=U_+(\zeta)U_-^*(\zeta)$, and so by the jump conditions for $\widehat \Phi$,
\begin{equation}\label{eqnU-}
\sum_{j=0}^\mu \int_{iw_{j+1}}^{iw_j}U_-(\zeta)e^{-\frac{\zeta}{2}\sigma_3}J_je^{\frac{\zeta}{2}\sigma_3}U_-^*(\zeta)d\zeta=   \int_{-i\infty}^{i\infty}W_+(\zeta)d\zeta=0,
\end{equation}
where again we take the convention $w_0=+\infty$ and $w_{\mu+1}=-\infty$. For purely imaginary $\zeta$, and with $\alpha_j\geq 0$ and $\Re \beta_j=0$, we have
\begin{equation}\nonumber
e^{-\frac{\zeta}{2}\sigma_3}J_je^{\frac{\zeta}{2}\sigma_3}+(e^{-\frac{\zeta}{2}\sigma_3}J_je^{\frac{\zeta}{2}\sigma_3})^*=\begin{pmatrix}
0&0\\0&2\exp\left[2\pi i\sum_{\nu=1}^j\beta_\nu\right]
\end{pmatrix},
\end{equation}
for $j=1,\dots,\mu$, and for $j=0$ the $22$ entry of the right hand side is taken to be $1$. Thus, summing \eqref{eqnU-} with its conjugate transpose, it follows that there is a strictly positive function $g$ such that
\begin{equation}\nonumber
\int_{-i\infty}^{i\infty}U_-(\zeta)\begin{pmatrix}
0&0\\0&g(\zeta)
\end{pmatrix}U_-^*(\zeta)d\zeta=0,
\end{equation}
and thus $U_{12,-}(\zeta),U_{22,-}(\zeta)=0$ for $\zeta \in (-i\infty,i\infty)$. From the jump conditions of $\widehat \Phi$, it follows that $U_{11,+}(\zeta),U_{21,+}(\zeta)=0$ for $\zeta \in (-i\infty,i\infty)$. From the identity theorem it follows that the first column of $U(\zeta)=0$ for $\Re \zeta<0$ and the second column of $U(\zeta)=0$ for $\Re \zeta>0$.

For $j=1,2$, let 
\begin{equation}\nonumber
g_j(\zeta)=\begin{cases}
U_{j2}(\zeta)&{\rm for}\, \Re\zeta<0\\
U_{j1}(\zeta)e^{-\zeta}&{\rm for}\, \Re\zeta>0,\, \Im \zeta>w_1\\
U_{j1}(\zeta)e^{-\zeta}\exp\left[-2\pi i \sum_{\nu=1}^\mu \alpha_\nu\right]&{\rm for}\, \Re \zeta>0,\, \Im \zeta<w_\mu.
\end{cases}
\end{equation}
By the definitions of $g_j$ and $U$, and condition (b) of the RH problem for $\widehat \Phi$, it follows that $g_j$ is analytic on $\mathbb C\setminus\{z:\Re z\geq 0,\, \Im z\in [w_\mu,w_1]\}$. Furthermore, if
\begin{equation}\nonumber
h_j(\zeta)=g_j(-(\zeta+u)^{3/2}),
\end{equation}
then $h_j$ is analytic and bounded for $\Re \zeta>0$, and $h_j(\zeta)=\mathcal O\left(e^{-|\zeta|/2}\right)$ as $\zeta \to \pm i\infty$. Thus it follows by Carlson's theorem (see e.g. \cite{Tit}), that $h_j(\zeta)=0$ for $\Re \zeta>0$, and by analytic continuation it follows that $g_j(\zeta)=0$ for $\zeta$ in the domain of $g_j$. It follows that $\widehat \Phi_{\textrm{Hom}}=0$.

\end{document}